\documentclass[a4paper,11pt]{article}

\usepackage[margin=1in]{geometry}

\usepackage{amssymb,amsmath,amsfonts,amsthm}
\usepackage{graphicx}%
\usepackage{multirow}%
\usepackage{xcolor}%
\usepackage{listings}%
\usepackage{hyperref}

\usepackage{anyfontsize}
\usepackage[T1]{fontenc}
\usepackage[normalem]{ulem}
\usepackage{thm-restate}
\usepackage{float}
\usepackage{adjustbox}
\usepackage{url}
%
\usepackage[linesnumbered]{algorithm2e}
\SetKwComment{Comment}{/* }{ */}
\RestyleAlgo{ruled}

\newtheorem{theorem}{Theorem}
\newtheorem{lemma}[theorem]{Lemma}
\newtheorem{corollary}[theorem]{Corollary}
\newtheorem{definition}[theorem]{Definition}
\newtheorem{proposition}[theorem]{Proposition}
\newtheorem{remark}[theorem]{Remark}
\newtheorem{example}[theorem]{Example}

\newcommand{\keywords}[1]{\vspace{5mm}\noindent\textbf{Keywords:}~\textit{#1}}
\newcommand{\SuA}{\ensuremath{\mathrm{sA}}}
\newcommand{\BWT}{\ensuremath{\mathrm{BWT}}}
\newcommand{\SA}{\ensuremath{\mathrm{SA}}}
\newcommand{\PA}{\ensuremath{\mathrm{PA}}}
\newcommand{\LCP}{\ensuremath{\mathrm{LCP}}}
\newcommand{\PSV}{\ensuremath{\mathrm{PSV}}}
\newcommand{\NSV}{\ensuremath{\mathrm{NSV}}}

\newcommand{\LF}{\ensuremath{\mathrm{LF}}}
\newcommand{\SSS}{\mathcal S}
\newcommand{\rev}[1]{#1^{\text{rev}}}

\newcommand{\opsearch}{\ensuremath{\mathsf{search}}}

\title{Suffixient Arrays: a New Efficient Suffix Array\\Compression Technique\footnote{This article is a journal extension of an article that appeared in the proceedings of SPIRE 2024 \cite{conferenceVersion} and of the arXiv pre-print \cite{depuydt2024suffixient}.}}

\author{Davide Cenzato$^1$, 
Lore Depuydt$^2$,
Travis Gagie$^3$,
Sung-Hwan Kim$^1$,\\
Giovanni Manzini$^4$, 
Francisco Olivares$^{5,6}$,
Nicola Prezza$^1$\\~\\
$^1$ DAIS, Ca' Foscari University of Venice, Venice, Italy\\\texttt{\{davide.cenzato,sunghwan.kim,nicola.prezza\}@unive.it},\\\\
$^2$ Department of Information Technology - IDLab, Ghent University - imec, Ghent, Belgium\\
\texttt{lore.depuydt@ugent.be}\\\\
$^3$ Faculty of Computer Science, Dalhousie University, 6050 University Ave.,\\Halifax, B3H 1W5, Nova Scotia, Canada\\
\texttt{travis.gagie@dal.ca}\\\\
$^4$ Computer Science Dept., University of Pisa, Pisa, Italy\\
\texttt{giovanni.manzini@unipi.it}\\\\
$^5$ Department of Computer Science, University of Chile, Chile\\
$^6$ CeBiB --- Centre for Biotechnology and Bioengineering, Chile\\
\texttt{folivares@uchile.cl}
}

\date{}

\begin{document}

\maketitle


\abstract{
The Suffix Array is a classic text index enabling on-line pattern matching queries via simple binary search.
The main drawback of the Suffix Array is that it takes linear space in the text's length, even if the text itself is extremely compressible. 
Several works in the literature showed that the Suffix Array can be compressed, but they all rely on complex succinct data structures which in practice tend to exhibit poor cache locality and thus significantly slow down queries.
In this paper, we propose a new simple and very efficient solution to this problem by presenting the \emph{Suffixient Array}: a tiny subset of the Suffix Array 
\emph{sufficient} to locate on-line one pattern occurrence (in general, all its Maximal Exact Matches) via binary search, provided that random access to the text is available.
We prove that: (i) the Suffixient Array length $\chi$ is a strong repetitiveness measure, (ii) unlike most existing repetition-aware indexes such as the $r$-index, our new index is efficient in the I/O model, and (iii) Suffixient Arrays
can be computed in linear time and compressed working space. 
We show experimentally that, 
when using well-established compressed random access data structures on repetitive collections, the Suffixient Array $\SuA$ is 
\emph{simultaneously} (i) faster and orders of magnitude smaller than the Suffix Array $\SA$ and (ii) smaller and \emph{one to two orders of magnitude faster} than the $r$-index.  
With an average pattern matching query time as low as 3.5 ns per character, our new index gets very close to the ultimate lower bound: the RAM throughput of our workstation (1.18 ns per character).
}

\keywords{Suffixient set, Maximal Exact Match, Right-maximal substring, Text indexing, Compressed-space algorithms}

\maketitle

\section{Introduction and Overview of the Paper}\label{sec:introduction}

In this paper, we propose a new simple and remarkably efficient solution to the well-studied problem of indexing compressed text for fast pattern matching queries \cite{Nav22a,Nav22b}. 
This problem is motivated by the constantly-increasing amount of repetitive data produced in several applicative scenarios. Bioinformatics is an excellent example of such a phenomenon: with the quick advancement of DNA sequencing technologies in the last decade, an enormous amount of genomic data can now be produced efficiently in terms of both time and cost (for instance, since the introduction of Next-Generation Sequencing, the cost of sequencing a Human genome dropped from 1 million to just 600 dollars in roughly a decade \cite{DNASequencingCost}). 
These technological improvements opened the \emph{pangenome era}, in which typical analyses often involve the comparison of billions of sequence fragments with  large collections of reference genomes \cite{MediniDTMR05,Wang22}.
This \emph{data} explosion, however, did not correspond to an \emph{information} explosion. Since genomes of the same species are nearly-identical (any two Human genomes, for instance, share 99.9\% of their content \cite{AltshulerPC00}), a collection of same-species genomes is usually extremely compressible. 
The goal of a compressed index is to pre-process such a collection into a small (compressed) data structure in order to speed up subsequent pattern matching queries. Importantly, such queries must be supported directly on the compressed data without decompressing it.

\paragraph{Compressed indexed pattern matching: state of the art}

Suffix arrays \cite{manber1990suffix,Baeza-Yates:1989:NAT:75334.75352,gonnet1992new} are a classical (uncompressed) solution to the indexed pattern matching problem. They use linear space in the text's length and are extremely simple: the Suffix Array $\SA$ is the lexicographically-sorted list of (the starting positions of) the text's suffixes. Pattern matching on $\SA$ can be performed via simple binary search combined with text extraction, needed to compare the sought pattern with the text's suffixes during binary search. 

As mentioned above, however,  linear space is not acceptable in modern big-data scenarios. Subsequent research therefore focused on techniques for compressing the Suffix Array and for supporting pattern matching on conceptually-different compressed text representations, such as Lempel-Ziv'77~(LZ77)~\cite{ziv1977universal} and grammar compression~\cite{LarssonM00}. 
First research papers on compressed suffix arrays focused on entropy compression and exact pattern matching \cite{FM00,GV00}. 
Entropy, however, is a weak compressibility measure in the repetitive regime as it is based on symbol frequencies. As a consequence, the empirical entropy of a text $T$ and a collection composed of, say, thousands of copies of $T$ is the same (i.e. entropy is not sensitive to repetitions). 
Since then, a variety of indexes have been developed to capture and exploit the repetitiveness of the input data in different ways; see \cite{Nav22a,Nav22b} for an extensive survey on compressed indexes and repetitiveness measures. 
Notable works in this research direction include the run-length FM-index of M\"akinen~and~Navarro~\cite{MN05}, which used space proportional to the number of runs $r$ in the Burrows-Wheeler transform (BWT) of the text (where $r$ is known to effectively capture repetitiveness in the text, see \cite{Nav22a}) to count the number of occurrences of a pattern. Another key contribution is the $r$-index of Gagie~et~al.~\cite{GNP20}, which extended these functionalities to locate queries (that is, reporting the occurrences of the pattern in the indexed text) in the same asymptotic space. 
These indexes can be interpreted as techniques to compress the Suffix Array to space proportional to $O(r)$ words. Compared with the Suffix Array, however, they suffer from a big issue: being based on 
backward searching the Burrows-Wheeler transform and on
complex compressed data structures for \emph{rank}, \emph{select}, and \emph{predecessor} queries, they can no longer be considered as simple data structures and are notoriously not cache efficient. As a matter of fact, searching a pattern of length $m$ in those indexes has $\Omega(m)$ I/O complexity (i.e. cache misses). In practice (as we also investigate in this article), this translates to the fact that these indexes are orders of magnitude slower than suffix arrays. 
This has recently been mitigated by techniques based on relative Lempel-Ziv (RLZ) compression of the Suffix Array \cite{puglisi2021smaller}, which however incur in a larger space usage.

\paragraph{Extensions to approximate pattern matching}
While these works focused on \emph{exact} pattern matching, in real applications one is usually interested in finding \emph{approximate} occurrences of the pattern under some meaningful string distance (e.g. Hamming or edit distance).
For instance, for approximately matching short reads under the edit distance, the tool b-move~\cite{depuydt_et_al:LIPIcs.WABI.2024.10} offers a lossless, repetition-aware (run-length-compressed BWT) solution.
Another widely used approach to approximate pattern matching, effective for various types of reads and commonly employed in practice~\cite{Li13}, is to locate \emph{Maximal Exact Matches} (MEMs), and then extend them to full (approximate) occurrences of the pattern. Given a pattern $P$,  a MEM is a maximal (that is, not extendible either to the left nor to the right) substring $P[i,j]$ occurring without errors in the collection. 
Bannai~et~al.~\cite{BannaiGI20} showed that one can augment the $r$-index to compute MEMs efficiently within the same $O(r)$ space bound on top of a random-access oracle on the text. Their algorithm has two phases. In the first phase, it scans the pattern from right to left to compute for every suffix of the pattern the position in the text of an occurrence of the longest prefix of that suffix that occurs anywhere in the text. Then it scans the pattern again, but this time from left to right, to compute the length of a longest match of each of its suffixes using a random access oracle and the positions stored in the first phase, from which MEMs can be reported immediately.
Subsequently, an efficient construction algorithm for the data structure of Bannai~et~al.~\cite{BannaiGI20}   was presented by Rossi~et~al.~\cite{RossiOLGB22}, who also presented a practical implementation.
The two-phases algorithm of Bannai~et~al.~\cite{BannaiGI20} was later simplified into a one-pass (left-to-right scan) algorithm by Boucher~et~al.~\cite{BoucherGIKLMNPR21} which allows for processing patterns in the streaming model.

\subsection{Our contributions}

In this paper we present a new Suffix Array compression technique that is \emph{repetition-aware}, \emph{simple}, and remarkably \emph{efficient in practice}. More in detail, we introduce the \emph{Suffixient Array} ($\SuA$): a sampling scheme of the Prefix Array\footnote{Of course, one can define the same concepts symmetrically as a sampling of the Suffix Array; we decided to sample the Prefix (rather than Suffix) Array in order to support \emph{on-line} pattern matching (read next).} (the co-lexicographically-sorted list of text prefixes) with the following  properties. 

\begin{enumerate}
    \item If random access is available on the text, then a sequence of simple binary searches on $\SuA$ suffices to locate \emph{on-line} one occurrence of the pattern $P$ in the text (i.e. we return an occurrence of every prefix $P[1,i]$ as soon as $P[i]$ arrives) or, more in general, all its MEMs. Using state-of-the-art compressed random access techniques, we obtain a fully-compressed index.
    \item $\SuA$ is compressed  in the sense that its length $\chi$ is upper-bounded by $O(\bar r)$, the number of equal-letter runs in the Burrows-Wheeler transform of the reversed text. Furthermore, $\chi$ does not depend on the alphabet order (while $\bar r$ does) and  
    there exist infinite text families such that $\chi \in o(\bar r)$.
    This is confirmed in practice: on a repetitive collection containing variants of Human chromosome 19, our index is smaller than the \emph{toehold lemma} of the $r$-index (i.e. its subset for locating one pattern occurrence --- about half the size of the full $r$-index).
   \item Unlike most existing compressed indexes, pattern matching on our index  is also efficient in the I/O model. This is confirmed in practice: our index is  \emph{one to two orders of magnitude faster} than the $r$-index \cite{GNP20}.
    We compare our query times with a hard lower bound --- the time needed on our workstation to extract $|P|$ contiguous characters from a random position in a large text (i.e. the RAM throughput) --- and show that our index approaches this bound. 
\end{enumerate}

While our index can  locate only one pattern occurrence, standard techniques \cite{GNP20,nishimoto2021optimal} can be used to list efficiently all other occurrences by enumerating Suffix Array values adjacent to the pattern occurrence. We do not enter in such details in this article and will describe this line of research in a future publication under preparation.

When the uncompressed text is used for random access, our technique can be directly compared with binary searching the Suffix (Prefix) Array. 
In order to obtain a self-index, however, we need to use a compressed text representation.
A vast amount of literature has been devoted to supporting fast random access in compressed space, for example via Straight-Line programs \cite{BilleLRSSW15}, LZ77 compression \cite{KNtcs12}, block trees \cite{BCGGKNOPTjcss20}, or relative Lempel-Ziv (RLZ) \cite{kuruppu2010relative}. See \cite{Nav22a} for a complete survey on the topic. Some of these data structures are also known to be very small and fast in practice \cite{BCGGKNOPTjcss20,kuruppu2010relative}. As a consequence, they can be used as a black-box in point (2) above, arguably yielding the first \emph{simple} (and efficient, as we show experimentally) compressed self-index for repetitive sequences. 

\paragraph{Suffixient sets}
To achieve this result, we start by introducing a new class of combinatorial objects of independent interest:  \emph{suffixient sets}.
Let $T\in \Sigma^n$ be a text of length $n$.
Consider any set $S \subseteq [n]$
such that the following holds: 
for every one-character right-extension $T[i,j]$ of every right-maximal substring $T[i,j-1]$ of $T$, there exists $x\in S$ such that $T[i,j]$ is a suffix of $T[1,x]$. A substring $\alpha = T[i,j-1]$ is said to be \emph{right-maximal} if both $\alpha\cdot a$ and $\alpha\cdot b$ appear in $T$ for some characters $a \neq b$, or if $\alpha$ is a suffix of $T$.
We call a set $S$ with this property \emph{suffixient} (being \emph{sufficient} to ``capture'' all right-extensions of right-maximal strings or, equivalently, ``cover''  paths starting at the root of the suffix tree of $T$ and ending in the first character labeling every edge) and denote with $\chi$ the size of a smallest (not necessarily unique) suffixient set for $T$. 

We proceed by exhibiting a suffixient set of cardinality $2\bar r$, where $\bar r$ is the number of runs in the Burrows-Wheeler transform of $T$ reversed. In particular, this implies $\chi \le 2\bar r$, showing that $\chi$ is a new meaningful repetitiveness measure.
Observe that $\chi$ is independent of the alphabet ordering, while $\bar r$ is not. Since re-ordering the alphabet may reduce asymptotically $\bar r$ \cite{BentleyGT20}, from the bound $\chi \leq 2\bar r$ we immediately obtain that there exist string families on which $\chi = o(\bar r)$. In other words, $\chi$ is a better repetitiveness measure than $\bar r$, even though we leave open the problem of determining whether $\chi$ is \emph{reachable} (i.e. if $O(\chi)$ words are always sufficient to store the text). We get however close to this goal: we prove that any suffixient set is a string attractor \cite{KP2018}, which implies that $O(\chi\log(n/\chi))$ words of space are sufficient to store $T$.

\paragraph{Suffixient Arrays}

We  define a \emph{Suffixient Array} $\SuA$ of $T$ to be any (not necessarily unique) suffixient set $S$ of smallest cardinality $\chi$, sorted according to the co-lexicographic order of the text prefixes represented by $S$.

A Suffixient Array $\SuA$  can be used to locate one occurrence of a pattern $P[1,m]$ in $T$ with the following simple strategy.  
Assume that $P$ occurs in $T$ and that the prefix $P[1,j]$ (starting with $j=0$) is right-maximal in $T$. (i) Find (by binary search on $\SuA$ and random access on $T$) a prefix $T[1,x]$, for $x\in \SuA$, suffixed by $P[1,j+1]$. (ii) Right-extend the match by comparing $T[x+1,\dots]$ and $P[j+2, \dots]$. This way, we either match until the end of $P$ or we find a mismatch $T[x+1+t] \neq P[j+2+t]$, for some $t\geq 0$. 
In the former case, we are done. 
In the latter case, we know that $P[1,j+1+t]$ is right-maximal in $T$ so we can repeat steps (i) and (ii). Starting with the empty prefix of $P$,
this procedure yields one occurrence of $P$ in $T$ and, as we show in this paper, can be easily extended to return all MEMs of $P$ in $T$.
While this procedure extracts from the text a number of characters proportional to $|P|^2$ in the worst case, we show that it actually triggers at most $O((1+m/B)\cdot d\log\chi)$ I/O operations (i.e. cache misses), where $B$ is the I/O block size (i.e. cache line) and $d \le m$ is the node depth of $P$ in the suffix tree of $T$. In many practical applications (for example, DNA alignment), $d \ll m$ holds (on uniform random texts, $d \in O(\log_\sigma n)$ with high probability). 

We furthermore show that running time can be made linear in $|P|$ also in the worst case by employing z-fast tries \cite{BelazzouguiBPV09} and fast longest common prefix/suffix queries on $T$ in compressed space.

\paragraph{Computing the Suffixient Array}

We proceed with the following natural question: can the Suffixient Array $\SuA$ be computed efficiently?
To answer this question, we start by characterizing the smallest suffixient set. 
Intuitively, we show that a suffixient set is of smallest cardinality when it captures precisely all the $\chi$ one-character extensions of all right-maximal substrings of $T$ being also left-maximal. We call such substrings of $T$ \emph{supermaximal extensions}.
We then describe four algorithms solving the problem, all based on the Burrows-Wheeler transform ($\BWT$), the Longest Common Prefix array ($\LCP$), and the Suffix Array ($\SA$) of $T$ reversed.   
The first algorithm runs in quadratic time and is conceptually very simple. The other three algorithms are optimizations of the first one. 
The second algorithm runs in
$O(n + \bar r\log\sigma)$ time and requires only one pass on those three arrays. The third and fourth algorithms require random access on those arrays and run in optimal $O(n)$ time. While being slower than the latter two when $\bar r \in \omega(n/\log\sigma)$, the one-pass property of the second algorithm makes it ideal to be combined with Prefix Free Parsing (PFP) \cite{BoucherGKLMM19}, a technique able to stream those three arrays in compressed working space. We show experimentally that this combination of PFP with our one-pass algorithm can quickly compute a smallest suffixient set in compressed working space on massive repetitive text collections.

We conclude the paper with experimental results testing Suffixient Array on pattern matching queries.

\section{Preliminaries}\label{sec:preliminaries}

We use the following notation: $[i,j]$ for an interval indicating all values $\{i,i+1,\dots,j\}$ (if $j<i$, then $[i,j] = \emptyset$), 
$[n]$ for $[1,n]$, 
$A[1,n] \in U^n$ for an array $A$ of length $n$ over universe $U$ whose indices range from $1$ to $n$, $A[i]$ for the $i$-th element of $A$, $A[i,j]$ for the {\em sub-array} $A[i]\cdots A[j]$ of $A$, where $1 \leq i ,j \leq n$  (if $j<i$, then $A[i,j] = \epsilon$ is the empty array), and $|A|$ for the length of $A$. 
More in general, $A[a,b]$, with $a\leq b \in \mathbb N$, indicates that $A$ is an array whose indices range from $a$ to $b$: the first element of $A$ is $A[a]$, the second $A[a+1]$, until the last which is $A[b]$.

Let $\Sigma$ be a finite ordered alphabet of size $\sigma$. 
A \emph{string} of length $n$ over alphabet $\Sigma$ is denoted with $\alpha[1,n] \in \Sigma^n$.

Given two strings $\alpha, \beta\in \Sigma^*$, $LCS(\alpha,\beta)$ denotes the length of the longest common suffix between $\alpha$ and $\beta$. Symmetrically, $LCP(\alpha,\beta)$ denotes the length of the longest common prefix between $\alpha$ and $\beta$.

A {\em run} for a string $\alpha$ is a maximal substring $\alpha[i,j]$ containing only one distinct character. We denote by $\text{runs}(\alpha)$ the number of runs of $\alpha$. 
For instance, $runs(AAACCTAAGGTAA) = 7$.

A \emph{text} $T$ is a string ending with a special character $T[n] = \$ \in \Sigma$ appearing only in $T[n]$ and such that $\$<a$ for every $a\in\Sigma\setminus \{\$\}$. 
This special character is needed (as customary in the literature) for correctly defining structures such as the suffix tree and the Burrows-Wheeler transform (see below for their definition). Since we will use these structures also on the reversed text, to ensure that $\$$ always appears at the end of the processed text, we define the reverse $\rev{T}$ of a text $T$ as $\rev{T}=T[n-1]T[n-2]\cdots T[1]\$$.
To avoid confusion and improve readability, we will use symbols $T$ and $P$ to denote the input text and sought pattern (as well as their substrings using notation $T[i,j]$ and $P[i,j]$), and symbols $\alpha, \beta, \dots$ to denote general strings. 
When $T$ is the input text of our algorithms, by the above assumptions it holds that $n = |T|\geq 2$ and $T$ contains at least two distinct characters. Moreover, we assume $\sigma\leq n$. This is not a strong requirement since the alphabet can always be re-mapped to $[n]$ without changing the solutions to the problems considered in this paper. 
Notice that re-mapping the alphabet does however change the space complexity of our algorithms since one needs to store the map, using additional space proportional to $O(\sigma')$ words (for example, using a perfect hash function), where $\sigma' \leq n$ is the number of distinct characters appearing in the text. Since, however, we will show that $\sigma' \le \chi$, where $\chi$ is the size of the Suffixient Array, such map will not change asymptotically our space usage.

The following notions will be frequently used throughout the paper.
\medskip
\begin{definition}[Right-maximal substring]
    For a string $\alpha[1,n]$, a substring $\alpha[i,j]$ ($j\geq i-1$) of $\alpha$ is a \emph{right-maximal substring} (or \emph{repeat}) if either (i) $j=n$ or (ii) there exist at least two distinct characters $a,b\in \Sigma$ such that both $\alpha[i,j]\cdot a$ and $\alpha[i,j]\cdot b$ are substrings of $\alpha$. 
\end{definition}

\medskip

Observe that, by the above definition, the empty string $\epsilon$ is right-maximal.

\medskip

\begin{definition}[Maximal exact match (MEM)]
    For strings $\alpha[1,n]$ and $\beta[1,m]$, a triplet $(i,j,\ell)$ with $i\in[m]$ and $j\in[n]$ is a \emph{Maximal Exact Match (MEM)} of $\beta$ with respect to $\alpha$ if and only if the following three conditions hold: (i) $\beta[i-\ell+1,i]=\alpha[j-\ell+1,j]$, (ii) $i-\ell+1=1$ or $\beta[i-\ell,i]$ is not a substring of $\alpha$, (iii) $i=m$ or $\beta[i-\ell+1,i+1]$ is not a substring of $\alpha$.
\end{definition}

\medskip

The above definition of a MEM is also sometimes referred to as a \emph{super-maximal exact match (SMEM)}~\cite{10.1093/bioinformatics/bts280} (not to be confused with the notion of \emph{supermaximal extension} introduced later in Definition \ref{def:supermaximal extension}). Observe that we take $i$ and $j$ to be the \emph{ending}  positions of the matches. This is clearly equivalent to taking the \emph{beginning} positions of the matches  (as is common in the literature), and simplifies the presentation of our algorithms.

\medskip
Given two strings $\alpha$ and $\beta$, we define the {\em lexicographic order} $<_{\text{lex}}$ on $\Sigma^*$ as follows: $\beta<_{\text{lex}} \alpha$ if $\beta$ is a proper prefix of $\alpha$, or if there exists an index $j$ s.t.\ $\beta[j]<\alpha[j]$ and for all $i$ s.t. $1\le i<j$, $\beta[i]=\alpha[i]$.
The co-lexicographic order $<_{\text{colex}}$ is defined symmetrically: $\beta<_{\text{colex}} \alpha$ if $\beta$ is a proper suffix of $\alpha$, or if there exists an index $j$ s.t.\ $\beta[|\beta|-j+1]<\alpha[|\alpha|-j+1]$ and for all $i$ s.t. $1\le i<j$, $\beta[|\alpha|-i+1]=\alpha[|\beta|-i+1]$.

\medskip

\begin{definition}[Suffix array ($\SA$)\cite{MN93}]
    Given a 
    text $T[1, n]$, the \emph{Suffix Array} $\SA(T)$ of $T$ ($\SA$ for brevity) is the permutation of $[n]$ such that $T[\SA[i], n] <_{\text{lex}} T[\SA[j], n]$ holds for any $1\le i<j \le n$. 
\end{definition}
\medskip

Symmetrically, the prefix array sorts text prefixes co-lexicographically:

\medskip
\begin{definition}[Prefix array ($\PA$)]\label{def:PA}
    Given a 
    text $T[1, n]$, the \emph{Prefix Array} $\PA(T)$ of $T$ ($\PA$ for brevity) is the permutation of $[n]$ such that $T[1,\PA[i]] <_{\text{colex}} T[1,\PA[j]]$ holds for any $1\le i<j \le n$. 
\end{definition}
\medskip

\begin{definition}[LCP array \cite{MN93}]
    Given a text $T[1, n]$, the  \emph{longest common prefix} array  $\LCP(T)$ of $T$ ($\LCP[2,n]$ for brevity) is the length-$(n-1)$ integer array such that $\LCP[i]$ stores the length of the longest common prefix between $T[\SA[i],n]$ and $T[\SA[i-1],n]$, for all $2 \le i \le n$.
\end{definition}
\medskip

Since we will frequently refer to multiple elements of the LCP array, we use the notation $\LCP[B]$ for a set $B \subseteq [n]$ to denote the set $\{\LCP[i]\ :\ i\in B\}$.

\medskip
\begin{definition}[Suffix tree (ST), \cite{Weiner73}]
    The suffix tree of a text $T[1,n]$ is an edge-labeled rooted tree with $n$ leaves numbered from $1$ to $n$ such that (i) each edge is labeled with a non-empty string, (ii) each internal node has at least two outgoing edges, (iii) the labels of outgoing edges from the same node start with different characters, and (iv) the string obtained by concatenating the edge labels on the path from the root to the leaf node numbered $\SA[i]$ is the $i$-th smallest suffix $T[\SA[i],n]$ where $SA$ is the Suffix Array of $T$.
\end{definition}
\medskip
For a node $v$ in a suffix tree, the string obtained by concatenating the edge labels on the path from the root to node $v$ is called \emph{path label} of $v$. 
By definition, the path label $\alpha$ of every suffix tree node is a right-maximal substring of $T$. Vice-versa, for every right-maximal substring $\alpha$ of $T$ there exists exactly one suffix tree node having $\alpha$ as path label.  
For two nodes $u$ and $v$ such that their path labels are $\alpha$ and $a\alpha$, respectively, for some character $a\in \Sigma$ and string $\alpha\in\Sigma^*$, we say $u$ can be reached by the \emph{suffix link} from $v$.
\medskip
\begin{definition}[Burrows-Wheeler transform (BWT), \cite{BW94}]
    Given a text $T[1, n]$, the Burrows-Wheeler Transform of $T$, $\BWT(T)$, is a permutation of the characters of $T$ defined by concatenating the characters preceding the lexicographically-sorted suffixes of $T$, i.e. $\BWT(T)[i] = T[\SA[i]-1]$ where we define $T[0]=T[n]$ for brevity.
\end{definition}
\medskip

We indicate with $\bar r$ the number of runs in $\BWT(\rev{T})$. For example, if $T={\tt BANANA\$}$, then $\rev{T}={\tt ANANAB\$}$ (as stated above, we define the reverse of a text so that $\$$ still appears at the end) and $\text{BWT}(\rev{T}) = {\tt BNN\$AAA}$, so ${\bar r} = 4$.

We say that a randomized algorithm returns the correct result \emph{with high probability} (abbreviated \emph{w.h.p.}) if the probability of failure is at most $n^{-c}$ for any constant $c$ fixed before the algorithm begins. Here, $n$ is the size of the input (in our cases, $n$ will be the length of the input text).

\section{Suffixient Sets}
\label{sec:suffixient}

We start by defining the combinatorial object at the core of this paper: \emph{suffixient sets}. 
We give two equivalent definitions based on suffix trees and on right-maximal strings. 

We say that a set $S\subseteq [n]$ of positions in $T$ is \emph{suffixient} if the suffixes of the prefixes of $T$ ending at those positions are \emph{sufficient} to ``cover'' the \emph{suffix} tree of $T$ in the way described in the following definition: 

\medskip
\begin{definition}[Suffixient set - definition based on suffix trees]
\label{def:suffixient_set}
A {\em suffixient set} for a text $T [1, n]$ is a set $S$ of numbers between 1 and $n$ such that, for any edge descending from a node $u$ to a node $v$ in the suffix tree of $T$, there is an element $x \in S$ such that $u$'s path label is a suffix of $T [1, x - 1]$ and $T [x]$ is the first character of $(u, v)$'s edge label.
\end{definition}
\medskip

Because of the correspondence between suffix tree nodes and right-maximal substrings, an equivalent way to define suffixient sets is in terms of right-maximal substrings of $T$:

\medskip
\begin{definition}[Suffixient set - definition based on right-maximal strings]\label{def:suffixient}
    A set $S \subseteq [n]$ is  \emph{suffixient} for a text $T[1,n]$ if, for every one-character right-extension $T[i,j]$ ($j\geq i$) of every right-maximal string $T[i,j-1]$, there exists $x\in S$ such that  $T[i,j]$ is a suffix of $T[1,x]$.
\end{definition}
\medskip

The following lemma shows there is always a suffixient set for $T$ of cardinality at most $2 \bar{r}$.  We note that suffixient sets can be even smaller, however, see Figure~\ref{fig:cover}: for the text $T$ in the example it holds $\bar{r} = 9$ but $\{14, 20, 33, 35\}$ is still suffixient.

\begin{figure}[t]
\begin{center}
\includegraphics[width=.88\textwidth]{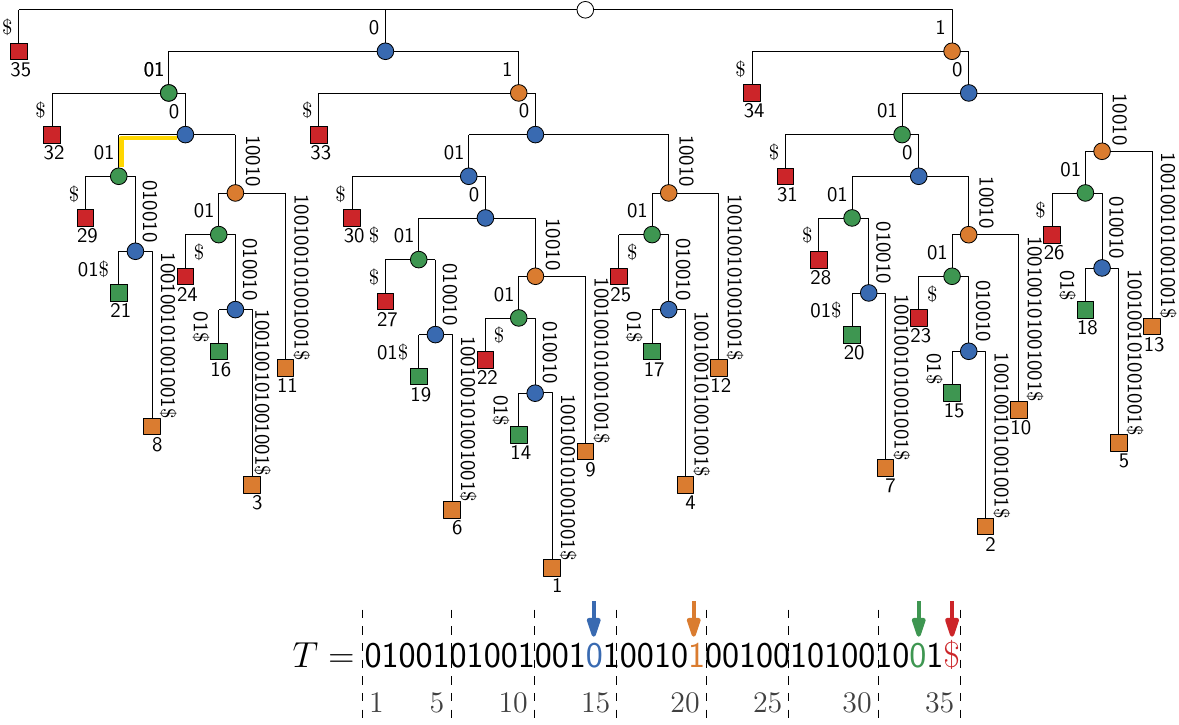}
\caption{
A text $T$ with highlighted the positions of the suffixient set $S = \{14, 20, 33, 35\}$. The figure also shows the suffix tree for $T$: each leaf is identified by the starting position in $T$ of the associated suffix; for example, the third leaf from the left has index 29 since it corresponds to the suffix $T[29,35] = \mathsf{001001\$}$. Node colors show how the set $S$ covers the suffix tree: for each edge $(u,v)$, the destination node $v$ has the same color as the position in $T$ covering it. For example, the parent of leaf 29 is colored in green since the edge $(u,v)$ leading to that node (highlighted in yellow) is covered by $T[33]$. Indeed, $u$'s path label is $\mathsf{0010}$ which is a suffix of $T[1,32]$ and $T[33]=\mathsf{0}$ is the first character of $(u,v)$.}
\label{fig:cover}
\end{center}
\end{figure}

\medskip
\begin{lemma}
\label{lem:upper_bound}
Let $T[1,n]$ be a text.
Let $\SA$ and $\LCP$ denote the Suffix Array and LCP array of $\rev T$.
The set $\{n-SA[i]+1  \ :\ i = 1\ \vee\ i=n\ \vee\ \BWT[i]\neq \BWT[i-1]\ \vee\ \BWT[i]\neq \BWT[i+1],\ 1 \le i \le n\}$ 
of positions on $T$ corresponding to characters at the at most $2 \bar{r}$ run boundaries in the BWT of $\rev T$, is suffixient for $T$.
\end{lemma}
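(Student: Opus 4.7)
The plan is to translate the suffixient condition on $T$ into a statement about runs in $\BWT(\rev T)$ via the standard reversal correspondence. A substring $\alpha$ appears in $T$ if and only if $\rev\alpha$ appears in $\rev T$, and a right-extension $\alpha c$ in $T$ corresponds to a left-extension $c\rev\alpha$ in $\rev T$. In Suffix Array terms, if $[\ell, r]$ denotes the range of suffixes of $\rev T$ prefixed by $\rev\alpha$, then the set of right-extending characters $c$ of $\alpha$ in $T$ equals $\{\BWT[i] : i \in [\ell, r]\}$, with the convention $\rev T[0] := \rev T[n] = \$$ so that right-extension by $\$$ (i.e.\ the case $\SA[i]=1$, where $\rev\alpha$ is a prefix of $\rev T$) is handled uniformly. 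Moreover, whenever $\BWT[i^*] = c$ for some $i^* \in [\ell, r]$, the prefix $T[1,\, n - \SA[i^*] + 1]$ ends with $\alpha c$.

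Two cases become trivial under this translation. If $\alpha$ contains $\$$, then $\alpha$ has no right-extension in $T$ (since $\$$ is unique), so the suffixient condition is vacuous. Otherwise, right-maximality of $\alpha$ forces at least two distinct right-extensions (for $\alpha = \epsilon$ this uses the standing assumption that $T$ contains at least two distinct characters), hence $\BWT$ is non-constant on $[\ell, r]$. The remaining task is the following combinatorial claim, which I expect to be the main step of the proof: \emph{if $\BWT$ is non-constant on $[\ell, r]$, then for every character $c$ occurring in $\BWT[\ell, r]$ there exists $i^* \in [\ell, r]$ with $\BWT[i^*] = c$ satisfying $i^* = 1$, $i^* = n$, $\BWT[i^*-1] \neq c$, or $\BWT[i^*+1] \neq c$.} I would prove this by contradiction: suppose every element of $I_c := \{i \in [\ell, r] : \BWT[i] = c\}$ lies in $[2, n-1]$ and is flanked on both sides by $c$. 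Then $I_c$ must be a contiguous interval (a gap $i \in I_c$, $i+1 \notin I_c$ would contradict $\BWT[i+1]=c$), and its extremal elements must coincide with $\ell$ and $r$ (otherwise the flanking condition at the extremum would exhibit a strictly smaller or strictly larger element of $I_c$). Hence $I_c = [\ell, r]$, so $\BWT$ is constantly $c$ on the range, contradicting non-constancy.

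Assembling the pieces, for each right-maximal $\alpha$ and each right-extension $c$ we obtain such an $i^*$, which by construction belongs to the index set defining $S$; the corresponding $x = n - \SA[i^*] + 1 \in S$ witnesses the suffixient condition for $\alpha c$. The cardinality bound $|S| \leq 2\bar{r}$ is then immediate, since the defining conditions of $S$ pick out, for each of the $\bar{r}$ runs of $\BWT(\rev T)$, only its first and last positions. The residual effort is bookkeeping for edge cases already hard-coded into the definition of $S$: the $\rev T[0] = \$$ convention legitimises right-extension by $\$$; the unconditional inclusion of $i=1$ and $i=n$ handles the sub-case where the combinatorial argument places $i^*$ at the boundary of $[1, n]$; and the $\alpha = \epsilon$ case fits the same framework with $[\ell, r] = [1, n]$.
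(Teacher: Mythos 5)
Your proposal is correct and follows essentially the same route as the paper's proof: identify the consecutive $\BWT(\rev T)$ interval holding the right-extension characters of a right-maximal $\alpha$, use right-maximality (the paper phrases this as the suffix-tree child having a sibling) to see the interval is non-constant, and conclude each extension character has an occurrence at a run boundary. You merely make explicit the final combinatorial step and the edge cases ($\$$, $\epsilon$, positions $1$ and $n$) that the paper's terser proof leaves implicit.
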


\begin{proof}
Consider an edge $(u, v)$ descending from a node $u$ to a node $v$ in the suffix tree of $T$.  Let $\alpha$ be $u$'s path label and $c$ be the first character in $(u, v)$'s edge label.  By the definition of the BWT, the occurrences of characters immediately following occurrences of $\alpha$ in $T$ are consecutive in the BWT of the reverse of $T$.  By the definition of suffix trees, $v$ must have a sibling, so not all the characters immediately following occurrences of $\alpha$ in $T$ are copies of $c$.  Therefore, for some $x$ such that $T [x]$ is at a run boundary in the BWT of the reverse of $T$, we have $T [x - |\alpha|, x] = \alpha\,c$. Hence, $\alpha$ is a suffix of $T[1,x-1]$ and $T[x]=c$ and position $x$ ``covers'' the suffix tree edge $(u,v)$.
\end{proof}
\medskip

The next lemma shows that any suffixient set for $T$ is also a string attractor~\cite{KP2018} for $T$ --- that is, for any non-empty substring $T [i, j]$ of $T$, some occurrence $T[i',j']$ of $T [i, j]$ is such that $i'\le s \le j'$ for some $s \in S$. 

\medskip
\begin{lemma}
\label{lem:attractor}
Any suffixient set $S$ for $T$ is a string attractor for $T$.
\end{lemma}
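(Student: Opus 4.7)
The plan is to use the suffix-tree-based definition of suffixient sets. Let $w = T[i,j]$ be any non-empty substring of $T$. Its \emph{locus} in the suffix tree of $T$ is well-defined: $w$ is either the path label of some node, or it corresponds to a point strictly inside some edge. In both situations, let $v$ be the node at or immediately below this locus, let $u$ be its parent, let $\alpha$ be $u$'s path label, and let $c$ be the first character of edge $(u,v)$. By this choice, $\alpha$ is a strict prefix of $w$ and $w[|\alpha|+1] = c$.

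By the suffixient property applied to the edge $(u,v)$, there exists $x \in S$ such that $\alpha$ is a suffix of $T[1,x-1]$ and $T[x] = c$. Hence $T[x-|\alpha|,\,x] = w[1,\,|\alpha|+1]$. I then propose $T[x-|\alpha|,\,x+|w|-|\alpha|-1]$ as the occurrence of $w$ covered by $x$: this interval does contain position $x$ since $0 \le |\alpha| \le |w|-1$.

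It remains to show that this candidate actually spells $w$ to the right of position $x$. The key observation is that, for every $k$ with $|\alpha| < k < |w|$, the prefix $w[1,k]$ lies strictly inside edge $(u,v)$, and is therefore not right-maximal in $T$ -- every occurrence of $w[1,k]$ in $T$ is followed by the same character, which must equal $w[k+1]$ since that is what follows the original occurrence starting at position $i$. A straightforward induction on $k$ then forces $T[x+k-|\alpha|] = w[k+1]$ for every $|\alpha| < k \le |w|-1$, yielding $T[x-|\alpha|,\,x+|w|-|\alpha|-1] = w$ as required.

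The main obstacle I expect is the boundary case where $w$ is itself right-maximal and its locus sits exactly at a node $v$: one has to argue that $v$ is not the root (it is not, since $w \neq \epsilon$) so that the parent edge $(u,v)$ exists and is captured by the suffixient property, and that the inductive right-extension still reaches all the way to index $|w|$ (it does, since the induction only needs $k$ to range strictly below $|w|$). Once this is handled, the argument is essentially forced by the tree-based definition of suffixient sets combined with the standard fact that any substring sitting strictly inside a suffix tree edge has a uniquely determined continuation character in $T$.
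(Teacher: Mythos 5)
Your proof is correct and follows essentially the same route as the paper's: identify the parent edge $(u,v)$ of the locus of $T[i,j]$, invoke the suffixient property on that edge to obtain $x\in S$ with $\alpha\,c$ a suffix of $T[1,x]$, and conclude that this occurrence of $\alpha\,c$ sits inside an occurrence of $T[i,j]$. The only difference is presentational: your character-by-character induction via non-right-maximality of the intermediate prefixes is exactly the detail the paper compresses into the one-line observation that, since $(u,v)$ is a single edge, any occurrence of $\alpha\,c$ is contained in an occurrence of $T[i,j]$.
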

\begin{proof}
Consider a non-empty substring $T [i, j]$ of $T$, let $v$ be the locus of $T [i, j]$ in the suffix tree of $T$ (that is, the highest node whose path label is prefixed by $T [i, j]$), let $u$ be $v$'s parent, let $\alpha$ be $u$'s path label, and let $c$ be the first character of $(u, v)$'s label.  Then $\alpha\,c$ is a prefix of $T [i, j]$ and, since $(u, v)$ is a single edge, any occurrence of $\alpha\,c$ in $T$ is contained in an occurrence of $T [i, j]$.  By Definition~\ref{def:suffixient}, there is some $x \in S$ such that $\alpha\,c$ is a suffix of $T [1, x]$, so $T [x]$ is contained in an occurrence of $\alpha\,c$ and thus contained in an occurrence of $T [i, j]$.
\end{proof}

As a consequence of the above results, if we denote the cardinality of the smallest suffixient set for $T$ by $\chi$, we have $\sigma' \le \gamma \le \chi \leq 2 \bar{r}$, where 
$\sigma'$ is the number of distinct characters appearing in $T$ and 
$\gamma$ is the size of the smallest string attractor for $T$. These relations show that $\chi$ is a new meaningful repetitiveness measure \cite{Nav22a}. 
In addition, combining Lemma~\ref{lem:attractor} with Kempa and Prezza's~\cite{KP2018} results on string attractors, we immediately obtain the following:

\medskip
\begin{corollary}\label{cor:extract}
Given a text $T$ with smallest suffixient set of size $\chi$,
we can build a data structure of size $O (\chi \log (n / \chi))$ words of $O(\log n)$ bits supporting the extraction of any length-$\ell$ substring of $T$ in $O(\log(n/\chi)+\ell)$ time.
\end{corollary}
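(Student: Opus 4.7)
The plan is to derive the corollary as an essentially immediate consequence of Lemma~\ref{lem:attractor} combined with the random-access data structure of Kempa and Prezza~\cite{KP2018}. Concretely, by Lemma~\ref{lem:attractor}, any suffixient set for $T$ is a string attractor for $T$. In particular, a smallest suffixient set $S$, of cardinality $\chi$, is a string attractor of size $\chi$ (not necessarily of smallest attractor size, but that does not matter for the bounds we are after). So the first step of the proof is just a one-line invocation of Lemma~\ref{lem:attractor} applied to $S$.

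The second step is to plug this attractor into the black-box construction of Kempa and Prezza. Their result states that, given a string attractor of size $\gamma$ for a text of length $n$, one can build a data structure of $O(\gamma \log(n/\gamma))$ words supporting the extraction of any length-$\ell$ substring in $O(\log(n/\gamma) + \ell)$ time. Instantiating this with $\gamma := \chi$ and the attractor $S$ yields exactly the claimed $O(\chi \log(n/\chi))$ space and $O(\log(n/\chi) + \ell)$ extraction time. Since $\chi \le n$, the function $x \mapsto x\log(n/x)$ is well-behaved in the regime of interest and the substitution is monotone, so no adjustment is needed.

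There is essentially no obstacle: the only content is recognizing that the suffixient/attractor inclusion of Lemma~\ref{lem:attractor} lets us pass $\chi$ as the attractor-size parameter into an off-the-shelf theorem. One minor sanity check worth spelling out, though, is that the Kempa–Prezza data structure is built from the attractor itself (its positions), which is given to us as the suffixient set $S$; so no additional computation beyond having $S$ on hand is required to state the existence result. With these two steps the corollary follows.
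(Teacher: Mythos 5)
Your proposal is correct and follows exactly the route the paper takes: the paper states the corollary as an immediate consequence of combining Lemma~\ref{lem:attractor} (every suffixient set is a string attractor, so the smallest one is an attractor of size $\chi$) with the Kempa--Prezza random-access result for string attractors, which is precisely your two-step argument. Nothing is missing.
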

\medskip

In particular, $O(\chi \log (n / \chi))$ words are sufficient to store the text. We leave open the problem of determining if $O(\chi)$ words are still sufficient. 
To conclude this section, note that $\chi$ is independent of the alphabet ordering, while $\bar r$ is not. Since re-ordering the alphabet may reduce asymptotically $\bar r$ \cite{BentleyGT20}, from the bound $\chi \leq 2\bar r$ we immediately obtain that there exist string families on which $\chi = o(\bar r)$. In other words, on such strings $\chi$ is (asymptotically) a smaller repetitiveness measure than $\bar r$. 
Importantly note that, while it is true that the number of runs  $\bar r$ of the $\BWT$ of $\rev T$  can be minimized by choosing the best alphabet order, (the decisional version of) the problem of finding such an order is NP-complete \cite{bentley2020complexity}. On the other hand, in this paper we will show that a smallest suffixient set, of size $\chi$, can be computed in linear time.

\section{Suffixient Arrays: String Matching with Suffixient Sets}
\label{sec:compressed_index}

Based on the notion of suffixient set, we introduce \emph{Suffixient Arrays}:

\medskip
\begin{definition}
    A \emph{Suffixient Array} $\SuA$ for a text $T$ is an integer array of size $\chi$ whose entries form a suffixient set $S$ of smallest cardinality for $T$, sorted by the co-lexicographic order of the text prefixes $T[1,x]$ corresponding to each $x\in S$.
\end{definition}
\medskip

Later, in Section \ref{sec:small_suff} it will be made clear that a text $T$ could admit multiple Suffixient Arrays. For the purposes of this section, any Suffixient Array will work.

Observe that the Suffixient Array is a sampling of the Prefix Array (Definition \ref{def:PA}). 
In this article, we decided to sample the Prefix (rather than the Suffix) Array since, as we now show, this choice allows us to perform \emph{on-line} pattern matching  via a simple binary search strategy on $\SuA$. This feature is useful in streaming scenarios when long patterns are provided one character at a time and one wishes to know an occurrence of each pattern prefix $P[1,i]$ immediately after $P[i]$ has arrived. 

We first show in Subsection \ref{subsec: alg one occ} how to use $\SuA$ to find one occurrence of each pattern prefix $P[1,i]$ given in an on-line fashion, provided that random access on $T$ is available. 
We analyze the running time of our algorithm in the worst-case scenario in the RAM and I/O models.
Then, in Subsection \ref{subsec: mem} we show how to extend this idea to find all Maximal Exact Matches (MEMs) of $P$ in $T$. To conclude, in Section \ref{subsec:z-fast tries} we speed up our queries by employing more advanced data structures (\emph{z-fast tries} and \emph{Straight-Line Programs}). Our strategies will actually work on \emph{any} sorted suffixient set for $T$; however, since in Section \ref{sec:smallest-SuffixientSet-comp} we show that a Suffixient Array $\SuA$ can be computed in linear time and space (and even in compressed space), in the following we  describe our algorithms using $\SuA$.

\subsection{Locating a single pattern occurrence} \label{subsec: alg one occ}

We present our solution in a general form in order to make it easier later (in Subsection \ref{subsec:z-fast tries}) to plug in faster data structures to replace binary search. 
Let $\SuA$ be a Suffixient Array for $T$. 
Assume that we have data structures on the pair $(T,\SuA)$ supporting the following operations: 

\medskip
\begin{definition}[{random access: $T[i]$}] \label{def:random access query}
Given an integer $i\in[n]$, return the $i$-th character of $T$.
\end{definition}

\medskip
\begin{definition}[longest common suffix: $\opsearch(\alpha)$]\label{def:LCS query}
    given a string $\alpha \in\Sigma^m$, return a pair $(x,\ell) \in \SuA \times \mathbb N$ such that 
    (i) $\ell$ is the length of the longest common suffix between $\alpha$ and $T[1,x]$, and (ii) there does not exist any other such pair $(x',\ell') \in \SuA \times \mathbb N$ with  $LCS(\alpha, T[1,x']) = \ell' > \ell$. 
    If $\alpha[m]$ does not suffix $T[1,x]$ for any $x\in \SuA$, then return the pair $(0,0)$. 
\end{definition}
\medskip

Observe that Operation $\opsearch(\alpha)$ can be supported in time $O(m\log \chi)$ by simple binary search on $\SuA$ and random access on $T$, provided that random access on $T$ can be performed in constant time per extracted character (otherwise, this time gets multiplied by time needed to access a single text character). 
This time can be sped up to $O(m + \log \chi)$ by using classic longest-common-suffix (LCS) data structures on the Suffixient Array (in \cite{manber1990suffix}, the authors show how to achieve this on the Suffix Array using the symmetric LCP array), using $O(\chi)$ additional words of space.

Algorithm \ref{alg:one-occ} shows how to locate an occurrence of every $P[1,i]$, where $P$ is a query pattern, using the above two operations. 

\medskip
\begin{algorithm}[th!]
\caption{Finding one occurrence of every prefix of $P [1, m]$ in $T [1, n]$.}
\label{alg:one-occ}
\LinesNumbered
\SetKwInOut{Output}{output}
\SetKwInOut{Input}{input}
   \Input{A pattern $P[1,m]$.}
    \Output{For every $1 \le i \le m$, a pair $(i,j)$ such that $T[j-i+1,j]=P[1,i]$ if $P[1,i]$ occurs in $T$. 
    Otherwise, terminates with \texttt{NOT\_FOUND} at the smallest $i$ such that $P[1,i]$ does not occur in $T$.}
    {$i \gets  j \gets 0$\;}
    \While{$i < m$}{
      $i\gets i+1$; $j\gets j+1$\;\label{line: incr i j}
      \If{$P[i] \neq T[j]$}{
      $(j,\ell) \gets \opsearch(P[1,i])$\;\label{line: opsearch}
          \lIf{ $\ell < i$ }{ 
              {\bf exit}(\texttt{NOT\_FOUND})
          }        
      }
      {\bf output} $(i,j)$\;\label{line: output pair}
    }
\end{algorithm}

Observe that, when operation $\opsearch()$ is implemented with binary search, Algorithm \ref{alg:one-occ} works by performing a \emph{sequence} of (at most) $m$ simple binary searches. This is more than the single binary search of pattern matching on the Suffix (Prefix) Array, which however requires linear space in $n$ (unless implemented with more complex compressed data structures such as the \texttt{rlcsa} \cite{rlcsa}). Nevertheless, experimentally we will show that in practice binary searching $\SuA$ is as fast as binary searching $\SA$.

\medskip
\begin{lemma}
    Algorithm~\ref{alg:one-occ} is correct and complete. 
\end{lemma}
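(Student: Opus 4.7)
The plan is to prove both correctness and completeness by induction on the iteration count $i$, establishing the loop invariant: if the algorithm reaches Line~\ref{line: output pair} during iteration $i$, then the emitted pair $(i,j)$ satisfies $T[j-i+1,j]=P[1,i]$; and the algorithm exits with \texttt{NOT\_FOUND} during iteration $i$ only if $P[1,i]$ does not occur in $T$. Since any occurrence of $P[1,i]$ implies an occurrence of $P[1,i-1]$, this invariant jointly delivers both correctness (outputs are valid occurrences) and completeness (the algorithm exits exactly when and where it must).

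For the base case $i=1$, I would handle the two branches directly. If $P[1]=T[1]$ we output $(1,1)$, which is trivially correct. Otherwise, $\opsearch(P[1])$ is invoked: since the empty string $\epsilon$ is right-maximal by convention, any character appearing in $T$ is a one-character right-extension of $\epsilon$, so by Definition~\ref{def:suffixient} the suffixient set contains some $x$ with $T[x]=P[1]$ precisely when $P[1]$ occurs in $T$. Thus $\opsearch$ returns $\ell=1$ exactly in that case and the algorithm emits $(1,x)$; otherwise it returns $\ell=0<1$ and correctly exits.

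For the inductive step, I would assume the invariant at the end of iteration $i-1$ (so $T[j-i+2,j]=P[1,i-1]$ with $j$ denoting the value before the increment on Line~\ref{line: incr i j}) and analyze the two branches after incrementing both indices. The easy case $P[i]=T[j]$ simply concatenates the inductive equality with $P[i]=T[j]$ to obtain $T[j-i+1,j]=P[1,i]$. When $P[i]\neq T[j]$ and $P[1,i]$ does \emph{not} occur in $T$, no $x\in\SuA$ can make $P[1,i]$ a suffix of $T[1,x]$, so $\opsearch(P[1,i])$ necessarily returns $\ell<i$ and the algorithm correctly terminates with \texttt{NOT\_FOUND}.

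The crux of the argument—and the only place the suffixient property is genuinely needed—is the remaining case: $P[i]\neq T[j]$ yet $P[1,i]$ \emph{does} occur in $T$. My plan here is to first observe that $P[1,i-1]$ must be right-maximal. Indeed, the inductive hypothesis places one occurrence of $P[1,i-1]$ at $T[j-i+1,j-1]$, followed by character $T[j]$, while the assumed occurrence of $P[1,i]$ supplies a second occurrence of $P[1,i-1]$ followed by $P[i]\neq T[j]$; thus $P[1,i-1]$ admits two distinct one-character right-extensions in $T$. Consequently $P[1,i]$ is a one-character right-extension of a right-maximal string, and Definition~\ref{def:suffixient} yields some $x\in\SuA$ such that $P[1,i]$ suffixes $T[1,x]$, giving $\mathrm{LCS}(P[1,i],T[1,x])\geq i$. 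By the maximality property of $\opsearch$ in Definition~\ref{def:LCS query}, the call returns some pair whose second component is at least $i$, so the assignment $(j,\ell)\gets\opsearch(P[1,i])$ updates $j$ to a position with $T[j-i+1,j]=P[1,i]$, the \texttt{exit} guard is not triggered, and the emitted pair satisfies the invariant—closing the induction. The only subtlety to double-check is the right-maximality deduction, which crucially relies on keeping an \emph{explicit} occurrence of $P[1,i-1]$ around via the loop invariant rather than any abstract property of the pattern.
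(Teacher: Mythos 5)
Your proof is correct and follows essentially the same route as the paper's: induction on $i$ with the invariant that each emitted pair $(i,j)$ witnesses an occurrence $T[j-i+1,j]=P[1,i]$, the base case resting on the right-maximality of the empty string, and the key step deducing right-maximality of $P[1,i-1]$ from the two distinct extensions $P[i]$ and $T[j]$ before invoking the suffixient property. No substantive differences to report.
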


\begin{proof}
The claim is equivalent to the following: 

\begin{proposition} \label{prop: search P}
    For every $1\le i\le m$ the following hold.
    Every time Algorithm \ref{alg:one-occ} reaches Line \ref{line: output pair} with values $i,j$, $P[1,i]$ is a suffix of $T[1,j]$ so, in particular,  $P[1,i]$ occurs in $T$. Conversely, if $P[1,i]$ occurs in $T$  then Algorithm \ref{alg:one-occ} will reach Line \ref{line: output pair} with values $i,j$ such that  $P[1,i]$ is a suffix of $T[1,j]$.
\end{proposition}

We prove the claim inductively on $i$. 
At the beginning of execution, in Line \ref{line: incr i j} we set $i\leftarrow j \leftarrow 1$. 
Assume $P[i]$ occurs in $T$. If $P[i] = T[j]$, then the \texttt{if} statement is not executed, we reach Line \ref{line: output pair}, and clearly Proposition \ref{prop: search P} holds true. If, on the other hand, $P[i] \neq T[j]$, then the call $j\gets \opsearch(P[1,i])$ at Line \ref{line: opsearch} will return a position $j$ with  $P[i] = T[j]$ since $P[i]$ is a one-character extension of a right maximal string (the empty string) and $\SuA$ is a suffixient set. We conclude that, again,  we reach Line \ref{line: output pair} and Proposition \ref{prop: search P} holds true.
Assume now that $P[i]$ does not occur in $T$. Then, 
it must be the case $P[i] \neq T[j]$ so
the \texttt{if} statement will be executed and the call $j\gets \opsearch(P[1,i])$ will set $j$ to $0$, causing the termination of the algorithm with message \texttt{NOT\_FOUND}. Also in this case, Proposition \ref{prop: search P} holds true.

Assume inductively that $P[1,i-1]$, with $i>1$, occurs in $T$ (otherwise, the algorithm halted in a previous step) and that the claim holds for all prefixes of $P[1,i-1]$. 
Let $(i-1,j-1)$ be the pair output by the algorithm on prefix $P[1,i-1]$.
We consider two cases. 
(a) $P[1,i]$ occurs in $T$. Then, either (a.1) $P[i] = T[j]$, in which case the \texttt{if} statement is not executed, we reach Line \ref{line: output pair}, and Proposition \ref{prop: search P} holds true, or (a.2) $P[i] \neq T[j]$. In this case, observe that $P[1,i-1]$ is right-maximal in $T$ since both $P[1,i]$ and $P[1,i-1]\cdot T[j]$ occur in $T$. Then, the 
call $j\gets \opsearch(P[1,i])$ at Line \ref{line: opsearch} will return a position $j$ with  $P[1,i] = T[j-i+1,j]$ since
$P[1,i]$ is a one-character extension of a right maximal string  and $\SuA$ is a suffixient set. We conclude that, again,  we reach Line \ref{line: output pair} and Proposition \ref{prop: search P} holds true.
(b) $P[1,i]$ does not occur in $T$. Then, necessarily $P[i] \neq T[j]$, so the \texttt{if} statement will be executed and the call $j\gets \opsearch(P[1,i])$ will set $j$ to $0$, causing the termination of the algorithm with message \texttt{NOT\_FOUND}. Also in this case, Proposition \ref{prop: search P} holds true.
\end{proof}

Algorithm \ref{alg:one-occ} calls $\opsearch(P[1,i])$ at most  $m$  times. We can say more:  it is actually not hard to see that Algorithm \ref{alg:one-occ} calls $\opsearch(P[1,i])$ at most  $d \le m$  times, where $d$ is the node depth of the locus of $P$ in the suffix tree of $T$. If Algorithm~\ref{alg:one-occ} outputs \texttt{NOT\_FOUND}, then $d$ is 1 more than the node depth of the locus of the longest prefix of $P$ that occurs in $T$. 
We conclude that, if $\opsearch(P[1,i])$ is implemented via binary search on $\SuA$ and if random access on $T$ costs constant time per extracted character, then Algorithm \ref{alg:one-occ} runs in $O(m\cdot d\log\chi) \subseteq O(m^2\log\chi)$ time. More in general we obtain:

\medskip
\begin{theorem}\label{thm: pattern matching}
    Let $T[1,n]$ be a text stored in a data structure supporting the extraction of $T[i]$, for any $i\in[n]$, in time $t_a$. 
    Then, given a pattern $P[1,m]$, using the Suffixient Array of $T$ ($\chi$ words) we can locate one occurrence of every prefix $P[1,j]$ ($j\in [m]$) in total $O(t_a\cdot m\cdot d \log \chi) \subseteq O(t_a\cdot m^2 \log \chi)$ time, where  $d\le m$ is the node depth of the locus of $P$ in the suffix tree of $T$.
\end{theorem}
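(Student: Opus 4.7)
The claim follows by combining the correctness of Algorithm~\ref{alg:one-occ}, already established in the preceding lemma, with a careful accounting of (i) the number of invocations of $\opsearch$ and (ii) the cost of each such invocation when it is implemented as binary search on $\SuA$ with random access to $T$. All remaining work of the algorithm consists of at most $m$ ``fast-path'' character comparisons at Line~\ref{line: incr i j} and the \texttt{if}-test on the next line, each of which costs $O(t_a)$; this contribution is $O(t_a\cdot m)$ and is dominated by the bounds below.

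For (i), the plan is to show that each call to $\opsearch$ strictly advances the position of the current match in the suffix tree of $T$. Between two consecutive calls, the algorithm simply increments $i$ and $j$ in lock step and matches the next character of $P$ against the character that naturally follows our current occurrence in $T$; this corresponds to walking along a single suffix-tree edge without ever changing edge. The call at Line~\ref{line: opsearch} is entered only when $P[i]\ne T[j]$, and by the case analysis already used in the correctness proof this occurs either (a) at a right-maximal prefix $P[1,i-1]$, i.e.\ at a branching node of the suffix tree, in which case Definition~\ref{def:suffixient} together with the fact that $\SuA$ is suffixient guarantees that the call returns a witness $x$ with $P[1,i]$ suffixing $T[1,x]$, and the algorithm advances onto a strictly deeper edge; or (b) at a non-right-maximal prefix, in which case the call necessarily fails and the algorithm halts. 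Since every successful invocation strictly increases the node depth of our position in the suffix tree, and the convention stated just before the theorem absorbs the at-most-one failing call into $d$ (by setting $d$ to one plus the node depth of the locus of the longest prefix of $P$ that does occur in $T$), the total number of $\opsearch$ invocations is at most $d$.

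For (ii), each invocation $\opsearch(P[1,i])$ performs an $O(\log\chi)$-step binary search on $\SuA$. At each step we must determine the co-lexicographic order between $P[1,i]$ and a text prefix $T[1,\SuA[\text{mid}]]$, which is done by comparing characters from the right end leftwards and stopping at the first mismatch or after all $i\le m$ characters of the query have been matched; hence each step performs at most $m$ random accesses to $T$, at cost $t_a$ each, while accesses to $P$ are $O(1)$ because $P$ is kept in RAM. A single call therefore costs $O(t_a\cdot m\log\chi)$, and multiplying by the at most $d$ calls (and adding the dominated fast-path work) yields the claimed $O(t_a\cdot m\cdot d\log\chi)\subseteq O(t_a\cdot m^2\log\chi)$ bound.

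The delicate point is the counting step in (i): I need to set up the correspondence between $\opsearch$ invocations and the suffix-tree trace of $P$ cleanly, and in particular verify that the single potential failing call in the \texttt{NOT\_FOUND} case is exactly the one captured by the ``$+1$'' in the definition of $d$. Everything else is a routine product of the per-call cost and the number of calls.
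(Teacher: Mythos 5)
Your proposal is correct and follows essentially the same route as the paper, which states the theorem after informally observing that the number of $\opsearch$ calls is bounded by $d$ (each call occurring at a distinct right-maximal prefix of $P$, i.e.\ a proper ancestor of $P$'s locus, with the \texttt{NOT\_FOUND} case absorbed into the ``$+1$'' convention) and that each binary-search call costs $O(t_a\cdot m\log\chi)$. The only difference is that you spell out the depth-counting argument that the paper dismisses as ``not hard to see''; the accounting and the final product of the two bounds are identical.
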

\medskip

The above query time can be reduced to $O(t_a\cdot m\cdot d +d\log\chi)$ by using classic longest-common-suffix (LCS) data structures taking $O(\chi)$ additional words of space (see \cite{manber1990suffix}).

In many practical applications, $d\ll m$. This can be understood with the fact that, in the suffix tree of a uniform random string, the string depth of the deepest branching node (equivalently, the length of the longest repeated substring) is $O(\log_\sigma n)$ with high probability, which implies $d\in O(\log_\sigma n)$ with high probability. We can say more. Differently from the FM-index \cite{FM00} and the $r$-index \cite{GNP20}, a simple analysis shows that our new index is cache-efficient, provided that the text is stored contiguously in memory. While this seems a hard requirement for a \emph{compressed} random access oracle, in Section \ref{sec:experiments} we will show such an efficient oracle.
We analyze this phenomenon in the I/O model where a block (e.g. a cache line) fits $B$ characters:

\medskip
\begin{theorem}\label{thm: pattern matching I/O}
    Let $T[1,n]$ be a text stored contiguously in main memory.
    Then, given a pattern $P[1,m]$, using the Suffixient Array of $T$ ($\chi$ words) we can locate one occurrence of  $P$ with  $O( (1+m/B) \cdot d \log \chi)$ I/O complexity, where  $d\le m$ is the node depth of the locus of $P$ in the suffix tree of $T$.
\end{theorem}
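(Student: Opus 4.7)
The plan is to bound the I/Os incurred by Algorithm~\ref{alg:one-occ} when each call $\opsearch(P[1,i])$ is implemented by binary search on $\SuA$ together with random access on the text $T$, exploiting the fact that $T$ resides contiguously in memory. I would decompose the overall I/O cost into two contributions: (i) the cost of the $\opsearch$ calls, and (ii) the cost of the outer \texttt{while} loop (Lines~\ref{line: incr i j}--\ref{line: output pair}) excluding the $\opsearch$ calls.

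For (i), I would first recall, as already observed in the paragraph just below Algorithm~\ref{alg:one-occ}, that $\opsearch$ is invoked at most $d$ times, where $d$ is the node depth of the locus of $P$ in the suffix tree of $T$ (each invocation corresponds to a descent by at least one suffix tree node along the path spelling $P$). Next I would analyze the cost of a single invocation $\opsearch(P[1,i])$. Since the Suffixient Array $\SuA$ has size $\chi$, binary search takes $O(\log\chi)$ steps, each performing one random access to an $\SuA$ entry, which costs $O(1)$ I/O. At every step we must further compare $P[1,i]$ with the length-$i$ suffix of $T[1,x]$ for the currently probed $x \in \SuA$; because $T$ is laid out contiguously in memory, this comparison reads at most $i \le m$ consecutive positions of $T$ and hence costs $O(1 + m/B)$ I/Os. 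Summing over the $O(\log\chi)$ steps and over the at most $d$ invocations of $\opsearch$ yields $O((1 + m/B)\cdot d\log\chi)$ I/Os for contribution (i).

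For (ii), observe that between two consecutive $\opsearch$ calls (and also before the first call and after the last call), the pointer $j$ is simply incremented by $1$ at each iteration of the loop, so the characters of $T$ accessed form a contiguous range. The characters of $P$ accessed are likewise contiguous. Hence the entire outer loop, excluding the calls to $\opsearch$, reads $T$ and $P$ only sequentially, incurring a total of $O(1 + m/B)$ I/Os. This is dominated by the bound obtained in (i).

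The main step to argue carefully is the cost of a single binary search step: the point that a comparison between $P[1,i]$ and the length-$i$ suffix of $T[1,x]$ costs only $O(1+m/B)$ I/Os (and not $O(m)$) because the involved slice of $T$ is contiguous in memory; this is precisely where the hypothesis that $T$ is stored contiguously is used and it is what distinguishes the I/O behaviour of $\SuA$ from that of BWT-based indexes whose character probes are scattered across the structure. Combining the two contributions gives the stated bound $O((1+m/B)\cdot d\log\chi)$.
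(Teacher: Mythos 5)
Your proposal is correct and matches the paper's own argument: at most $d$ binary searches, each with $O(\log\chi)$ steps, where each step compares at most $m$ consecutive characters of $P$ and $T$ at a cost of $O(1+m/B)$ I/Os thanks to the contiguous layout of $T$. The extra accounting for the outer loop's sequential scan is a harmless addition that is dominated by the main term.
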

\begin{proof}
    As observed above, Algorithm \ref{alg:one-occ} runs at most  $d$  binary searches, totaling $O(d\log\chi)$ binary search steps. At each step we need to compare at most $m$ consecutive characters of $P$ and $T$, an operation having $O(1+ m/B)$ I/O complexity.
\end{proof}
\medskip

In comparison, the FM-index \cite{FM00} and the $r$-index \cite{GNP20} have $\Omega(m)$ I/O complexity (i.e. every pattern character triggers a I/O operation). 
In Section \ref{sec:implementation} we will discuss heuristics that in practice tend to drastically reduce the impact of the term $d \log \chi$ of Theorem \ref{thm: pattern matching I/O}, effectively bringing the I/O complexity very close to the optimum.

\medskip

\begin{remark}[Locating all occurrences]
As mentioned in the introduction, with our mechanism it is actually possible to count and locate \emph{all} pattern occurrences (rather than just one) efficiently in the I/O model using $O(\bar r)$ space on top of the random access oracle. 
This line of research, however, is out of the scope of this article and will be described in a future publication under preparation. 
\end{remark}

\subsection{Finding maximal exact matches} \label{subsec: mem}

Algorithm \ref{alg:one-occ} 
can be extended 
so that it returns all MEMs of $P$ in $T$ within the same time complexity. We present this procedure in Algorithm \ref{alg:MEMs}.

\begin{algorithm}[th!]
\caption{Finding all MEMs of $P [1, m]$ in $T [1, n]$}
\label{alg:MEMs}
\LinesNumbered
\SetKwInOut{Output}{output}
    \Output{All MEMs $(i,j,\ell)$ of $P$ with respect to $T$}
    
    $i \gets j \gets 1$\; 
    $\ell \gets 0$\; 

    \While{$i\le m$}{\label{line: while mems}

        $(j',\ell') \gets \opsearch(P[i-\ell,i])$\;\label{line: opsearch mems}
        \lIf{$\ell'<\ell+1$ {\bf and} $\ell>0$}{{\bf report} $(i-1,j-1,\ell)$}\label{line: report1 mems}
        $\Delta \gets LCP(P[i+1,m], T[j'+1,n])$\;\label{line: LCP mems}
        $(i,j,\ell) \gets (i+\Delta+1, j'+\Delta+1, \ell'+\Delta)$\;\label{line: increment mems}
    }

    {\bf report} $(i-1,j-1,\ell)$\;\label{line: report2 mems}
    
\end{algorithm}

If $P[i]$ does not occur in $T$, then Algorithm \ref{alg:MEMs} simply does not output any triple of the form $(i,\cdot,\cdot)$. Assuming some oracle for functions $\opsearch(\cdot)$ and $LCP(\cdot, \cdot)$, we prove:

\medskip
\begin{lemma}\label{lemma:alg MEMs}
Given a pattern $P[1,m]$, Algorithm~\ref{alg:MEMs} reports all the MEMs of $P$ in $T$.
\end{lemma}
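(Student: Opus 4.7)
The plan is to establish a loop invariant from which both correctness and completeness of the MEM reports follow. Define $L_P(k)$, for $0\le k\le m$, to be the length of the longest suffix of $P[1,k]$ that occurs in $T$, with $L_P(0)=0$. I claim that at the top of every iteration of the while loop of Algorithm~\ref{alg:MEMs}, the following invariant holds: (i) $P[i-\ell,i-1]=T[j-\ell,j-1]$; (ii) $\ell=L_P(i-1)$; and (iii) if $\ell>0$, $i\le m$, and $j\le n$, then $P[i]\ne T[j]$. Condition (iii) is vacuous at initialization ($\ell=0$) and is inherited at each subsequent iteration since the preceding LCP call in Line~\ref{line: LCP mems} terminates only at a boundary or at a mismatch, so the newly updated indices either leave the ranges above or satisfy $P[i]\ne T[j]$.

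The induction step reduces to a single sub-claim: in each iteration, $\opsearch(P[i-\ell,i])$ returns $(j',\ell')$ with $\ell'=L_P(i)$. The easy direction, $\ell'\le L_P(i)$, follows because the witness $j'$ gives an occurrence of $P[i-\ell'+1,i]$ in $T$ whenever $\ell'\ge 1$. For $\ell'\ge L_P(i)$, let $\alpha=P[i-L_P(i)+1,i-1]$ (empty when $L_P(i)\le 1$) and case-split on whether $\alpha$ is right-maximal in $T$. If it is (including vacuously when empty), then $\alpha\cdot P[i]$ occurs in $T$ whenever $L_P(i)\ge 1$, and Definition~\ref{def:suffixient} supplies some $x\in\SuA$ with $\alpha\cdot P[i]$ as a suffix of $T[1,x]$; since $L_P(i)\le L_P(i-1)+1=\ell+1$, this suffix of length $L_P(i)$ is also a suffix of the input $P[i-\ell,i]$, so $\ell'\ge L_P(i)$. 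If instead $\alpha$ is nonempty and not right-maximal, it has a unique right-extension $c$ in $T$; by invariant~(i), $\alpha$ occurs at $T[j-L_P(i)+1,j-1]$, and this occurrence must be followed by $c=T[j]$; but invariant~(iii) gives $P[i]\ne T[j]=c$, contradicting $\alpha\cdot P[i]$ occurring in $T$. So the non-right-maximal branch is impossible and the sub-claim holds.

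Given the invariant, correctness of each report follows: a report $(i-1,j-1,\ell)$ on Line~\ref{line: report1 mems} fires exactly when $L_P(i-1)=\ell>0$ and $L_P(i)=\ell'<\ell+1$, so the match condition comes from (i), left-maximality from the equality $L_P(i-1)=\ell$ (any longer left-extension would contradict maximality of $L_P(i-1)$), and right-maximality from $L_P(i)\le L_P(i-1)$ (equivalent to $P[i-\ell,i]$ not occurring in $T$); the final report on Line~\ref{line: report2 mems} handles the case $i-1=m$. For completeness, I would observe that between two consecutive visited endpoints $i_t-1$ and $i_{t+1}-1=i_t+\Delta_t$ the sub-claim together with the LCP extension forces $L_P(i_t+k)=\ell'_t+k$ for $k=0,\dots,\Delta_t$; these values strictly increase by one at each step, so no skipped position can be the endpoint of a MEM, and every MEM endpoint coincides with some $i_t-1$. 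The main obstacle is the sub-claim itself: the suffixient property only covers one-character extensions of \emph{right-maximal} substrings, so a naive argument would leave the non-right-maximal case uncontrolled; invariant~(iii), $P[i]\ne T[j]$, is precisely what closes this gap, and it must be tracked carefully through both branches of a loop iteration (successful extension vs.\ MEM report with subsequent update) and through the boundary cases $\ell=0$, $i>m$, and $j>n$.
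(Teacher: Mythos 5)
Your proof is correct and follows essentially the same route as the paper's: the same loop invariant (exact match at $(i-1,j-1)$ of maximal length over all of $T$, plus the recorded mismatch $P[i]\neq T[j]$), with the mismatch used to certify right-maximality so that Definition~\ref{def:suffixient} forces $\opsearch$ to return the longest suffix of $P[1,i]$ occurring in $T$. Your packaging of the key step as the single sub-claim $\ell'=L_P(i)$ is a clean reorganization of the paper's Case~A/Case~B analysis, but the underlying argument is the same.
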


\begin{proof}

The claim is implied by the following invariant. Every time the condition of the \texttt{while} statement (Line \ref{line: while mems}) is evaluated, the following Properties hold:

\begin{enumerate}
    \item $\ell = LCS(T[1,j-1],P[1,i-1])$, 
    \item if $\ell>0$ and $i\le m$, then $P[i] \neq T[j]$,
    \item $LCS(T[1,j'-1],P[1,i-1]) \le \ell$ for every $j' \in [n]$, and
    \item All MEMs $(i',j',\ell')$ such that $i' < i-1$ have been reported.
\end{enumerate}

The invariant trivially holds the first time we enter in the \texttt{while} loop. 

We proceed by induction. Assume that the invariant holds at Line \ref{line: while mems} for some values of $j,\ell$, and $i\le m$. We want to show that it still holds after executing the operations in Lines \ref{line: opsearch mems}-\ref{line: increment mems}. First observe that, by Property (1) of the invariant, $\opsearch(P[i-\ell,i]) = \opsearch(P[1,i])$ holds in Line \ref{line: opsearch mems}. This is a first difference with Algorithm~\ref{alg:one-occ}, which instead calls $\opsearch(P[1,i])$ (we will need the variant of Line \ref{line: opsearch mems} in the next subsection). 

Line \ref{line: opsearch mems} does not modify $i,j,\ell$, so it maintains the invariant valid. After running Line \ref{line: opsearch mems}, by definition of $\opsearch$ we have that $P[i-\ell'+1,i] = T[j'-\ell'+1,j']$ is the longest string (of length $\ell'$) suffixing a text prefix $T[1,x]$ ending in a position $x\in \SuA$ (in particular, $x=j'$ holds in this case). Observe that it cannot be the case that $\ell' > \ell+1$; if this was the case, then $LCS(T[1,j'-1],P[1,i-1]) = \ell'-1 > \ell$, contradicting Property (3) of the invariant. 
Moreover, in Line \ref{line: report1 mems} we do not report any MEM with $\ell=0$, so we can assume that $\ell>0$ to analyze which MEMs are reported in Line \ref{line: report1 mems}.
We have therefore to only distinguish two cases in Line \ref{line: report1 mems}. 

(Case A) If $\ell' = \ell+1$, then $(i-1,j-1,\ell)$ is not a MEM (because $P[i-\ell,i] = T[j'-\ell,j']$, i.e. $P[i-\ell,i-1]$ can be extended to the right), and we correctly do not output it. 

(Case B) If, on the other hand, $\ell' < \ell+1$ then we claim that $(i-1,j-1,\ell)$ is a MEM (and we correctly report it). To see this, observe that (i) $P[i-\ell,i-1]$ cannot be extended to the left (i.e. $P[i-\ell-1,i-1]$ does not occur in $T$). This is implied by Property (3) of the invariant. (ii) $P[i-\ell,i-1]$ cannot be extended to the right (i.e. $P[i-\ell,i]$ does not occur in $T$). To see this, assume for a contradiction that $P[i-\ell,i]$ occurs in $T$. Let $\alpha = P[i-\ell,i-1]$. Then, property (2) of the invariant implies that $\alpha$ is right maximal, since both $\alpha\cdot P[i]$ and $\alpha\cdot T[j]$ occur in the text (with $P[i]\neq T[j]$). But then, by definition of suffixient set, there must exist $j''\in \SuA$ such that  $\alpha\cdot P[i] = P[i-\ell,i]$ (of length $\ell+1$) suffixes $T[1,j'']$. This contradicts the fact that $\opsearch(P[i-\ell,i])$ returned $(j',\ell')$ with $\ell'<\ell+1$. From (i) and (ii), we conclude that $(i-1,j-1,\ell)$ is a MEM so Line \ref{line: report1 mems} correctly reports it. 

Note that the above discussion implies that the algorithm correctly reported all MEMs ending before pattern position $i-1$ included.

At this point, we prove that $P[i-\ell'+1,i]$ cannot be extended to the left, i.e. that $P[i-\ell',i]$ does not occur in $T$. Assume, for a contradiction, that $P[i-\ell',i]$ (of length $\ell'+1$) occurs in $T$ and let $j''$ be such that $P[i-\ell',i] = T[j''-\ell',j'']$. Recall that $\ell' \le \ell+1$ must hold. 
If $\ell' = \ell+1$, then 
$LCS(T[1,j''-1], P[1,i-1]) \ge \ell' = \ell+1$, contradicting Property (3) of the invariant.
If $\ell' \le \ell$, then $T[j''-\ell',j''-1]$ (of length $\ell'$) is a suffix of $T[j-\ell,j-1]$, i.e. $T[j''-\ell',j''-1] = T[j-\ell',j-1] = P[i-\ell',i-1]$. But then $\alpha = P[i-\ell',i-1]$ is right-maximal: both $\alpha\cdot P[i]$ and $\alpha\cdot T[j]$ (with $P[i]\neq T[j]$ by Property (2) of the invariant), of length $\ell'+1$, occur in $T$. 
Again, by definition of suffixient set, this contradicts the fact that $\opsearch(i-\ell, i)$ returned $(\ell',j')$.

To conclude, let $\Delta = LCP(P[i+1,m], T[j'+1,n])$ as per Line \ref{line: LCP mems}. By definition of $LCP()$: (i) Since above we proved that $P[i-\ell'+1,i]$ cannot be extended to the left, it follows that the same holds for $P[i-\ell'+1,i+\Delta]$, (ii) $P[i+\Delta+1] \neq T[j'+\Delta+1]$ (importantly, note that $j'+\Delta+1 \le n$ because of the terminator $\$$ at the end of $T$; $i+\Delta+1$, on the other hand, can be equal to $m+1$: in that case, we exit the \texttt{while} loop), (iii) $P[i-\ell'+1,i+\Delta] = T[j'-\ell'+1,j'+\Delta]$, and (iv) $P[i-\ell'+1,i+\delta]$ does not correspond to a MEM for all $0 \le \delta < \Delta$. 
Conditions (i-iv) immediately imply the invariant is still valid 
on $i,j,\ell$ after the assignment  at Line \ref{line: increment mems}: Properties (1) and (3) of the invariant follow from (i) and (iii), Property (2) is equivalent to (ii), and property (4) follows from (iv) and from the fact that, as proved above, the algorithm correctly reported all MEMs ending before pattern position $i-1$ included. 
\end{proof}

Implementing $\opsearch(\cdot)$ by simple binary search on $\SuA$ and random access on $T$, and $LCP(\alpha,T[i,n]) = \Delta$ with a sequence of $\Delta+1$ random access queries on $T$, we obtain: 

\medskip
\begin{theorem}
    Let $T[1,n]$ be a text stored in a data structure supporting the extraction of $T[i]$, for any $i\in[n]$, in time $t_a$. 
    Then, given a pattern $P[1,m]$, using the Suffixient Array of $T$ ($\chi$ words) we can find all MEMs of $P$ in $T$ in total $O(t_a\cdot m^2\log \chi)$ time. 
\end{theorem}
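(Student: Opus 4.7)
The plan is to combine the correctness of Algorithm~\ref{alg:MEMs} (already established in Lemma~\ref{lemma:alg MEMs}) with a running-time analysis in which both $\opsearch$ and $LCP$ are realized using only the Suffixient Array $\SuA$ (of $\chi$ entries) together with the random-access oracle on $T$. Concretely, I would implement $\opsearch(\alpha)$ as a plain binary search over the co-lexicographically sorted $\SuA$: each of the $O(\log \chi)$ probes compares $\alpha$ against a suffix of the text prefix $T[1,x]$ corresponding to the probed entry $x$, character by character, using random access on $T$, and the longest common suffix witnessed during the search is then returned. Similarly, $LCP(P[i+1,m],T[j'+1,n])$ would be computed by a straight character-by-character scan with random access.

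The next step is to bound the number of iterations of the \texttt{while} loop at Line~\ref{line: while mems}. At Line~\ref{line: increment mems} the index $i$ is strictly increased by $\Delta+1\ge 1$, and the loop exits as soon as $i>m$, so the loop runs at most $m$ times. Under the implementation above, each call to $\opsearch(P[i-\ell,i])$ performs $O(\log \chi)$ binary-search probes, each probe paying up to $m$ character comparisons on $T$ at cost $t_a$ per access, for a per-call cost of $O(t_a\cdot m\log \chi)$. Summed over the at most $m$ iterations this contributes $O(t_a\cdot m^2 \log \chi)$.

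For the $LCP$ computations, the $k$-th iteration costs $O(t_a(\Delta_k+1))$; since $i$ is incremented by exactly $\Delta_k+1$ per iteration and is bounded by $m+1$, the total $\sum_k (\Delta_k+1)$ telescopes to $O(m)$, so the overall $LCP$ cost is only $O(t_a\cdot m)$, dominated by the $\opsearch$ term. Adding the two contributions yields the claimed $O(t_a\cdot m^2 \log \chi)$ bound.

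The only subtlety I anticipate is bookkeeping: the quadratic factor in $m$ arises because a single binary-search probe inside $\opsearch$ already pays $O(t_a\cdot m)$ for character comparisons, exactly as in the corresponding single-occurrence bound of Theorem~\ref{thm: pattern matching}. No further structural obstacle is expected, since correctness is inherited from Lemma~\ref{lemma:alg MEMs} and the telescoping argument above is standard.
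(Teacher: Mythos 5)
Your proposal is correct and follows essentially the same route as the paper: bound the number of \texttt{while} iterations by $m$ via the strict increase of $i$, charge each $\opsearch$ call $O(t_a\cdot m\log\chi)$ for binary search with character-by-character comparison under random access, and observe that the $LCP$ scans telescope to $O(t_a\cdot m)$ because $i$ advances by $\Delta+1$ each iteration. Correctness is inherited from Lemma~\ref{lemma:alg MEMs}, exactly as in the paper.
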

\begin{proof}
    Variable $i$ in Algorithm \ref{alg:MEMs} increases at least by one unit at every iteration of the \texttt{while} loop, and the loop terminates when $i > m$. 
    In every iteration, we perform one call to $\opsearch()$ ($O(t_a\cdot m\log \chi)$ time with binary search on $\SuA$ and random access on $T$), so overall all the calls to this operation cost $O(t_a\cdot m^2\log \chi)$ time. Finally, operation $LCP(P[i+1,m], T[j'+1,n]) = \Delta$ implemented via random access costs $O(t_a\cdot \Delta)$ time. Since $i$ is incremented by $\Delta+1$ in each step (until surpassing value $m$), it follows that the total time spent on $LCP()$ is $O(t_a\cdot m)$.
\end{proof}
\medskip

\subsection{Faster queries} \label{subsec:z-fast tries}

Using state-of-the-art techniques, our query times can be reduced from quadratic to linearithmic. 
Since a solution for MEMs also implies a solution for exact pattern matching, in this subsection we only discuss solutions for MEMs. 
We need two ingredients: $z$-fast tries \cite{BelazzouguiBPV09} (in the revisited version of \cite{boldi-z-fast}\footnote{Although \cite[Thm. 6]{boldi-z-fast} states a query time of $O(\log m + \sigma)$, below the theorem's statement the authors briefly observe that, within the same asymptotic space, query time can be reduced to $O(\log m)$.}) and Straight-Line Programs.

\medskip
\begin{lemma}[z-fast trie, {\cite[Thm. 6-7]{boldi-z-fast}}]\label{lem:z-fast}
    Given a set $Z=\{\alpha_1,\cdots,\alpha_q\}$ of $q$ strings sorted in co-lexicographic order ($i<j$ implies $\alpha_i <_{\text{colex}} \alpha_j$), there exists a data structure of $O(q)$ words of space that answers the following query: given a string $P[1,m]$, 
     after a $O(m)$-time pre-processing of $P$, for any given substring $\beta$ of $P$
    return the index $i$ of the string $\alpha_i\in Z$ that has the longest common suffix of length $\ell$ with $\beta$ (or any index $i\in [q]$ if $\ell=0$). 
    The query is answered in $O(\log m)$ time for each such queried substring $\beta$ of $P$ (after the linear pre-processing). 
    The result is guaranteed to be correct if $\beta$ is a substring of $\alpha_j$ for some $j\in [q]$.
\end{lemma}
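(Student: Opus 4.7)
The plan is to adapt the z-fast trie of Boldi--Vigna, which is originally stated for weak prefix search on lexicographically sorted dictionaries, to the longest-common-suffix setting of the lemma. The first step is a reduction by reversal: let $Z' = \{\rev{\alpha_1},\dots,\rev{\alpha_q}\}$. Because $Z$ is co-lexicographically sorted, $Z'$ is lexicographically sorted, and for any $\beta$ one has $LCS(\beta,\alpha_i) = LCP(\rev\beta,\rev{\alpha_i})$, so the problem reduces to a weak prefix search on $Z'$. I would build the compacted trie $\mathcal{T}$ on $Z'$ storing only $O(q)$ structural information — skip lengths, child pointers, and one descendant-leaf pointer per internal node — without the edge labels themselves, which fits in $O(q)$ words.

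Next, I would equip $\mathcal{T}$ with the z-fast navigation structure: for each internal node $u$ with skip interval $(a,b]$, compute the 2-fattest integer $h_u\in(a,b]$ and insert into a global hash table the Karp--Rabin fingerprint of the length-$h_u$ prefix of any descending leaf of $u$ (the \emph{handle} of $u$), mapping it to $u$. A query for a string $\gamma$ of length $k$ then locates the exit node of $\gamma$ in $\mathcal{T}$ by binary searching on the 2-fattest numbers in $[0,k]$ using $O(\log k)$ fingerprint lookups of $O(1)$ expected cost each. To make this support arbitrary substrings of $P$, pre-compute in $O(m)$ time the Karp--Rabin fingerprints of all prefixes of $\rev P$; from these the fingerprint of any substring of $\rev P$ — in particular $\rev\beta$ for every substring $\beta$ of $P$ — is obtainable in $O(1)$ time. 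Each query thus costs $O(\log m)$: run the z-fast search on $\rev\beta$, and return the string depth of the exit node as $\ell$ and the index of any descending leaf as $i$.

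The main obstacle is the correctness caveat. The z-fast exit-node computation is only guaranteed to identify the true exit node when the query string is a prefix of some dictionary string — equivalently here, when $\rev\beta$ is a prefix of some $\rev{\alpha_j}$, i.e.\ when $\beta$ is a suffix of some $\alpha_j$. The lemma's hypothesis that $\beta$ is a substring of some $\alpha_j$ must be used to ensure that the search localizes onto the correct longest matching suffix; the standard fix is to follow the z-fast search with a single verification step at the candidate leaf, comparing one prefix fingerprint (and, if needed, two characters) to confirm the depth, which stays inside the $O(\log m)$ time budget. The remaining technicality is bounding Karp--Rabin collisions: taking fingerprints of $\Theta(\log n)$ bits makes all hash-table lookups correct with high probability across any polynomially many queries, by the standard analysis, and this randomness is independent of the structure of $Z$ and $P$.
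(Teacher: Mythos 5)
The paper does not prove this lemma at all: it is imported verbatim as a black box from the revisited z-fast trie of Belazzougui--Boldi--Pagh--Vigna (the cited Theorems~6--7), with only a footnote noting that the stated $O(\log m + \sigma)$ query time can be reduced to $O(\log m)$. Your proposal is therefore a reconstruction of the cited construction rather than an alternative to anything in the paper, and as a reconstruction it is essentially right: the reversal reduction from co-lexicographic LCS to lexicographic weak prefix search, the compacted trie storing only $O(q)$ words of skip/handle information, the 2-fattest handles hashed by Karp--Rabin fingerprints, the $O(m)$-time prefix-fingerprint preprocessing of $\rev P$ giving $O(1)$-time fingerprints of arbitrary substrings, and the $O(\log m)$ fat binary search are exactly the ingredients of the cited result.

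The one place where your account diverges from how the cited work (and the paper's own discussion immediately after the lemma) secures the final guarantee is the correctness caveat. You treat it as a with-high-probability statement patched by a verification step at the candidate leaf; but the lemma asserts a \emph{deterministic} guarantee whenever $\beta$ is a substring of some $\alpha_j$. The mechanism in the cited work is that the Karp--Rabin hash function is chosen at construction time to be collision-free on all substrings of the dictionary strings. Since every prefix of $\rev\beta$ queried during the fat binary search is the reverse of a suffix of $\beta$, hence a reversed substring of $\alpha_j$, every fingerprint comparison performed is between strings on which the hash is verified perfect, and the search provably reaches the correct exit node --- no leaf verification is needed, and none would fit in the time budget anyway, since confirming the depth $\ell$ against a leaf by direct comparison could cost $\Theta(\ell)$ character accesses rather than $O(\log m)$. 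For query strings that are \emph{not} substrings of any $\alpha_j$ the structure may silently fail, which is exactly why the lemma carries that hypothesis and why the paper later goes to some trouble (Lemma~\ref{lem:LCS structure} and Remark~\ref{rem: adversarial}) to ensure its queries satisfy it. This is a fixable misattribution rather than a fatal flaw, but you should replace the w.h.p.-plus-verification argument with the collision-free-on-substrings construction to match the deterministic claim being proved.
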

\medskip

The last condition of  Lemma \ref{lem:z-fast}  is needed because the z-fast trie data structure \cite[Thm. 6-7]{boldi-z-fast} is based on hashing, and at construction time the best one can do is to build a hash function that is guaranteed to be perfect (collision-free) only on substrings of the input. Otherwise, the structure fails with low probability only \cite[Thm. 7]{boldi-z-fast}. 

\medskip

A Straight-Line Program (SLP) for a text $T$ is a Context-Free Grammar generating only $T$ and whose right-expansions are either one alphabet symbol or the concatenation of two nonterminals  \cite{BilleLRSSW15}. For the purposes of this section, it is only important to know that the size (number of rules) $g$ of a smallest SLP for $T$ is a good measure of repetitiveness (see \cite{Nav22a}). We use the following two SLP-based structures:

\medskip

\begin{lemma}[random access on SLPs, \cite{BilleLRSSW15}] \label{lemma:slp random access}
Given an SLP with $g$ rules for $T[1,n]$, there exists a $O(g)$-space (memory words) data structure answering random access queries $T[i]$ in $O(\log n)$ time. 
\end{lemma}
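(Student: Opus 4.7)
The plan is to build a random access data structure by exploiting a heavy path decomposition of the SLP's derivation DAG, together with prefix-sum tables on each heavy path.

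First I would equip each nonterminal $X_v$ with its expansion length $|X_v|$ (computable bottom-up in $O(g)$ time once), using $O(g)$ words in total. Next I would apply a heavy path decomposition: for every rule $X_v \to X_\ell X_r$, declare the child of larger expansion length the \emph{heavy child} (breaking ties arbitrarily), and partition the nonterminals into maximal descending chains of heavy children. Because each nonterminal participates in exactly one heavy path, the total size of all heavy paths is $O(g)$. For each heavy path $X_{v_1} \to X_{v_2} \to \dots \to X_{v_k}$, I would precompute the array of prefix sums of the light-subtree sizes branching off $X_{v_1}, \dots, X_{v_{k-1}}$ (plus the symbol-length contributions along the path), so that a predecessor/binary search locates, in $O(\log k)$ time, the first index $j$ where a given offset $i$ falls into the light subtree rather than continuing down the heavy path. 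Summed over all heavy paths, these tables occupy $O(g)$ words.

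The query algorithm for $T[i]$ starts at the root nonterminal with offset $i$. At the current node $X_v$, I would run one binary search on the prefix-sum table of the heavy path that $X_v$ starts, locating the heavy-path node $X_{v_j}$ whose light child must be entered; then recurse into that light child with the adjusted offset. The recursion terminates when a terminal rule is reached, which returns the requested character.

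The correctness is immediate from the invariant that the offset remains within the current expansion. For the time bound, the central observation is that whenever we take a light edge, the expansion length of the subtree we descend into is strictly less than half of the expansion at the top of the heavy path we just left; hence after at most $\lceil \log_2 n \rceil$ light edges we must have reached a terminal. This gives $O(\log n)$ calls to the per-path binary search, and the standard amortization argument (the sizes at the tops of successive heavy paths form a geometric sequence, so $\sum_i \log k_i = O(\log n)$ by telescoping) yields total query time $O(\log n)$. The main obstacle is exactly this final amortization: naively one only gets $O(\log^2 n)$ from summing $O(\log n)$ heavy-path binary searches, and obtaining the tight $O(\log n)$ bound requires arguing that the logarithms of the heavy-path lengths visited telescope against the geometric decrease of the subtree sizes — this is where care is needed to match the cited bound of Bille et al.\ \cite{BilleLRSSW15}.
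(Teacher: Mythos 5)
The paper does not prove this lemma; it imports it verbatim from Bille et al.\ \cite{BilleLRSSW15}, so the only meaningful comparison is with that reference. Your sketch follows the same overall route as theirs: expansion lengths computed bottom-up, a heavy path decomposition of the derivation DAG, prefix sums of light-subtree sizes along heavy paths, and the observation that at most $\lceil \log_2 n\rceil$ light edges are traversed during a query. The $O(g)$ space accounting is essentially right (modulo the caveat below).

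However, the final amortization step --- the only place where the $O(\log n)$ bound, as opposed to $O(\log^2 n)$, is actually earned --- is wrong as you state it. You claim $\sum_i \log k_i = O(\log n)$ ``by telescoping'' because the subtree sizes $n_1 > n_2 > \dots$ at the tops of successive heavy paths decrease geometrically. This does not follow: the length $k_i$ of a heavy path is not controlled by the ratio $n_i/n_{i+1}$. A heavy path descending from a node of expansion $n_i$ can contain $\Theta(n_i)$ nodes (e.g.\ when most light siblings expand to single characters) while the light subtree you eventually branch into still has size $n_i/4$; iterating this construction gives $\sum_i \log k_i = \Theta(\log^2 n)$ with $g = \Theta(n)$, so a uniform binary search on the prefix-sum tables cannot yield the claimed bound. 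What Bille et al.\ actually use is a \emph{biased} predecessor structure (their interval-biased search trees): searching the heavy path whose top has expansion $n_i$ and landing in a light subtree of expansion $n_{i+1}$ costs $O(1 + \log(n_i/n_{i+1}))$ rather than $O(\log k_i)$, and it is \emph{this} quantity that telescopes to $\log(n_1/n_t) \le \log n$. Without replacing your uniform binary search by such a weight-biased search, your argument only establishes $O(\log^2 n)$ query time. A secondary, fixable imprecision: in the DAG a nonterminal can be the heavy child of many parents, so heavy paths share suffixes; the $O(g)$ space bound survives, but ``each nonterminal participates in exactly one heavy path'' is only literally true in the forest-of-heavy-edges formulation, and the per-path prefix-sum tables must be organized on that forest rather than duplicated per starting node.
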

\medskip

\begin{lemma}[LCP/LCS queries on SLPs, {\cite[Lemma 1]{balaz24Wheelermaps}}] \label{lemma:slp lcp lcs}
Given an SLP with $g$ rules for $T[1,n]$, there exists a $O(g)$-space (memory words) data structure with which we can preprocess any pattern $P[1,m]$ in $O(m)$ time such that later, given $i, j$ and $q$, we can return
$LCS(P[i,j], T[1,q])$ (or the symmetric $LCP(P[i,j], T[q,n])$ function) in $O(\log n)$ time \footnote{While in \cite[Lemma 1]{balaz24Wheelermaps} they only discuss $LCP()$, the symmetric function $LCS()$ can be supported by simply reversing the text, an operation that does not change the SLP size $g$.} and with no chance of error as long as $P[i,j]$ occurs somewhere in $T$.
\end{lemma}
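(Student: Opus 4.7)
The plan is to combine the heavy-path random-access data structure of Bille et al.\ (Lemma~\ref{lemma:slp random access}) with Karp--Rabin fingerprints, so that the longest common prefix can be found by a guided descent of the SLP derivation tree instead of a character-by-character scan. At preprocessing time I would decorate every nonterminal with (i) the length of its expansion and (ii) a Karp--Rabin fingerprint of its expansion under a fixed prime modulus, computed bottom-up in $O(g)$ time and stored in $O(g)$ words. To ensure \emph{deterministic} correctness whenever the queried $P[i,j]$ actually occurs in $T$, I would pick the modulus so that no two distinct substrings of $T$ of equal length collide under its fingerprint; such a modulus of $O(\log n)$ bits exists by a standard counting argument and can be located at construction time.

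When the pattern $P[1,m]$ arrives I would compute, in $O(m)$ time, the Karp--Rabin fingerprints of all prefixes of $P$ under the same modulus. From these, the fingerprint of any $P[i,j]$ follows in $O(1)$ by the usual rolling identity, so equality between the fingerprint of a prefix of $P[i,j]$ and the precomputed fingerprint of an SLP nonterminal's expansion can be tested in $O(1)$. To answer $\LCP(P[i,j], T[q,n])$, I would then descend the derivation tree along its heavy-path decomposition, starting from the leaf corresponding to $T[q]$ and proceeding rightward while maintaining the current offset inside $P[i,j]$. At each candidate subtree whose expansion is the next chunk of $T[q,\dots]$ to be matched, a single $O(1)$ fingerprint test decides whether to skip the entire subtree (if its expansion matches the corresponding prefix of $P[i,j]$) or to enter it to localise the first mismatch. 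Because $P[i,j]$ is assumed to occur in $T$, every fingerprint equality tested is between two equal-length substrings of $T$, so by collision-freeness it coincides with true string equality and the procedure never errs. The symmetric query $\LCS(P[i,j], T[1,q])$ reduces to the above by pre-building the same structure on the reverse of $T$, an operation that does not change the SLP size $g$.

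The main technical obstacle I foresee is bounding the total query time by $O(\log n)$: naively the descent and the sibling-walks performed after each skipped subtree could each cost $O(\log n)$ and compose multiplicatively, yielding $O(\log^2 n)$. The standard way out is the heavy-path argument of Bille et al., which shows that the entire root-to-mismatch trajectory crosses only $O(\log n)$ light edges and spends $O(1)$ amortised time per heavy edge thanks to the precomputed heavy-path indices. The delicate point is therefore to verify that, once augmented with the fingerprint-based skip test, the descent retains this $O(\log n)$ bound rather than degrading; this combined amortisation is the only non-routine step of the proof, after which the construction delivers the claimed $O(g)$ space, $O(m)$ pattern preprocessing, and $O(\log n)$ query time.
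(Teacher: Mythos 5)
The paper does not actually prove this statement: Lemma~\ref{lemma:slp lcp lcs} is imported verbatim from \cite[Lemma 1]{balaz24Wheelermaps}, with only the footnote's observation that $\LCS$ reduces to $\LCP$ by reversing $T$ (which you also make). So there is no internal proof to compare against, and your task was really to reconstruct the external argument. Your blueprint is the right one and matches the spirit of the cited source --- Karp--Rabin fingerprints stored at the nonterminals, a modulus verified to be collision-free on equal-length substrings of $T$ so that correctness is deterministic whenever $P[i,j]$ occurs in $T$, $O(m)$-time prefix fingerprints of $P$ giving $O(1)$-time substring fingerprints, and the Bille et al.\ heavy-path machinery for navigating the derivation tree. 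Your observation that every tested equality is between two substrings of $T$ (because all substrings of $P[i,j]$ are substrings of $T$ under the lemma's hypothesis) is exactly the point that turns a Monte Carlo structure into a ``no chance of error'' one.

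The genuine gap is the one you yourself flag and then leave open: the $O(\log n)$ query bound, which is the entire content of the lemma beyond what a generic fingerprint argument gives. The naive combination --- exponential plus binary search over the match length, with each probe requiring an $O(\log n)$-time fingerprint of a substring of $T$ --- yields $O(\log^2 n)$, and your proposed fix (``descend from the leaf of $T[q]$ rightward, skipping subtrees whose fingerprints match'') does not obviously repair this: the suffix $T[q,n]$ decomposes into a number of maximal right-hanging subtrees bounded only by the derivation-tree depth, which can be $\Theta(g)$, so ``$O(1)$ amortised time per heavy edge'' is not automatic. Making this work requires the per-heavy-path predecessor/biased-search structures of Bille et al.\ (equivalently, their ``finger'' fingerprint queries, whose costs telescope to $O(\log n)$ over a sequence of nested extensions), and without spelling that out the central quantitative claim of the lemma remains unproved. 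Since you explicitly defer exactly this step as ``the only non-routine step,'' the proposal as written establishes at best an $O(\log^2 n)$ bound, not the stated $O(\log n)$.
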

\medskip

Again, in Lemma \ref{lemma:slp lcp lcs} the requirement that $P[i,j]$ occurs somewhere in $T$ is needed because the data structure uses hashing to compare strings, and perfect hashing is guaranteed only on $T$'s substrings.
Combining Lemma \ref{lem:z-fast} and Lemma \ref{lemma:slp lcp lcs}, we obtain the following lemma. Observe that the claim requires only $\beta[1,k-1]$ (not the full queried string $\beta$) to be a substring of $T$. This will be crucial for the correctness of our MEM-finding algorithm.  

\medskip
\begin{lemma}\label{lem:LCS structure}
    Given a text $T[1,n]$ for which there exists a SLP of size $g$ and given a Suffixient Array $\SuA[1,\chi]$ for $T$, there exists a data structure of $O(\chi + g)$ words of space returning $(x,\ell) = \opsearch(\beta)$ on $(T,\SuA)$ (Definition \ref{def:LCS query}) in $O(\log n)$ time, for any string $\beta[1,k]$. 
    The result is guaranteed to be correct if $\beta[1,k-1]$ is a substring of $T$.
\end{lemma}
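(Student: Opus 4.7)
The plan is to equip the Suffixient Array with one z-fast trie per alphabet character and to extract exact answers through the SLP-based random-access and LCS oracles on $T$.

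For the data structure, observe first that, since $\SuA$ is sorted in co-lex order of the prefixes $T[1,\SuA[i]]$ and co-lex order ranks first by last character, the indices $\{i\in[\chi]\,:\,T[\SuA[i]]=c\}$ form a contiguous range $[l_c,r_c]$ for every character $c$; moreover, Definition~\ref{def:suffixient} applied to the (trivially right-maximal) empty string forces each character occurring in $T$ to appear at some $\SuA$ position, so these ranges cover exactly the characters of $T$. For each such $c$ I would build the z-fast trie $\mathcal{T}_c$ of Lemma~\ref{lem:z-fast} on $Z_c=\{T[1,\SuA[i]-1]\,:\,l_c\le i\le r_c\}$, in the co-lex order it inherits from $\SuA$ (stripping the common last character $c$ preserves the ranking). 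The trie sizes telescope to $\sum_c |[l_c,r_c]|=\chi$ words. On top of the size-$g$ SLP for $T$ I would install the random-access structure of Lemma~\ref{lemma:slp random access} and the LCS oracle of Lemma~\ref{lemma:slp lcp lcs}, contributing $O(g)$ more words, together with a table keyed by $\Sigma$ recording for each character whether it occurs in $T$ and pointing to its $\mathcal{T}_c$, for $O(\sigma')\subseteq O(\chi)$ further words.

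For the query, after the one-shot $O(|P|)$ preprocessing of the enclosing pattern $P$ required by the two hashing-based lemmas (our $\beta$ will be a substring of $P$), the call $\opsearch(\beta)$ with $\beta=\beta[1,k]$ sets $c=\beta[k]$: if $c$ does not occur in $T$ the output is $(0,0)$, since no $\SuA$ entry can then share even a one-character suffix with $\beta$. Otherwise every optimal output must have $T[x]=c$, so $x=\SuA[i]$ for some $i\in[l_c,r_c]$ and $\LCS(\beta,T[1,\SuA[i]]) = 1+\LCS(\beta[1,k-1],T[1,\SuA[i]-1])$. I would then query $\mathcal{T}_c$ with $\beta[1,k-1]$ to retrieve the argmax $i^*\in[l_c,r_c]$, feed $(\beta[1,k-1],\SuA[i^*]-1)$ into the SLP LCS oracle (safe because $\beta[1,k-1]$ occurs in $T$ by hypothesis) to obtain the exact value $\ell'=\LCS(\beta[1,k-1],T[1,\SuA[i^*]-1])$, and return $(\SuA[i^*],\ell'+1)$. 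Both oracle calls cost $O(\log n)$.

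The main technical obstacle is that Lemma~\ref{lem:z-fast} certifies correctness of $\mathcal{T}_c$ only when the queried string is a substring of some element of $Z_c$, whereas the hypothesis only guarantees that $\beta[1,k-1]$ occurs in $T$. I plan to fix this by adjoining to each $Z_c$ a sentinel prefix such as $T[1,n]$ (at most one extra entry per character, preserving the $O(\chi)$ bound): every substring of $T$ is then automatically a substring of some element of every augmented $Z_c$, so the precondition of $\mathcal{T}_c$ is always met. On the rare occasions when $i^*$ coincides with the sentinel, the fallback is to consult the two real co-lex neighbours of the sentinel inside $\mathcal{T}_c$ (precomputed at construction time and stored alongside it) and pick whichever yields the larger $\ell'$ through the SLP LCS oracle, keeping the query time at $O(\log n)$.
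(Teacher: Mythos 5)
Your proposal is correct and follows essentially the same route as the paper: partition $\SuA$ by the character $T[\SuA[i]]$, build one z-fast trie per character on the corresponding text prefixes with that last character stripped, answer $\opsearch(\beta)$ by querying the trie for $\beta[k]$ with $\beta[1,k-1]$, and recover the exact length via the SLP-based LCS oracle of Lemma~\ref{lemma:slp lcp lcs}, returning $(0,0)$ when $\beta[k]$ does not occur in $T$. The one place you genuinely diverge is in discharging the correctness precondition of Lemma~\ref{lem:z-fast}: the paper opens up the z-fast trie construction and builds a single shared hash function that is collision-free on \emph{all} substrings of $T$ (not merely substrings of the indexed strings), so that querying with any substring of $T$ is safe; you instead keep the lemma as a black box and adjoin $T$ itself as a sentinel to each $Z_c$, falling back to the sentinel's two co-lex neighbours when it wins. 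Your fallback is sound --- for any fixed query the elements sharing a common suffix of length at least $t$ form a contiguous co-lex range, so the best real element is adjacent to the sentinel --- and it has the advantage of not relying on implementation details of the trie, at the cost of the extra neighbour bookkeeping; the paper's fix is shorter but assumes control over the trie's internal hash function.
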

\begin{proof}
    We partition $\SuA$ into the $\sigma$ sub-arrays  $\SuA_a = \langle i-1:\ i \in \SuA \wedge T[i]=a \rangle$, for all $a\in \Sigma$, containing the text positions $i-1$ such that $i\in \SuA$ and $T[i] = a$.
    Then, 
    for all $a\in \Sigma$ 
    we build the z-fast trie of Lemma \ref{lem:z-fast} on the set of strings $Z_a = \{T[1,j]\ :\ j\in \SuA_a\}$.
    Letting $\sigma' \leq \chi$ denote the effective alphabet size (number of distinct characters appearing in $T$), 
    these z-fast tries can be stored in an array of size $\sigma' \leq \chi$, whose indexes can be computed from the alphabet's characters via a perfect hash function ($O(\sigma') \subseteq O(\chi)$ space). This way, given $a\in\Sigma$ we can access the z-fast trie for $Z_a$  in $O(1)$ time.    
    Importantly, we build these z-fast tries so that they share the same hash function, constructed so as to be collision-free on the substrings of $T$.
    This implies that the z-fast tries return correct answers, provided that they are queried with substrings of $T$.
    We also build the LCS structure of Lemma \ref{lemma:slp lcp lcs} on $T$.

    Recall that the z-fast trie for $Z_a$ returns indexes $i$ in the sorted $Z_a$. 
    Let $i$ be the index for $T[1,j]$ in the z-fast trie for $Z_a$. By using $O(\chi)$ additional words (e.g. a perfect hash function) we associate each such pair $(i,a)$ to the corresponding text position $j$. 
    In the following we will therefore assume that the z-fast tries return such text positions $j$ (rather than indexes in $Z_a$).

    To find $(x,\ell) = \opsearch(\beta)$, we proceed as follows. 
    Note that  $\beta[k] \neq T[j]$ for all $j\in \SuA$ if and only if $\beta[k]$ does not appear in $T$ (because $\SuA$ is a string attractor). We can easily discover this event using a $O(\chi)$-space perfect hash function storing all distinct text's characters. In this case, we return the pair $(x,\ell) = (0,0)$.

    Let us assume therefore that $\beta[k] = T[j]$ for some $j\in \SuA$.
    We query the z-fast trie on $Z_{\beta[k]}$ with string $\beta[1,k-1]$. Since $\beta[1,k-1]$ is guaranteed to be a substring of $T$, the z-fast trie correctly returns position $j\in \SuA_{\beta[k]}$ such that (i) $j$  maximizes  $LCS(T[1,j],\beta[1,k-1])$ among all positions in $\SuA_{\beta[k]}$, and (ii) $T[j+1] = \beta[k]$. But then, this implies that $j+1\in\SuA$ maximizes $LCS(T[1,j+1],\beta[1,k])$ among all positions in $\SuA$. 
    
    After finding such $j+1 \in \SuA$, we compute $\ell-1 = LCS(T[1,j],\beta[1,k-1])$ using  Lemma \ref{lemma:slp lcp lcs}. Again, the structure is guaranteed to return the correct answer since $\beta[1,k-1]$ is a substring of $T$. 
    To conclude, the result of the query is the pair $(x,\ell)$ with $x=j+1$.    
\end{proof}
\medskip

Implementing $LCP(\alpha,T[i,n]) = \Delta$ with a sequence of $\Delta+1$ random access queries on $T$, we obtain: 

\medskip
\begin{theorem}\label{thm: faster mems}
    Let $T[1,n]$ be a text with Suffixient Array size $\chi$ for which there exists a SLP of size $g$.
    Then Algorithm \ref{alg:MEMs}, when implemented with the structure of Lemma \ref{lem:LCS structure} to support operation $\opsearch()$ and with the random access structure of Lemma \ref{lemma:slp random access} to support $LCP()$, finds all MEMs of any pattern $P[1,m]$ in $T$ in $O(m\log n)$ time using $O(\chi + g)$ words of space. The algorithm always returns the correct answer. 
\end{theorem}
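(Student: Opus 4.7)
The plan is to establish three things: correctness, the $O(m\log n)$ running time, and the $O(\chi+g)$ space bound, each by a short argument that bolts on top of results already proved.

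For space the bound is immediate: the Suffixient Array itself contributes $O(\chi)$ words, the $\opsearch()$ structure of Lemma \ref{lem:LCS structure} takes $O(\chi+g)$ words, and the random access oracle of Lemma \ref{lemma:slp random access} takes $O(g)$ words, summing to $O(\chi+g)$.

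For correctness, the key observation is that Lemma \ref{lemma:alg MEMs} already shows Algorithm \ref{alg:MEMs} reports all MEMs, \emph{provided} the oracles for $\opsearch()$ and $LCP()$ return correct answers. Random access by Lemma \ref{lemma:slp random access} is error-free, so $LCP()$, implemented as a sequence of random accesses, is always correct. The only delicate point is the call $\opsearch(P[i-\ell,i])$ on Line \ref{line: opsearch mems}: Lemma \ref{lem:LCS structure} guarantees a correct answer only when the prefix $\beta[1,k-1]$ of the queried string is a substring of $T$, i.e.\ when $P[i-\ell,i-1]$ occurs in $T$. Here I would invoke Property (1) of the invariant established in the proof of Lemma \ref{lemma:alg MEMs}, namely $\ell = LCS(T[1,j-1],P[1,i-1])$, which forces $P[i-\ell,i-1] = T[j-\ell,j-1]$; the case $\ell=0$ is trivial since $\beta[1,0]=\epsilon$. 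Thus the hypothesis of Lemma \ref{lem:LCS structure} is satisfied at every call and the algorithm always outputs the correct set of MEMs.

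For the running time I would argue iteration by iteration. Each iteration of the \texttt{while} loop performs one $\opsearch()$ call in $O(\log n)$ time by Lemma \ref{lem:LCS structure}, followed by computing $LCP(P[i+1,m],T[j'+1,n]) = \Delta$ via $\Delta+1$ random accesses, each of cost $O(\log n)$ by Lemma \ref{lemma:slp random access}. Since $i$ is incremented by $\Delta+1 \ge 1$ per iteration and the loop terminates once $i>m$, the number of iterations is at most $m$, contributing $O(m\log n)$ to the total $\opsearch()$ work. Moreover, the telescoping argument $\sum_k (\Delta_k+1) \le m+1$ (the total increment of $i$) yields $\sum_k \Delta_k \in O(m)$, so the random accesses used to implement $LCP()$ contribute another $O(m\log n)$. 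Summing gives the claimed $O(m\log n)$ bound.

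The main obstacle I anticipate is the correctness check for $\opsearch()$: the z-fast-trie/SLP machinery is hash-based and hence only guaranteed on substrings of $T$, so the invariant from Lemma \ref{lemma:alg MEMs} must be threaded through carefully to certify each call. Once Property (1) of that invariant is applied, the rest of the argument is routine bookkeeping.
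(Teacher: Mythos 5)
Your proposal is correct and follows essentially the same route as the paper's own proof: correctness is reduced to Lemma \ref{lemma:alg MEMs} by using Property (1) of its invariant to certify that every $\opsearch$ query satisfies the substring hypothesis of Lemma \ref{lem:LCS structure}, and the running time is obtained by charging one $O(\log n)$ $\opsearch$ call per iteration plus a telescoping bound on the total $LCP$ work. Your explicit treatment of the space bound and the $\ell=0$ edge case are minor additions the paper leaves implicit.
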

\begin{proof}
    Variable $i$ in Algorithm \ref{alg:MEMs} increases at least by one unit at every iteration of the \texttt{while} loop, and the loop terminates when $i > m$. 
    In every iteration, we perform one call to $\opsearch()$ ($O(\log n)$ time with Lemma \ref{lem:LCS structure}), so overall all the calls to this operation cost $O(m\log n)$ time. Finally, operation $LCP(P[i+1,m], T[j'+1,n]) = \Delta$ implemented via random access costs $O(\Delta\log n)$ time using the structure of Lemma \ref{lemma:slp random access}. Since $i$ is incremented by $\Delta+1$ in each step (until surpassing value $m$), it follows that the total time spent on $LCP()$ is also $O(m\log n)$.

    We now prove correctness. 
    If the structure of Lemma \ref{lem:LCS structure} always returns the correct result, then the algorithm is correct by Lemma \ref{lemma:alg MEMs}. 
    Note that condition (1) of the invariant used in the proof of Lemma \ref{lemma:alg MEMs} implies that, in Line \ref{line: opsearch mems} of Algorithm \ref{alg:MEMs},  $P[i-\ell,i-1]$ is always a substring of $T$. But then, letting $\beta = P[i-\ell,i]$ and $k = |\beta|$, this means that all calls to $\opsearch$ are of the form $\opsearch(\beta)$, where $\beta[1,k-1]$ appears in $T$. This is precisely the condition required by Lemma \ref{lem:LCS structure} in order for the structure to return correct results. This concludes the proof.
\end{proof}

\subsection{Even faster queries}

We provide a faster solution in the case where pattern $P$ is not chosen \emph{adversarially}. With this term, we mean that substrings of $P$ do not collide with substrings of $T$ through the hash functions used by the structures of Lemmas 
\ref{lemma:slp lcp lcs} and \ref{lem:LCS structure} with high probability; in other words, we assume that an adversary did not craft $P$ so that it generates such collisions. This is a realistic scenario in practice since a string chosen independently from (the random choices in) those data structures  will collide with  a text substring of $T$ through those hash functions with low (inverse polynomial) probability only (see \cite{balaz24Wheelermaps,boldi-z-fast}): 

\medskip
\begin{remark}{\cite{balaz24Wheelermaps,boldi-z-fast}}\label{rem: adversarial}
    The structures of Lemma \ref{lemma:slp lcp lcs} and Lemma \ref{lem:LCS structure} return the correct result with high probability, assuming that their inputs (respectively, $P[i,j]$ and $\beta$) are not chosen by an adversary on the basis of the random seeds in the hash functions used by those data structures. To abbreviate, in such a case we will say that the inputs are ``~not chosen adversarially~''.
\end{remark}
\medskip

Observe that the non-adversarial hypothesis is the same required in other randomized data structures such as classic hash tables and Bloom filters \cite{bloom1970space} (where the random choices precede the arrival of the input) and is common in other works using z-fast tries to index text \cite{boldi-z-fast}. 

Algorithm \ref{alg:MEMs} can be understood as a traversal of $d\le m$ suffix tree edges, 
where $d$ is the number of times we would fully or partially descend edges in the suffix tree of $T$ while finding MEMs (a generalization to the context of MEMs of parameter $d$ used in Theorem \ref{thm: pattern matching}). See Figure~\ref{fig:descents} for an example showing that $d$ can be much smaller than $m$.

\begin{figure}[th!]
\begin{center}
\includegraphics[width=.88\textwidth]{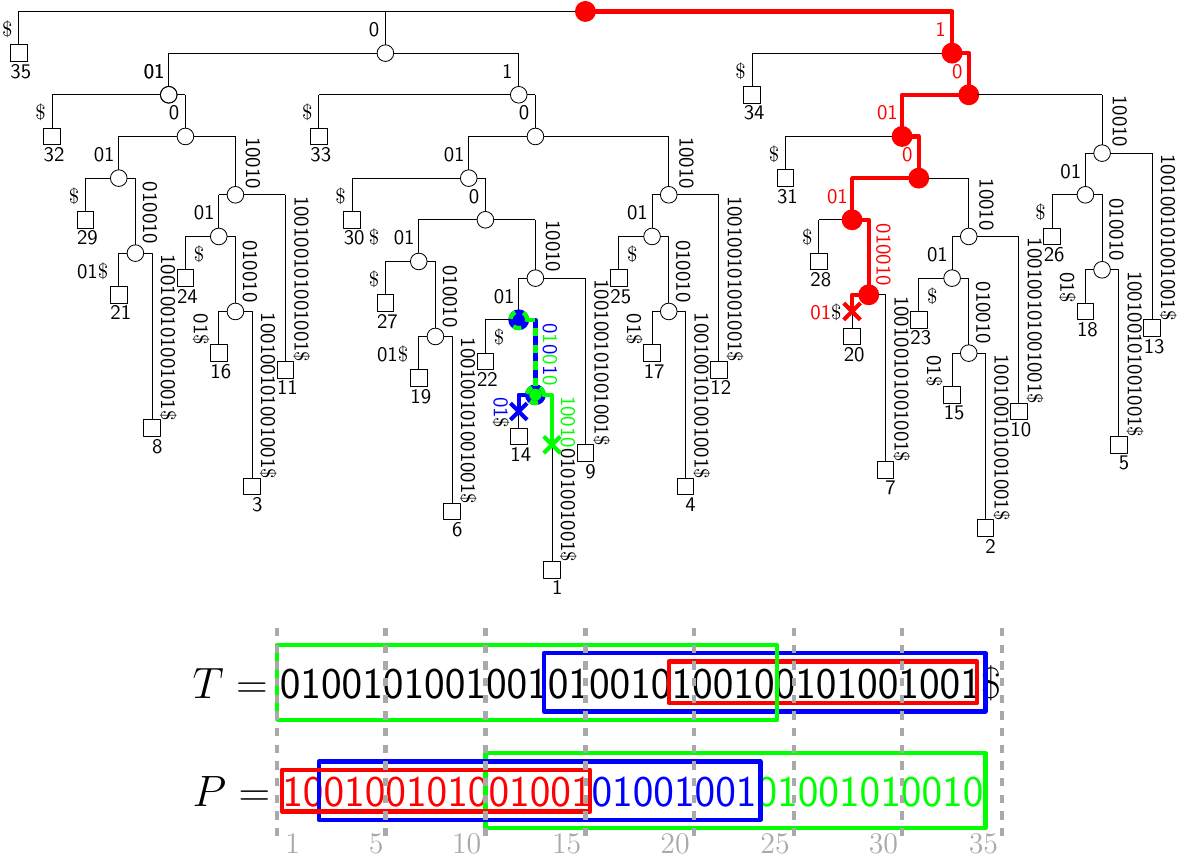}
\caption{The $d=11$ times we fully or partially descend 10 distinct edges in the suffix tree of $T$ {\bf (above)} while finding the 3 MEMs for our example {\bf (below)}.  The MEMs are shown boxed in $P$ and $T$, with the characters' colors in $P$ also indicating which path we are following in the tree when we read them.  The characters in the box for a MEM that are a different color from the box are the path label of the node we reach by suffix links and descend from when finding the end of that MEM.  We descend the line alternating blue and green twice. Notice that $11 = d \ll m = 34$.}
\label{fig:descents}
\end{center}
\end{figure}

By following the same analysis carried out in the proof of Theorem \ref{thm: faster mems}, we immediately obtain: 
\medskip

\begin{theorem}
    Let $T[1,n]$ be a text with Suffixient Array size $\chi$ for which there exists a SLP of size $g$.
    Then Algorithm \ref{alg:MEMs}, when implemented with the structure of Lemma \ref{lem:LCS structure} to support operation $\opsearch()$ and with the structure of Lemma \ref{lemma:slp lcp lcs} to support function $LCP()$, finds all MEMs of any pattern $P[1,m]$ in $T$ in $O(m + d\log n)$ time using $O(\chi + g)$ words of space, where $d\le m$ is the number of times we would fully or partially descend edges in the suffix tree of $T$ while finding those MEMs.
    The result is correct with high probability, provided that $P$ 
    is not chosen adversarially.
\end{theorem}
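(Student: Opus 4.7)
The plan is to mirror the argument for Theorem \ref{thm: faster mems} with two decisive substitutions: replace the random-access-based $LCP$ routine (Lemma \ref{lemma:slp random access}) by the SLP-based $LCS/LCP$ oracle of Lemma \ref{lemma:slp lcp lcs}, and account for the number of \texttt{while}-loop iterations of Algorithm \ref{alg:MEMs} via the descent count $d$ rather than via the trivial upper bound $m$. First I would perform a one-time $O(m)$ preprocessing of $P$ to initialize both the z-fast tries inside the structure of Lemma \ref{lem:LCS structure} and the oracle of Lemma \ref{lemma:slp lcp lcs}. After this, each call to $\opsearch(\cdot)$ and each call to $LCP(\cdot,\cdot)$ in Algorithm \ref{alg:MEMs} costs only $O(\log n)$, crucially independently of the length $\Delta$ of the extension; this is the whole reason the final bound improves from $O(m\log n)$ to $O(m + d\log n)$. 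Space remains $O(\chi+g)$, since the two new structures have the same asymptotic footprint as those used in Theorem \ref{thm: faster mems}.

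Next I would bound the number of iterations of the \texttt{while} loop by $d$. Each iteration carries out exactly one $\opsearch$ followed by one $LCP$, and the pair corresponds to exactly one ``descent'' in the suffix tree of $T$ in the sense of Figure~\ref{fig:descents}: $\opsearch$ identifies the suffix tree edge on which we currently sit (after an implicit suffix-link traversal), and the subsequent $LCP$ slides down that edge until either a mismatch is hit or the edge ends, after which the next iteration starts a new descent. Combined with the per-iteration cost from the previous paragraph, this yields $O(d\log n)$ work inside the loop and a total of $O(m + d\log n)$ time.

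For correctness, the logical skeleton of Lemma \ref{lemma:alg MEMs} carries over verbatim: the invariant established there is independent of the particular implementation of the two oracles, so what remains is to verify that both oracles return the values demanded by that invariant. The $\opsearch$ calls are always safe: by Property (1) of the invariant, the argument $P[i-\ell,i]$ satisfies the prefix-substring requirement of Lemma \ref{lem:LCS structure} (namely, $P[i-\ell,i-1]$ is a substring of $T$), hence $\opsearch$ returns the exact correct answer deterministically. The $LCP$ queries, in contrast, are of the form $LCP(P[i+1,m], T[j'+1,n])$ in which $P[i+1,m]$ need not be a substring of $T$, placing Lemma \ref{lemma:slp lcp lcs} outside its deterministic safe regime; here Remark \ref{rem: adversarial} together with the non-adversarial hypothesis guarantees that the hash functions remain collision-free on the relevant substrings of $P$ with high probability, so the output is correct w.h.p.

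The principal obstacle I anticipate is the iteration-count argument of the second paragraph: formally identifying one iteration of the \texttt{while} loop with exactly one partial-or-full edge descent in the suffix tree, given that Algorithm \ref{alg:MEMs} never manipulates the suffix tree explicitly and that $\opsearch$ is realized by a z-fast trie rather than by a suffix-link walk. A clean way forward is to track, in parallel with the execution of the algorithm, the suffix tree locus reached at the end of each iteration, and to show that between two consecutive loci the algorithm has advanced by exactly one descent in the precise sense formalized under Figure~\ref{fig:descents}; everything else is a routine repackaging of the time and space analysis already carried out for Theorem \ref{thm: faster mems}.
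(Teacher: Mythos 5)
Your proposal is correct and matches the paper's (implicit) argument, which simply states that the theorem follows by the same analysis as Theorem~\ref{thm: faster mems} with the $LCP()$ oracle of Lemma~\ref{lemma:slp lcp lcs} replacing character-by-character random access: you correctly identify that each oracle call then costs $O(\log n)$ independently of $\Delta$, that the number of \texttt{while}-loop iterations is bounded by $d$, and that correctness degrades from deterministic to with-high-probability precisely because the queried string $P[i+1,m]$ need not be a substring of $T$ (while the $\opsearch$ calls remain deterministically safe by Property~(1) of the invariant of Lemma~\ref{lemma:alg MEMs}). The only imprecision is your claim that each iteration corresponds to \emph{exactly} one edge descent: a single $LCP$ call may slide down several edges at once (as the paper itself observes in its Future Work section), so the correct statement is that each iteration accounts for \emph{at least} one full or partial descent, which is the direction needed for the bound on the iteration count and does not affect the result.
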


\subsection{Offline solutions with only random access and z-fast tries}
\label{sec:offline_queries}

Interestingly, fast random access to $T$ and a z-fast trie on the set of strings $Z = \{T[1,j]\ :\ j\in \SuA\}$ are by themselves sufficient for fast offline pattern matching.  To see why, assume $P$ does occur in $T$ and we already know the location of an occurrence $T [j - i + 1, j]$ of $P [1, i]$ in $T$.  If $T [j + 1] = P [i + 1]$ then $T [j - i + 1, j + 1] = P[1, i + 1]$.  Otherwise, querying the z-fast trie with $P [1, i + 1]$ will return in $O (\log (i + 1)) \subseteq O (\log m)$ time a position $j' + 1$ such that $T [j' - i + 1, j' + 1] = P [1, i + 1]$.

It follows that in $O (t_a \cdot m + d \log m)$ time we can find a position $j^*$ such that if $P$ occurs in $T$ then $T [j^* - m + 1, j^*] = P$, which we can then check in $O (t_a \cdot m)$ time.  Here, $d \leq m$ is again the number of times we would fully or partially descend edges in the suffix tree of $T$ while finding the MEMs of $P$ with respect to $T$ (so the node depth of the locus of $P$ if $P$ occurs in $T$).  If $P$ occurs in $T$ then our time bound holds in the worst case; otherwise, it holds with high probability assuming an adversary has not chosen $P$ on the basis of the hash function used in our z-fast trie.  In any case, our solution works in $O (m (t_a + \log m))$ time.

This solution is offline in the sense that we may not know until we have finished whether any prefix $P [1..i]$ of $P$ occurs in $T$.  Moreover, if $P$ does not occur in $T$ then we do not learn which is the longest prefix of $P$ that occurs in $T$.

\medskip

\begin{theorem}
\label{thm:offline_pattern_matching}
Suppose we already have $t_a$-time random access to a text $T[1,n]$ with Suffixient Array size $\chi$.  Then we can store in $O (\chi)$ space a z-fast trie such that in $O (m (t_a + \log m))$ time we can check whether $P [1, m]$ occurs in $T$ and, if so, return the location of one occurrence.
\end{theorem}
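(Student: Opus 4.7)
The plan is essentially to formalize the algorithm sketched in the paragraphs preceding the theorem statement and analyze its cost. First I would construct the z-fast trie of Lemma \ref{lem:z-fast} on the co-lexicographically sorted set $Z = \{T[1,j]\ :\ j\in \SuA\}$; this uses $O(\chi)$ words. The hash function underlying the trie is built to be collision-free on the strings in $Z$, and the trie's guarantees in Lemma \ref{lem:z-fast} ensure that any query with a string $\beta$ that is a suffix of some $T[1,j]\in Z$ returns the correct longest-common-suffix index.

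Next I would describe the online scanning procedure. Maintain a pair $(i,j)$ representing the invariant $T[j-i+1,j] = P[1,i]$, starting from some arbitrary position (say by querying the trie on $P[1]$, or from $i=j=0$ as in Algorithm \ref{alg:one-occ}). For each increment $i \to i+1$ I first test whether $T[j+1]=P[i+1]$ using $O(t_a)$ random access; if yes, set $j\gets j+1$ and continue. Otherwise I perform a z-fast trie query with $P[1,i+1]$ in $O(\log m)$ time; as argued in the paragraph above the theorem, if $P$ actually occurs in $T$ then $P[1,i+1]$ is a substring of $T$ and therefore a suffix of some $T[1,j']\in Z$ (because $\SuA$ is suffixient), so the trie correctly returns a position $j'+1$ with $T[j'-i+1,j'+1]=P[1,i+1]$, and we set $j\gets j'+1$. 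The trie is invoked exactly at those steps where we would descend a new suffix-tree edge, so the number of trie queries is at most $d$, which is bounded by $m$ and by the node depth of the locus of $P$ in the suffix tree.

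After the scan I obtain a candidate position $j^*$ with the property that \emph{if} $P$ occurs in $T$ then $T[j^*-m+1,j^*]=P$. Since the trie's outputs are only guaranteed when queried with true substrings of $T$, a final verification phase is needed: extract the $m$ characters $T[j^*-m+1,j^*]$ using random access in $O(m\cdot t_a)$ time and compare them to $P$. If they match, report $j^*$; otherwise report that $P$ does not occur in $T$. Correctness when $P$ occurs holds in the worst case by the invariant argument; when $P$ does not occur, the adversarial hypothesis of Remark \ref{rem: adversarial} is not needed because we never trust the trie's answer without verifying.

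The total running time is $O(d\log m)$ for the trie queries, $O(m\cdot t_a)$ for the extension tests during the scan, and $O(m\cdot t_a)$ for the final verification, giving $O(m\cdot t_a + d\log m) \subseteq O(m(t_a+\log m))$, within the claimed $O(\chi)$ space. The main obstacle I anticipate is cleanly handling the fact that z-fast trie answers are only trustworthy on true substrings of $T$: this is precisely what forces the explicit verification step at the end and makes the bound hold \emph{unconditionally} rather than only with high probability, which is the cleanest aspect distinguishing this offline statement from the earlier online variants.
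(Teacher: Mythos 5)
Your proposal is correct and follows essentially the same route as the paper: the paper's argument for Theorem~\ref{thm:offline_pattern_matching} is precisely the scan-and-extend procedure with $O(\log m)$ z-fast trie queries at mismatches, a candidate position $j^*$, and a final $O(t_a\cdot m)$ verification that makes the answer (and the $O(m(t_a+\log m))$ bound) hold unconditionally. The only nuance worth noting is that the trie's correctness at a mismatch step relies on $P[1,i]$ being right-maximal there (so that $P[1,i+1]$ suffixes some $T[1,x]$ with $x\in\SuA$ by the suffixient property), not merely on $P[1,i+1]$ being a substring of $T$; your argument implicitly has this since queries occur only at mismatches, matching the paper's own level of detail.
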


\medskip

We can adapt this idea to offline MEM-finding.  Suppose $T$ has a Suffixient Array $\SuA$ of size $\chi$, $\rev{T}$ has a Suffixient Array $\SuA'$ of size $\chi'$, and we have fast random access to $T$ (and thus also to $\rev{T}$) and z-fast tries for the sets of strings $Z = \{T[1,j]\ :\ j\in \SuA\}$ and $Z' = \{\rev{T} [1, j]\ :\ j \in \SuA'\}$.  These z-fast tries take $O (\chi + \chi')$ total space.

Assume we already know the ending position $j$ of a prefix $T [1, j]$ of $T$ with the longest common suffix with $P [1, i]$ of any prefix of $T$, possibly without knowing the length of that common suffix.  If $T [j + 1] = P [i + 1]$ then $T [1, j + 1]$ has the longest common suffix with $P [1, i + 1]$ of any prefix of $T$.  Otherwise, querying the z-fast trie for $Z$ with $P [1, i + 1]$ will return in $O (\log (i + 1)) \subseteq O (\log m)$ time a position $j' + 1$ such that $T [1, j' + 1]$ has the longest common suffix with $P [1, i + 1]$ of any prefix of $T$.

It follows that in $O (t_a \cdot m + d \log m))$ time we can build an array $J [1, m]$ such that $T [1, J [i]]$ has the longest common suffix with $P [1, i]$ of any prefix of $T$, for $i \leq m$, where $d \leq m$ is again the number of times we would fully or partially descend edges in the suffix tree of $T$ while finding the MEMs of $P$ with respect to $T$.

Because the prefixes of $\rev{T}$ are the suffixes of $T$, in $O (t_a \cdot m + d' \log m)$ time we can also build an array $J' [1, m]$ such that $T [J' [i], n]$ has the longest common prefix with $P [i, m]$ of any suffix of $T$, where $d'$ is the number of times we would fully or partially descend edges in the suffix tree of $\rev{T}$ while finding the MEMs of $\rev{P}$ with respect to $\rev{T}$.

We can find the ending position $e_1$ of the leftmost MEM $P [1, e_1]$ in $P$ in $O (t_a \cdot e_1)$ time, by comparing the characters in $P [1, m]$ and $T [J' [1], n]$ until we find a mismatch at $P [e_1 + 1]$.  Since $P [e_1 + 1]$ is in the second MEM $P [s_2, e_2]$ in left-to-right order, we can find that MEM's starting position $s_2$ in $O (t_a (e_1 - s_2 + 1))$ time by comparing the characters in $P [1, e_1 + 1]$ and $T [1, J [e_1 + 1]]$ from right to left until we find a mismatch at $P [s_2 - 1]$.

Since the second MEM in $P$ is the first MEM in $P [s_2, m]$, we can repeat this whole process to find the ending position of the second MEM and the starting position of the third MEM, and so on.  This step is similar to Li's~\cite{Li12} forward-backward algorithm and takes time proportional to $t_a$ times the total length of the MEMs (we could also use as a base an algorithm from~\cite{Gag24} and~\cite[Section 2.6]{10.1093/bioinformatics/bts280}, which is faster in practice).

With high probability, our algorithm works correctly and in $O (t_a \cdot m + (d + d') \log m)$ time plus time proportional to $t_a$ times the total length of the MEMs, assuming an adversary has not chosen $P$ on the basis of the hash function used in our z-fast tries.

\medskip

\begin{theorem}
\label{thm:offline_MEM-finding}
Suppose we already have $t_a$-time random access to a text $T[1,n]$ with Suffixient Array size $\chi$ whose reverse has Suffixient Array size $\chi'$.  Then we can store in $O (\chi + \chi')$ space two z-fast tries such such that with high probability we can find the MEMs of $P [1, m]$ with respect to $T$ in $O (t_a \cdot m + (d + d') \log m))$ time plus time proportional to $t_a$ times the total length of the MEMs, assuming an adversary has not chosen $P$ on the basis of the hash function used in our z-fast trie.
\end{theorem}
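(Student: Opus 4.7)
The plan is to decompose the proof into three stages: build an array $J$ that records, for each prefix of $P$, a position realizing the longest suffix-match against a prefix of $T$; build a symmetric array $J'$ on the reversed inputs; and finally extract all MEMs with a forward-backward scan over $J$, $J'$, and $T$. The space bound $O(\chi+\chi')$ is immediate: the two z-fast tries of Lemma~\ref{lem:z-fast} are built respectively on the sets $Z=\{T[1,j] : j\in\SuA\}$ and $Z'=\{\rev T[1,j] : j\in\SuA'\}$, whose cardinalities are $\chi$ and $\chi'$, and $t_a$-time random access to $\rev T$ follows from $t_a$-time random access to $T$ by index arithmetic (modulo the sentinel placement in the definition of $\rev T$).

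I would build $J$ by scanning $P$ left-to-right as in the excerpt: given $J[i]$, first attempt a one-character extension with the random-access oracle; if $T[J[i]+1]=P[i+1]$ set $J[i+1]\gets J[i]+1$, otherwise query the z-fast trie on $Z$ with $P[1,i+1]$ to obtain $J[i+1]$. Each iteration performs at most one random-access lookup of cost $t_a$, so the random-access contribution totals $O(t_a\cdot m)$. The key observation is that a trie query is issued only when the extension fails, and each such failure corresponds to leaving an edge in the suffix tree of $T$ along the path traced by $P$; by definition of $d$, this happens at most $d$ times, so the total trie-query cost is $O(d\log m)$ by Lemma~\ref{lem:z-fast}, yielding $O(t_a\cdot m + d\log m)$ for $J$. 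The symmetric construction of $J'$ by scanning $\rev P$ against $\rev T$ gives the $O(t_a\cdot m + d'\log m)$ bound.

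Given $J$ and $J'$, I extract the MEMs as in the excerpt. The ending $e_1$ of the leftmost MEM is found by comparing $P$ against $T[J'[1],n]$ via the random-access oracle until the first mismatch, in $O(t_a\cdot e_1)$ time; the mismatching character $P[e_1+1]$ lies inside the next MEM $P[s_2,e_2]$, whose start $s_2$ is recovered by scanning $P[1,e_1+1]$ against $T[1,J[e_1+1]]$ from right to left until a mismatch, in $O(t_a\cdot(e_1-s_2+1))$ time. Because the second MEM of $P$ is the leftmost MEM of $P[s_2,m]$, iterating this procedure returns all MEMs in time proportional to $t_a$ times their total length; the correctness of this stage is that of Li's~\cite{Li12} forward-backward algorithm and reduces to the correctness of $J$ and $J'$.

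The main obstacle is establishing that every z-fast trie query issued while building $J$ and $J'$ returns the correct answer. Lemma~\ref{lem:z-fast} guarantees correctness only when the query is a substring of some indexed string, i.e.\ a substring of $T$ (respectively $\rev T$), but the queries here are prefixes of $P$ (respectively $\rev P$), which need not occur in $T$. This is precisely the adversarial-hash issue of Remark~\ref{rem: adversarial}: under the non-adversarial hypothesis on $P$, the probability that a single query collides in a harmful way with a substring of $T$ through the trie's hash function is inverse polynomial in $n$, so by a union bound over the $O(m)$ queries all of them are answered correctly with high probability. Summing the three contributions yields the claimed running time of $O(t_a\cdot m + (d+d')\log m)$ plus $O(t_a)$ times the total MEM length.
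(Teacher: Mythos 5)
Your proposal is correct and follows essentially the same route as the paper, which presents this argument in the prose preceding the theorem: the same two z-fast tries on $Z$ and $Z'$, the same left-to-right construction of $J$ (extend by one character via random access, fall back to a trie query on failure, charging the at most $d$ queries to edge descents in the suffix tree), the symmetric construction of $J'$, and the same Li-style forward-backward extraction of MEMs in time proportional to $t_a$ times their total length. Your explicit union bound over the $O(m)$ hash-based queries is a small sharpening of the paper's appeal to the non-adversarial hypothesis, not a different approach.
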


\section{Characterization of Smallest Suffixient Sets}
\label{sec:small_suff}

We now move to the problems of computing Suffixient Arrays.
We start in this section
by characterizing  smallest suffixient sets. We first define the concept of \emph{supermaximal extensions} (Definition \ref{def:supermaximal extension}). 
Then, in Definition \ref{def:SSS} we define a set (actually, a family of sets) $\SSS$ containing all positions in $T$ ``capturing'' all (and only the) supermaximal extensions of $T$. Finally, in Lemma \ref{lem:SSS} we prove that $\SSS$ is a suffixient set of minimum cardinality.

\medskip
\begin{definition}\label{def:supermaximal extension}
    We say that $T[i,j]$ (with $j\geq i$) is a \emph{supermaximal extension} if $T[i,j-1]$ is right-maximal and, for each right-maximal $T[i',j'-1] \neq T[i,j-1]$ (with $i' \le j' \le n$), $T[i,j]$ is not a suffix of $T[i',j']$.
\end{definition}
\medskip

The following definition depends on a particular total order $<_t$ on $[n]$. We use $<_t$ to break ties between equivalent candidate positions of the suffixient set, choosing the position of maximum rank according to $<_t$ in case of a tie. In order to make notation lighter, we do not parameterize the set on the particular tie-breaking strategy ($<_t$ will always be clear from the context).

\medskip
\begin{definition}\label{def:SSS}
Let $<_t$ be any total order on $[n]$. We define a set $\SSS \subseteq [n]$ as follows: $x \in \SSS$ if and only if there exists a supermaximal extension $T[i,j]$ such that (i) $T[i,j]$ is a suffix of $T[1,x]$, and (ii) for all prefixes $T[1,y]$ suffixed by $T[i,j]$, if $y\neq x$ then $y <_t x$.
\end{definition}
\medskip

We prove that $\SSS$ is a suffixient set of minimum cardinality:

\medskip
\begin{lemma}\label{lem:SSS}
For any tie-breaking strategy (total order) $<_t$, $\SSS$ is a suffixient set of minimum cardinality $\chi$ for $T$. 
\end{lemma}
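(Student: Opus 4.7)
The plan is to prove the lemma in two parts: (A) $\SSS$ is suffixient, and (B) no suffixient set can be smaller than $|\SSS|$. Both rely on a simple but crucial observation about how supermaximal extensions can align as suffixes of prefixes of $T$.

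For part (A), given any one-character right-extension $\alpha c = T[i,j]$ of a right-maximal string $\alpha = T[i,j-1]$, I will construct a supermaximal extension $\beta c$ such that $\alpha c$ is a suffix of $\beta c$. Start from $\alpha$; as long as the current extension is not supermaximal, Definition~\ref{def:supermaximal extension} produces a strictly longer right-maximal $\alpha' \neq \alpha$ having the current right-maximal string as a proper suffix, and such that $\alpha' c$ still occurs in $T$. Since the length of a right-maximal string is bounded by $n$, this process terminates at some supermaximal extension $\beta c$ with $\alpha$ (hence $\alpha c$) as a suffix. By Definition~\ref{def:SSS}, there exists $x \in \SSS$ such that $\beta c$ (and therefore $\alpha c$) suffixes $T[1,x]$, matching exactly the condition of Definition~\ref{def:suffixient}.

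For part (B), the heart of the argument is the following uniqueness claim: \emph{no single prefix $T[1,x]$ can have two distinct supermaximal extensions as suffixes.} Indeed, if $\alpha_1 c_1$ and $\alpha_2 c_2$ both suffix $T[1,x]$, then $c_1 = c_2 = T[x]$ and both $\alpha_1$ and $\alpha_2$ are suffixes of $T[1,x-1]$, so one is a suffix of the other. If $\alpha_1$ were a proper suffix of $\alpha_2$, then $\alpha_2 c$ would be a right-extension of a right-maximal $\alpha_2 \neq \alpha_1$ with $\alpha_1 c$ as a proper suffix, contradicting the supermaximality of $\alpha_1 c$; the symmetric case is analogous. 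Hence $\alpha_1 = \alpha_2$.

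Using this claim, the map sending each $x \in \SSS$ to the supermaximal extension witnessing its membership in $\SSS$ is well defined (uniqueness) and injective. It is also surjective onto the set of all supermaximal extensions, since for each such extension $T[i,j]$ the $<_t$-maximum among the positions $y$ with $T[i,j]$ suffixing $T[1,y]$ belongs to $\SSS$ by Definition~\ref{def:SSS}. Thus $|\SSS|$ equals the number of distinct supermaximal extensions. Finally, any suffixient set $S$ must, by Definition~\ref{def:suffixient}, contain, for each supermaximal extension, at least one position witnessing it; by the uniqueness claim distinct supermaximal extensions require distinct witnesses, so $|S| \geq |\SSS|$, proving $\chi = |\SSS|$.

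The main obstacle is isolating and proving the uniqueness claim — it is what makes Definition~\ref{def:supermaximal extension} the ``right'' notion: supermaximality is exactly strong enough to rule out two distinct extensions co-ending at the same text position, and exactly weak enough to still cover all right-extensions via suffix chains. Everything else in the proof is a matter of combining this observation with the definitions.
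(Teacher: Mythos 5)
Your proposal is correct and follows essentially the same route as the paper: part (A) matches the paper's argument (the paper picks a maximum-length extension where you iterate, which is equivalent), and part (B) hinges on the same key observation that a prefix $T[1,x]$ can be suffixed by at most one supermaximal extension, which the paper uses to build injections $\SSS \rightarrow SE \rightarrow S$ exactly as you do. No gaps worth noting.
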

\begin{proof}
    To see that $\SSS$ is suffixient, consider any right-maximal substring $T[i,j-1]$ ($i \le j \le n$). We want to prove that there exists $x \in \SSS$ such that $T[i,j]$ is a suffix of $T[1,x]$.
    Let $T[1,x]$ be a prefix of $T$ being suffixed by $T[i,j]$, breaking ties (on its endpoint $x$) by $<_t$. 
    If $T[i,j]$ is a supermaximal extension, then by Definition \ref{def:SSS} it holds $x\in \SSS$ and we are done. 
    Otherwise, let $T[i',j'] \neq T[i,j]$ be a 
    one-character extension of a right-maximal string $T[i',j'-1]$
    such that $T[i,j]$ is a suffix of $T[i',j']$.
    Without loss of generality, let $T[i',j']$ be the string of maximum length $|T[i',j']| = j'-i' + 1$ with this property.
    Observe that there cannot be a right-extension $T[i'',j''] \neq T[i',j']$ of a right-maximal string $T[i'',j''-1]$ such that
    $T[i',j']$ is a suffix of 
    $T[i'',j'']$, otherwise 
    $T[i,j]$ would be a suffix of $T[i'',j'']$ and
    $|T[i'',j'']|>|T[i',j']|$, contradicting the fact that $T[i',j']$ is the longest such string.  
    It follows that $T[i',j']$ is a supermaximal extension.
    Let $T[1,y]$ be a prefix of $T$ being suffixed by $T[i',j']$, breaking ties (on $y$) by $<_t$.
    Then, by Definition \ref{def:SSS}, $y\in \SSS$. Also in this case we are done, since $T[i,j]$ suffixes $T[i',j']$ and $T[i',j']$ suffixes $T[1,y]$, so by transitivity of the suffix relation, $T[i,j]$ suffixes $T[1,y]$.

    To see that $\SSS$ is of minimum cardinality, let $S$ be a suffixient set. We prove $|\SSS| \le |S|$ by exhibiting an injective function from $\SSS$ to $S$.

    Let $SE \subseteq \Sigma^+$ be the set of all supermaximal extensions.
    By definition of supermaximal extension, note that any prefix $T[1,j]$ can be suffixed by at most \emph{one} supermaximal extension $T[i,j]$: assume, for a contradiction, that $T[1,j]$ is suffixed by two distinct supermaximal extensions $T[i,j] \neq T[i',j]$. Without loss of generality, $i'<i$. But then, $T[i,j]$ is a suffix of $T[i',j]$, hence $T[i,j]$ cannot be a supermaximal extension, a contradiction.

    We first define a relation $h : \SSS \rightarrow SE$ mapping every $x\in \SSS$ to a supermaximal extension  $h(x) = T[i,j]$ such that $T[i,j]$ is a suffix of $T[1,x]$.
    From the observation above, there is at most one such string $T[i,j]$. 
    From Definition \ref{def:SSS},  there exists one such $T[i,j]$.
    We conclude that $h$ is a function. We now prove that $h$ is injective.

    To see that $h$ is injective, assume for a contradiction $h(x) = h(y)$ for some $x\neq y$ with $x,y\in \SSS$. 
    Let $h(x) = h(y) = T[i,j]$. 
    By Definition \ref{def:SSS}, there exists a supermaximal extension 
    $T[i_x,x]$ such that 
    $T[1,x]$ is the prefix of $T$ being suffixed by $T[i_x,x]$ with largest endpoint $x$, according to the total order $<_t$.
    Following the same reasoning on position $y$, we can associate an analogous supermaximal extension $T[i_y,y]$ to $y$.
    On the other hand, by the definition of $h$, also string $h(x) = h(y) = T[i,j]$ is a supermaximal extension that suffixes both $T[1,x]$ and $T[1,y]$.
    Since $T[i,j]$ and $T[i_x,x]$ are supermaximal extensions that suffix $T[1,x]$ and (as proved above) there can exist at most one  supermaximal extensions suffixing $T[1,x]$, we conclude that $T[i,j] = T[i_x,x]$. Similarly, we conclude $T[i,j] = T[i_y,y]$, hence $T[i_x,x] = T[i_y,y]$.
    We obtained a contradiction, since Definition \ref{def:SSS} requires both $x <_t y$ and $y <_t x$ and $<_t$ is a total order. We conclude that $h$ is an injective function. 

    Next, we define a relation $g : SE \rightarrow S$ from the set $SE$ of supermaximal extensions to positions in $S$. For any supermaximal extension $T[i,j] \in SE$, we define $g(T[i,j]) = x$ to be the largest element of $S$ (according to the standard order between integers) such that $T[1,x]$ is suffixed by $T[i,j]$. 
    Such a position must exist since $S$ is a suffixient set, so $g$ is a function. 
    To see that $g$ is also injective, assume for a contradiction $g(T[i,j]) = g(T[i',j']) = x$, for $T[i,j] \neq T[i',j']$ (both supermaximal extensions).  
    By the definition of $g$, both $T[i,j]$ and $T[i',j']$ are suffixes of $T[1,x]$. But then, since $T[i,j] \neq T[i',j']$, either $T[i,j]$ is a suffix of $T[i',j']$ (so $T[i,j]$ is not a supermaximal extension) or the other way round (so $T[i',j']$ is not a supermaximal extension). In both cases we get a contradiction, so we conclude that $g$ is injective. 

    We conclude the proof by observing that $g \circ h : \SSS \rightarrow S$ is the composition of two injective functions and is therefore injective. 
\end{proof}

\section{Computing Suffixient Arrays}\label{sec:smallest-SuffixientSet-comp}

To make notation lighter, in this section we focus on the core problem of computing smallest suffixient sets, i.e. (unsorted) sets $\{x\ :\ x\in \SuA\}$. 
By their nature, it will be immediate to augment our algorithms so that they output also the Prefix Array rank of every $x\in \SuA$; the Suffixient array $\SuA$ can then be obtained by sorting the positions $x\in \SuA$ according to those ranks.

We start in Subsection \ref{sec:quadratic} by giving an ``operative'' version of Definition \ref{def:SSS}. This ``operative'' definition will automatically yield a simple quadratic-time algorithm for computing a smallest suffixient set. In the next subsections, we will optimize this algorithm and reach compressed working space (Subsection \ref{sec:one pass}) and optimal linear time (Subsections \ref{sec:linear time} and \ref{sec: linear time 2}).

\subsection{A simple quadratic-time algorithm}\label{sec:quadratic}

For brevity, in the rest of the paper $\LCP$, $\BWT$, $\SA$ denote $\LCP(\rev{T})$, \\ $\BWT(\rev{T})$, and $\SA(\rev{T})$, respectively. 

First, the starting and ending positions of runs on $\BWT$ are of our particular interest:
\medskip
\begin{definition}[$c$-run break]
We say that position $1 < i \le n$ is a \emph{$c$-run break}, for $c\in\Sigma$, if $\BWT[i-1,i] = ac$ or $\BWT[i-1,i] = ca$, with $a\in \Sigma$ and $a\neq c$. 
\end{definition}
\medskip

Note that, if $i$ is a $c$-run break, then it is also an $a$-run break for some $a\neq c$.

\medskip
\begin{definition}\label{def:box}
    Given a position $i \in [n]$, we define $box(i) = [\ell,r]$ to be the maximal interval such that $i\in [\ell,r]$ and $\LCP[j] \geq \LCP[i]$ for all $j\in [\ell,r]$.
\end{definition}
\medskip

\begin{example}
See Figure \ref{fig:run-example}: $box(10) = [3,13]$ (shown with a blue box), because $\LCP[3,13] \geq \LCP[10] = 1$, and $box(19) = [17,20]$ (shown with an orange box).
\end{example}
\medskip

Next, we define the set of $c$-run breaks in $box(i)$.

\medskip
\begin{definition}\label{def:Box}
    Given $i \in [n]$ and $c\in \Sigma$, we define 
    $$
    B_{i,c} = \{ j\ :\ j\ \mathrm{is\ a\ }c\mathrm{-run\ break\ in\ } \BWT(\rev{T})\mathrm{\ and\ }j \in box(i) \}.
    $$
\end{definition}
\medskip

\begin{example}
In Figure \ref{fig:run-example}, we have $B_{10,A} = \{3, 6, 7, 10, 11, 12\}$.
\end{example}
\medskip

We indicate with $text(i) = n - \SA[i] + 1$ the position in $T$ corresponding to character $\BWT[i]$, and with $bwt(j) = \SA^{-1}[n + 1 - j]$ the inverse of function $text()$, i.e. the position in $\BWT$ corresponding to character $T[j]$.

The following Lemma \ref{lem:link LCP - supermaximal} stands at the core of all our algorithms for computing a smallest suffixient set. Intuitively, Lemma \ref{lem:link LCP - supermaximal} states that supermaximal extensions can be identified by looking at local maxima among $c$-run breaks contained in $\LCP$ ranges of the form $\LCP[box(i)]$. Since, by Definition \ref{def:SSS}, supermaximal extensions are those characterizing suffixient sets of smallest cardinality, Lemma \ref{lem:link LCP - supermaximal} gives us a tool for computing such a set using the arrays $\BWT$, $\LCP$, and $\SA$. See Figure \ref{fig:run-example}.

\medskip
\begin{lemma}\label{lem:link LCP - supermaximal}
    The following hold:
    \begin{enumerate}
        \item    Let $i$ be a $c$-run break and $i' \in \{i,i-1\}$ be such that $\BWT[i']=c$. Let $j' = text(i')$, and let $\ell = \LCP[i]$.
        If $\ell = \max \LCP[B_{i,c}]$, then the string $T[j'-\ell,j']$ is a supermaximal extension, and
        \item conversely, for any supermaximal extension $\alpha\cdot c$ ($\alpha \in \Sigma^*$, $c \in \Sigma$) there exists an occurrence $\alpha\cdot c = T[j'-\ell,j']$ and a $c$-run break $i\in [n]$ such that $\BWT[i']=c$, 
        with $i' = bwt(j') \in \{i,i-1\}$  and $\ell = \LCP[i] = \max \LCP[B_{i,c}]$.
    \end{enumerate}
\end{lemma}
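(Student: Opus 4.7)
The plan is to work entirely in the ``dictionary'' that translates the $\rev T$-based arrays back into statements about $T$. Under this dictionary, the $i$-th sorted suffix of $\rev T$ corresponds to the prefix $T[1, text(i)-1]$ in co-lexicographic order, $\BWT[i] = T[text(i)]$ records the character \emph{following} that prefix in $T$, and $\LCP[i]$ measures the length of the common suffix shared by the two co-lex adjacent prefixes at indices $i-1$ and $i$. Consequently, $box(i)$ is precisely the maximal contiguous range of co-lex-sorted prefixes of $T$ all sharing a common suffix of length exactly $\LCP[i]$, and a $c$-run break at $i'' \in box(i)$ witnesses a right-maximal substring of $T$ (of length $\LCP[i'']$) that admits $c$ as a right-extension.

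For Part 1, assume $\ell = \LCP[i] = \max \LCP[B_{i,c}]$. Let $\alpha = T[j'-\ell, j'-1]$, so that $T[j'-\ell, j'] = \alpha c$, using $T[j'] = \BWT[i'] = c$. By the dictionary, $\alpha$ is right-maximal since the adjacent BWT entries at $i-1$ and $i$ provide two distinct extension characters, $c$ and some $a \neq c$. To show $\alpha c$ is a supermaximal extension, I argue by contradiction: a longer right-maximal $\alpha' = \gamma\alpha$ with $|\gamma|\ge 1$ for which $\alpha' c$ occurs in $T$ would have its own BWT range strictly nested inside $box(i)$; right-maximality of $\alpha'$ with $c$ among its right-extensions then forces a $c$-run break $i''$ strictly internal to that nested range, which therefore satisfies $\LCP[i''] \ge |\alpha'| > \ell$. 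But $i'' \in B_{i,c}$, contradicting maximality.

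For Part 2, I start with a supermaximal extension $\alpha c$ of length $\ell+1$ and examine the BWT sub-range of prefixes of $T$ ending in $\alpha$. Right-maximality of $\alpha$ guarantees this sub-range contains both $c$ and non-$c$ BWT values, hence at least one maximal $c$-block. I pick an occurrence $\alpha c = T[j'-\ell, j']$ such that $i' = bwt(j')$ sits at a boundary of such a $c$-block; this yields a $c$-run break $i$ with $i' \in \{i-1, i\}$ and $\BWT[i'] = c$. To verify $\LCP[i] = \ell$: the ``$\ge$'' direction is immediate since both indices $i-1, i$ lie in the $\alpha$-range, while $\LCP[i] > \ell$ would expose a longer right-maximal extension of $\alpha$ admitting $c$, contradicting supermaximality of $\alpha c$. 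The same argument applied to any $i'' \in B_{i,c}$ with $\LCP[i''] > \ell$ yields the same contradiction, so $\max \LCP[B_{i,c}] = \ell$.

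The main obstacle is stating and proving cleanly the translation ``$c$-run break in $box(i)$ at LCP-value $\ell'$ $\Leftrightarrow$ right-maximal substring of length $\ell'$ whose last $\ell$ characters coincide with $\alpha$ and which admits $c$ as a right-extension.'' Two points need care: the orientation flip (since $\BWT(\rev T)$ encodes right-extensions in $T$, not left), and ensuring that the $c$-run break produced by $\alpha'$ lies strictly \emph{inside}, rather than at the boundary of, the BWT sub-range of $\alpha'$, so that its $\LCP$ value is genuinely $\ge |\alpha'|$ rather than only $\ge |\alpha|$. Once this correspondence is established as a preliminary observation, both directions of Lemma \ref{lem:link LCP - supermaximal} become symmetric instantiations of it.
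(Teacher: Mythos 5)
Your proposal is correct and follows essentially the same route as the paper's proof: part 1 derives right-maximality of $\alpha$ from the run break and then refutes non-supermaximality by producing a $c$-run break inside $box(i)$ with $\LCP$ value exceeding $\ell$, and part 2 places the chosen occurrence at a $c$-block boundary and rules out both $\LCP[i] > \ell$ and a larger value in $\LCP[B_{i,c}]$ by contradiction with supermaximality. The two technical points you flag (the orientation of $\BWT(\rev T)$ and keeping the new run break strictly interior to the nested co-lex range so that its $\LCP$ value is at least $|\alpha'|$) are exactly the ones the paper's argument also has to handle, so nothing essential is missing.
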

\begin{proof}

\emph{(1)} 
    Let $i$ be a $c$-run break. Let $i',i'' \in \{i,i-1\}$ be such that $\BWT[i']=c$ and $\BWT[i''] = a \neq c$. Let $j' = text(i')$ and $j'' = text(i'')$. Let moreover $\ell = \LCP[i]$, and assume $\ell = \max \LCP[B_{i,c}]$. 

    Note that $T[j'-\ell,j'-1]$ is right-maximal: $\ell = \LCP[i]$ implies that $T[j'-\ell,j'-1] = T[j''-\ell,j''-1]$ and, by definition of $i', i'', j'$, and $j''$ we have that $T[j'] = \BWT[i'] \neq \BWT[i''] = T[j'']$.
    
    Assume, for a contradiction, that $T[j'-\ell,j']$ is not a supermaximal extension. Then, since $T[j'-\ell,j'-1]$ is right-maximal, it must be the case that $T[j'-\ell,j']$ is a suffix of some string $T[\hat j-\hat \ell,\hat j]$ with $\hat \ell > \ell$ and such that $T[\hat j-\hat \ell,\hat j-1]$ is right-maximal. 
    Let $\hat i = bwt(\hat j)$.
    Since $T[j'-\ell,j']$ is a suffix of $T[\hat j-\hat \ell,\hat j]$ 
    we have that $T[j'-\ell,j'-1]$ is a suffix of $T[\hat j-\hat \ell,\hat j-1]$, hence it must be the case that $\hat i \in box(i)$. 
    Since $T[\hat j-\hat \ell,\hat j-1]$ is right-maximal and $T[\hat j] = c$, 
    without loss of generality we can assume that $\hat i \in \{k-1,k\}$, where 
    $k$ is a $c$-run break (such a $c$-run break must exist in $box(i)$) and
    $\LCP[k] \geq \hat\ell$. In particular, $k \in box(i)$.
    Then,
    $\LCP[k] \geq \hat \ell > \ell$ and $k\in box(i)$ yield $\max \LCP[B_{i,c}] \geq \hat \ell > \ell$, a contradiction. 
    
    \emph{(2)}
    Let $T[j'-\ell,j']$ be a supermaximal extension and let $c = T[j']$. 
    Let $i' = bwt(j')$.    
    Since $T[j'-\ell,j'-1]$ is right-maximal, we can assume without loss of generality that either $\BWT[i'] \neq \BWT[i'-1]$ or  $\BWT[i'] \neq \BWT[i'+1]$. Let therefore $i' \in \{i,i-1\}$, where $i$ is a $c$-run break.
    By definition of $T[j'-\ell,j']$, it holds $\ell = \LCP[i]$ (because $T[j'-\ell,j'-1]$ is the longest right-maximal string suffixing $T[1,j'-1]$). 
    We need to prove that $\ell = \max \LCP[B_{i,c}]$.
    Assume, for a contradiction, that $\ell < \hat \ell = \max \LCP[B_{i,c}]$. 
    This means that there exists a supermaximal extension $T[\hat j-\hat \ell,\hat j]$ with 
    (i) $T[\hat j] = c$, 
    (ii) $\hat i \in \{k-1,k\}$, where $\hat i = bwt(\hat j)$ and
    $k \in B_{i,c}$ is a $c$-run break in $box(i)$, and
    (iii) $\LCP[k] = \hat \ell$.
    But then, since $k\in box(i)$ and $T[\hat j-\hat \ell,\hat j-1]$ is a right-maximal string of length $\hat\ell > \ell$, we have that $T[j'-\ell,j'-1]$ is a (proper) suffix of $T[\hat j-\hat \ell,\hat j-1]$. This fact and $T[\hat j]=c$  contradict the fact that $T[j'-\ell,j']$ is a supermaximal extension.
\end{proof}

\begin{figure}[!htb]
	\centering
		\includegraphics[scale=0.73]{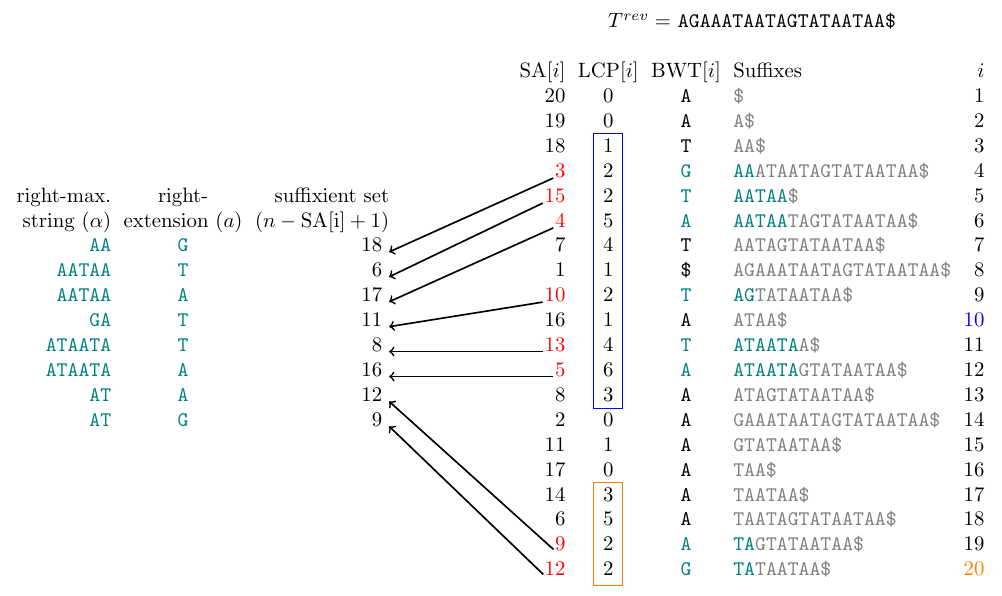}
	 \caption{
The figure shows how to construct a smallest suffixient set $\SSS$ for a text $T[1,n]$ with $n=20$ following Lemma \ref{lem:link LCP - supermaximal}. 
\emph{To the left of the black arrows}: all supermaximal extensions $\alpha\cdot a$ of $T$ and their selected ending positions in $T$, forming a suffixient set. 
\emph{To the right of the black arrows}: 
$\SA$, $\LCP$, $\BWT$ and the sorted suffixes of $\rev{T}$.
In column $\SA[i]$, we highlight in red all positions that are selected to be included in $\SSS$.
In columns \emph{Suffixes} and $\BWT[i]$, we highlight in green the (reverses of) the  supermaximal extensions of $T$.
Black arrows show how the selected $\SA$ positions are converted to positions in $T$ using the formula $n-\SA[i]+1$.
\emph{How to identify positions of $\SSS$}:
for each $c$-run break $i$, we decide if the two ranks $i'\in \{i-1,i\}$ should contribute to $\mathcal{S}$ (i.e. if they correspond to a supermaximal extension) as described in Lemma \ref{lem:link LCP - supermaximal}. 
For brevity, we show this decisional procedure only on two run breaks.
Consider the $\tt A$-run break at position $i=10$ (highlighted in blue in column $i$). 
The blue box depicts the corresponding \LCP\ 
interval, $\LCP[box(10)] = \LCP[3,13]$.
We observe that in $[3,13]$ there are other $\tt A$-run breaks $i''$, such that $\LCP[i''] > \LCP[10]$: those are $i'' = 6, 7, 11, 12$. We conclude that text position $n - \SA[10] + 1 = 5$ should not be included in $\mathcal{S}$.
Consider now the $\tt A$-run break $i=20$, highlighted in orange in column $i$.
Position $i'=20-1 = 19$ is such that $\BWT[i'] = \tt A$. 
The orange box depicts the corresponding \LCP\ interval, $\LCP[box(20)] = \LCP[17,20]$.
In this case, there is no other $\tt A$-run break in $[17,20]$ with an $\LCP$  value larger than $\LCP[20]=2$. 
We therefore insert text position $n - SA[i'] + 1 = n - SA[19] + 1 = 12$ in $\SSS$. Notice that $i=20$ is also a $\tt G$-run break; repeating the above reasoning, one can verify that position $i'=20$ is indeed associated with the supermaximal extension $\tt AT\cdot G$ ending in text position $n-SA[20]+1=9$. 
  }\label{fig:run-example}
\end{figure}

We immediately obtain a simple quadratic algorithm computing a suffixient set of smallest cardinality: see Algorithm \ref{alg:quadratic}.
Note that, in Lemma \ref{lem:link LCP - supermaximal}, if $i''\in B_{i,c}$ and $\LCP[i''] = \LCP[i]$, then $box(i) = box(i'')$. 
Since there could be multiple $c$-run breaks $i'' \in B_{i,c}$ such that $\LCP[i''] = \LCP[i]$, if $\LCP[i] = \LCP[i''] = \max \LCP[B_{i,c}]$ then we have to break ties and insert only one of the corresponding text positions in the suffixient set. In Line \ref{line:quadratic.max_run_break} of Algorithm \ref{alg:quadratic} we choose the largest such $c$-run break: as a matter of fact, this defines a particular tie-breaking strategy $<_t$ between positions $[n]$ (as required by Definition \ref{def:SSS}).

\SetKwComment{Comment}{/* }{ */}
\RestyleAlgo{ruled}
\begin{algorithm}[h!t]
\caption{Quadratic algorithm computing a smallest suffixient set}\label{alg:quadratic}
    \LinesNumbered
    \SetKwInOut{Input}{input}
    \SetKwInOut{Output}{output}
    \Input{A text $T[1,n] \in \Sigma^n$}
    \Output{A suffixient set $\SSS$ of smallest cardinality for $T$.}
{$\BWT \gets \BWT(\rev{T})$;
$\LCP \gets \LCP(\rev{T})$;
$\SA \gets \SA(\rev{T})$\;}\label{line:quadratic.compute arrays}
$\SSS \gets \emptyset$\;
\For{$i = 2, \dots, n$}{
  \If{$\BWT[i-1] \neq \BWT[i]$\label{line:quadratic.check run break}}{
    \For{$i'\in\{i-1, i\}$}{
       $c \gets \BWT[i']$\ \Comment*[r]{$i$ is a $c$-run break}
       \If{$i = \max\{i''\ :\ i''\in B_{i,c} \wedge \LCP[i''] = \max \LCP[B_{i,c}]\}$\label{line:quadratic.max_run_break}}{ 
          $\SSS \gets \SSS \cup \{n-\SA[i']+1\}\;$
       }
    } 
  }
}
\Return $\SSS$\;
\end{algorithm}

Computing $\BWT$, $\LCP$, and $\SA$ in Line \ref{line:quadratic.compute arrays} takes $O(n)$ time. 
Additionally, for each $i = 2, \dots, n$, Algorithm \ref{alg:quadratic} computes the set $B_{i,c}$ in line \ref{line:quadratic.max_run_break} by scanning $\LCP[\dots,i-1, i, i+1, \dots]$ in order to identify $\LCP[box(i)]$ ($O(n)$ time for each such scan). It follows that the total running time is bounded by $O(n^2)$. Correctness follows immediately from Lemma \ref{lem:link LCP - supermaximal}, Definition \ref{def:SSS}, and Lemma \ref{lem:SSS}.

\subsection{A one-pass algorithm}\label{sec:one pass}

In this section, we speed up the simple quadratic algorithm provided in the previous section. We will refer to this version as "one-pass" since it only requires one scan of the $\BWT$, $\SA$, and $\LCP$ arrays for $\rev T$. 
The algorithm is summarized in Algorithms \ref{alg:one-pass} and \ref{alg:eval-cand}. See below for a description.

\SetKwComment{Comment}{/* }{ */}
\RestyleAlgo{ruled}

\begin{algorithm}[ht]
	\caption{One-pass algorithm building a smallest suffixient set}\label{alg:one-pass}
	\LinesNumbered
	\SetKwInOut{Input}{input}
	\SetKwInOut{Output}{output}
	\Input{A text $T[1, n]$ over a finite alphabet $\Sigma$.}
	\Output{A suffixient set $\SSS$ of smallest cardinality for $T$.}
	{$\mathcal{S} \gets \emptyset$}\ \Comment*[r]{initialize the empty suffixient set $\mathcal{S}$}
	{$\BWT\ \gets \textrm{BWT}(\rev{T})$;
		$\LCP\ \gets \textrm{LCP}(\rev{T})$;
		$\SA\ \gets \textrm{SA}(\rev{T})$\;} 
	$R[1,\sigma] \gets ((-1,0,false), \dots, (-1,0,false))$\ \Comment*[r]{$\LCP$ local maxima}
	$m \gets \infty$\Comment*[r]{$\LCP$ minimum inside current BWT run}
	
	\For{$i = 2, \dots, n$}{

            $m \gets \min\{m, \LCP[i]\}$\;
		
		\If{$\BWT[i] \neq \BWT[i-1]$}{
			$eval(\Sigma,m,R,\mathcal{S})$\;\label{alg:one-pass line eval}
			\For{$i'\in \{i-1,i\}$}{
				\If{$\LCP[i]>R[\BWT[i']].len$ \label{alg:one-pass line check}}{
					$R[\BWT[i']] \gets (\LCP[i], n-\SA[i']+1,true)$\;\label{alg:one-pass line activate}
				}
			}
			$m \gets \infty$\;
		}
		
	}
	
	$eval(\Sigma,-1,R,\mathcal{S})$\ \Comment*[r]{evaluate last active candidates}
	\Return $\SSS$\;
\end{algorithm}

\begin{algorithm}[ht]
\caption{Procedure $eval(C,l,R,\SSS)$}\label{alg:eval-cand}
    \LinesNumbered
    \SetKwInOut{Input}{input}
    \SetKwInOut{Output}{output}
    \Input{A set of characters $C$, a LCP value $l$, the candidate's list $R$ and the suffixient set $\mathcal{S}$.}
    \Output{The updated suffixient set $\mathcal{S}$.}

    \ForEach{$c \in C$}{
        \If{$l < R[c].len$}
        {
            \If{$R[c].active$}
            {
                $\mathcal{S} \gets \mathcal{S} \cup \{R[c].pos\}$\;
            }
            $R[c] \gets \{l,0,false\}$\;\label{alg:eval-cand len decr}
        }
    }

\end{algorithm}

Observe that Algorithm \ref{alg:quadratic} runs in quadratic time due to the fact that it needs to scan  $\LCP[box(i)]$ for each run break $i$, and those boxes overlap.
Intuitively, Algorithm \ref{alg:one-pass} avoids this inefficiency by storing information about local LCP maxima on $c$-run breaks (for each $c\in \Sigma$) in a  data structure $R$ (Line 3). Intuitively, this allows us to detect if a previous box $box(j)$, with $j<i$, ends in position $i$ by simply checking if $\LCP[j]$ drops below the current LCP maxima associated with each character $c$.

Assume we scanned the arrays $\BWT$, $\SA$, and $\LCP$ up to position $i$. $R$ is a map associating  each $c \in \Sigma$ with  information related with some $c$-run break $j \le i$ corresponding to a candidate 
supermaximal extension $\alpha\cdot c$. 
Each entry $R[c] = (len, pos, active)$ is composed of three values: $R[c].len = \LCP[j]$ (i.e., the length of the right-maximal string $\alpha$), $R[c].pos = text(j') = n - \SA[j'] +1$, where $j'\in \{j-1,j\}$ is such that $\BWT[j']=c$ (i.e.\ the position of the last character of the supermaximal extension $\alpha\cdot c$ in $T$), and a boolean flag $R[c].active$ which is set to $true$ if and only if $R[c].len$ is the maximum value in $\LCP[B_{j,c}\cap[i]]$
(breaking ties by smaller $j$: we keep the first encountered such maximum). 
Depending on the value of flag $active$, we will distinguish between {\em active} and {\em inactive candidates}.

\medskip
\begin{example}\label{ex:ex1}
    See Figure \ref{fig:run-example} and assume we scanned \BWT\ up to position $i=4$. The candidate associated with letter $\tt T$ is $R[{\tt T}] = (2,3,true)$, where $R[{\tt T}].len = \LCP[4]=2$ and $R[{\tt T}].pos=n-SA[3]+1=3$. This candidate is active since $\LCP[4]$ is the largest value in $\LCP[B_{4,{\tt T}}$ $\cap [4]] = \{2\}$. When processing position $i=6$, we find another $\tt T$-run break. Here we update $R[{\tt T}] = (\LCP[i],n-SA[i']+1,true)=(5,6,true)$ where $i'=i-1=5$ since $6 \in B_{4,{\tt T}} \cap [6]$, and $5=\LCP[6] > \LCP[4]=2$.
\end{example}
\medskip

Algorithm \ref{alg:one-pass} works as follows.
We scan the $\BWT$ left-to-right: $\BWT[2]$, $\dots$, $\BWT[i]$ (for $i=2, \dots, n$).
For $c\in \Sigma$, let $j<i$ be the $c$-run break 
whose information is stored in $R[c]$, i.e. 
such that $\LCP[j] = R[c].len$. 
If $\LCP[i] < R[c].len$, then $i \notin box(j)$. But then, if $R[c].active=true$, position $R[c].pos$ must belong to the output set so we have to insert it in $\SSS$. We also set $R[c].active = false$ in order to record that the maximum in $\LCP[B_{j,c}]$ has been found (if $R[c].active$ is already equal to $false$, $\SSS$ does not need to be updated since the maximum in $\LCP[B_{j,c}]$ had  already been found).
Observe that these operations, performed by the call to $eval(\Sigma, m, R, \SSS)$ (lines 8,17) (see below for the meaning of variable $m$), have to be repeated for each $c\in \Sigma$; however, since $i$ can possibly replace the candidate in $R[c]$ only if $i$ is a $c$-run break, we can avoid calling $eval(\cdot)$ inside equal-letter $\BWT$ runs as follows: we  compute the minimum $\LCP$ value $m$ inside $\LCP$ intervals corresponding to $\BWT$ equal-letter runs (line 6), and perform the check $m < R[c].len$ for each $c\in \Sigma$ by calling procedure $eval(\Sigma, m, R, \SSS)$ (costing time $O(\sigma)$) only when the current position $i$ is a run break. This ultimately reduces the algorithm's running time from $O(n\cdot \sigma)$ to $O(n + \bar r\cdot \sigma)$.

On the other hand, if $i$ is a $c$-run break and $\LCP[i] > R[c].len$, then $i \in B_{j,c} \cap [i]$ and $R[c].len$ is not a local maximum 
in $\LCP[B_{j,c}]$ (line 10). In this case, we have to replace the candidate stored in $R[c]$ with the information associated with the new candidate $i$: 
letting $i'\in \{i-1,i\}$ be such that $\BWT[i'] = c$, we replace 
$R[c] \gets (\LCP[i], n-\SA[i']+1, true)$ (line 11).

\medskip
\begin{example}
    Continuing Example \ref{ex:ex1}. When processing the $\tt T$-run break $i=7$, we see that $\LCP[7] < R[{\tt T}].len$. Thus, we reached the end of $box(6) = [6,6]$, where max $\LCP[B_{6,{\tt T}}] = R[{\tt T}].len = 5$, and insert $R[{\tt T}].pos$ 
    in $\mathcal{S}$. Next, we update $R[{\tt T}] = (4,0,false)$.
    On $i=8$, we again update $R[{\tt T}] = (1,0,false)$. Finally, on $i=9$, we get $\LCP[9] > R[{\tt T}].len$ since $box(6)$ and $box(9)$ are now disjoint. We update $R[{\tt T}] = (\LCP[9],11,true)$ to the active state.
\end{example}
\medskip

We obtain:

\medskip
\begin{restatable}{lemma}{OnepassProof}\label{lem:one-pass-analysis}
    Given a text $T[1,n]$ over alphabet of size $\sigma$, Algorithm \ref{alg:one-pass} computes a smallest suffixient set $\mathcal{S}$ in $O(n + \bar r\cdot\sigma)$ time and $O(n)$ words of space. 
\end{restatable}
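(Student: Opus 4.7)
The plan is to prove the three claims — correctness, $O(n+\bar r\cdot\sigma)$ time, and $O(n)$ space — separately, with correctness resting on a loop invariant that ties the state of $R$ to the characterization of supermaximal extensions in Lemma \ref{lem:link LCP - supermaximal}. After each iteration $i$ of the outer loop, the invariant I would maintain is: for each $c\in\Sigma$, $R[c]$ holds the earliest "still open" $c$-run break $j\le i$, meaning (i) $\LCP[j]=\max\LCP[B_{j,c}\cap[i]]$, (ii) $box(j)$ has not yet closed up to position $i$ (no $k\in(j,i]$ satisfies $\LCP[k]<\LCP[j]$), and (iii) $R[c].pos=n-\SA[j^*]+1$ for some $j^*\in\{j-1,j\}$ with $\BWT[j^*]=c$; the flag $R[c].active$ is true exactly when $j$ has not been replaced by a strictly larger-$\LCP$ $c$-run break since it was stored. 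The running variable $m$ stores the minimum of $\LCP$ over positions since the last run break, which is exactly what is needed at a run break to decide whether any open candidate's box has closed.

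With this invariant, correctness follows by matching insertions into $\SSS$ to Lemma \ref{lem:link LCP - supermaximal} in both directions. Every insertion into $\SSS$ is triggered by an $eval$ call that observes $m<R[c].len$ while $R[c]$ is active, and the invariant then certifies that the corresponding $c$-run break achieves $\LCP[j]=\max\LCP[B_{j,c}]$, so by Lemma \ref{lem:link LCP - supermaximal}(1) the inserted position indeed corresponds to a supermaximal extension. Conversely, for every supermaximal extension, Lemma \ref{lem:link LCP - supermaximal}(2) furnishes a $c$-run break with the required maximality property; the strict inequality $>$ in line \ref{alg:one-pass line check} ensures that $R[c]$ retains the earliest such maximizer throughout its box, so exactly one text position per supermaximal extension is inserted by the final $eval$ closing the box. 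This "earliest maximizer" tie-breaking is a legitimate total order $<_t$ in the sense of Definition \ref{def:SSS}, so by Lemma \ref{lem:SSS} the output $\SSS$ is a suffixient set of minimum cardinality $\chi$.

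For the time bound, the outer \texttt{for} loop performs $O(1)$ work per iteration outside the branch guarded by the run-break test; the branch — whose only superlinear step is the $O(\sigma)$ call to $eval$ — is entered only at the $\bar r$ run breaks, contributing $O(\bar r\cdot\sigma)$ overall, plus one final $O(\sigma)$ $eval$ call and an $O(n)$-time preprocessing of $\BWT(\rev T)$, $\LCP(\rev T)$, and $\SA(\rev T)$. The space is $O(n)$ because the three arrays are $O(n)$ words, $R$ takes $O(\sigma)\subseteq O(n)$ words, and $|\SSS|\le\chi\le 2\bar r\le n$. I expect the main obstacle to be verifying that tracking only the per-run minimum $m$ (rather than inspecting every intermediate $\LCP$ value) is enough to correctly close every open candidate's box: this reduces to arguing that within a single BWT run no individual $\LCP$ value can drop below $R[c].len$ without also pulling $m$ below it, and that all potentially-affected characters $c$ are handled by the single call to $eval(\Sigma,m,R,\SSS)$ performed at the end of the run.
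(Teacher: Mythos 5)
Your overall plan (a per-iteration invariant on $R$, then matching insertions against Lemma \ref{lem:link LCP - supermaximal} in both directions) is sound and close in spirit to the paper's proof, which instead establishes output-equivalence with the quadratic Algorithm \ref{alg:quadratic} by a case analysis on each candidate position. However, your invariant as stated has a genuine gap: it only describes $R[c]$ in the \emph{active} state and says nothing about the meaning of $R[c].len$ when $R[c]$ is inactive. This is not cosmetic. When $eval(\cdot)$ closes a box it does not merely clear the flag: Line \ref{alg:eval-cand len decr} of Algorithm \ref{alg:eval-cand} overwrites $R[c].len$ with the argument $l$ (the running minimum $m$), and later calls keep lowering it. This inactive ``floor'' is exactly what the guard $\LCP[i]>R[c].len$ on Line \ref{alg:one-pass line check} tests when the next $c$-run break $i$ arrives, and it is the only mechanism preventing the activation of a $c$-run break $i$ that is dominated by an earlier $c$-run break $j$ with $j\in box(i)$ and $\LCP[j]>\LCP[i]$ (in that situation one checks that $R[c].len\in[\LCP[i],\LCP[j])$, so the guard correctly fails). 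The invariant you need is the unified statement that, after the $eval$ call at position $i$, $R[c].len=\max\{\min\LCP[j,i]\ :\ j\le i \text{ is a } c\text{-run break}\}$ (or $-1$ if none exists), with $R[c]$ active exactly when this maximum is attained by a $j$ whose box is still open and whose position has not been reported. Without the inactive clause, your converse direction --- that the maximizer furnished by Lemma \ref{lem:link LCP - supermaximal}(2) is actually \emph{registered} as an active candidate, and that dominated run breaks are not --- does not go through.

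Two smaller points. Your characterization of the flag (``active is true exactly when $j$ has not been replaced by a strictly larger-$\LCP$ $c$-run break'') is not the right semantics: replacement by a larger-$\LCP$ break in the same box leaves the flag true (for the new candidate); the flag becomes false precisely when the stored candidate's box closes and its position is reported (or before any $c$-run break has been activated). The obstacle you do flag --- whether the per-run minimum $m$ suffices --- is real but comparatively easy to discharge: inside an equal-letter run no position is a run break for any character, so no candidate can be activated or replaced there, and by induction the box of an active candidate $j$ closes at the first run break $i_1$ with $\min\LCP[j+1,i_1]<\LCP[j]$ if and only if the accumulated $m$ at $i_1$ drops below $R[c].len$; the final call $eval(\Sigma,-1,R,\mathcal{S})$ handles candidates whose boxes extend to $n$. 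Your time and space accounting is correct and matches the paper's.
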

\begin{proof}
    Algorithm \ref{alg:one-pass} scans the \BWT, \LCP, and \SA\ exactly once and, for each run break, checks if any of the $\sigma$ candidates is a supermaximal extension using Algorithm \ref{alg:eval-cand}. Since Algorithm \ref{alg:eval-cand} performs a constant number of operations for each $c \in \Sigma$, the whole algorithm runs in $O(n + \bar r\cdot \sigma)$ time. In addition, the only supplementary data structure we need is $R$, which takes $O(\sigma)$ words of space; thus if $\sigma < n$, altogether, we take $O(n)$ words of space.

    We prove the correctness of Algorithm \ref{alg:one-pass} by showing it computes the same output of Algorithm \ref{alg:quadratic}. In particular, given $\mathcal{S'}$, the output of Algorithm \ref{alg:one-pass}, and $\mathcal{S}$, the output of Algorithm \ref{alg:quadratic}, we show that given any value $s = n - \SA[i] + 1$, where $i \in [n]$, \emph{(1)} $s \in \mathcal{S} \implies  s \in \mathcal{S'}$ and \emph{(2)} $s \notin \mathcal{S} \implies  s \notin \mathcal{S'}$.

    \emph{(1)} let $(n - \SA[i'] + 1) \in \mathcal{S}$, where $i$ is a $c$-run break, such that $i' \in \{i-1,i\}$ and $\BWT[i'] = c$. By Lemma \ref{lem:link LCP - supermaximal}, it follows that $\max \LCP[B_{i,c}] = \LCP[i]$. Assume we scanned the \BWT\ up to position $i$, there are three cases: $i'$ is the position of the first occurrence of $c$ in the \BWT, or there exists another $c$-run break at position $j < i$, such that either $i \in B_{j,c}$ and $\LCP[i] > \LCP[j]$, or $box(i) \cap box(j) = \emptyset$. 
    For all these three cases, $\LCP[i] > R[c].len$; thus, $i$ is set as the new active $c$ candidate in $R$ (lines 10-12). Now, if $i'$ is the position of the last occurrence of $c$ in the \BWT, $R[c].pos$ is inserted in $\mathcal{S}$ at the end of the algorithm (line 17); otherwise, let $box(i) = [b,e]$, where $b \leq i \leq e$, we have $\forall s \in \LCP[b,e] , s \geq \LCP[i]$ and no $j' \in [i+1,b]$ is a $c$-run break such that $\LCP[j'] > \LCP[i]$. Thus, $R[c]$ is not updated until we scan $\LCP[e+1]$. At this point, $R[c].len > \LCP[e+1]$ and since $R[c]$ is an active candidate; thus, Algorithm \ref{alg:one-pass} insert $R[c].pos = (n - \SA[i] + 1)$ in $\mathcal{S'}$ (Algorithm \ref{alg:eval-cand}, line 4).    

    \emph{(2)} now let $(n - \SA[i'] + 1) \notin \mathcal{S}$, it follows by Lemma \ref{lem:link LCP - supermaximal}, that $\exists j \in B_{i,c}$ such that $\LCP[j] > \LCP[i]$ (for simplicity we consider the smallest $j$). Assume we scanned the \BWT\ up to position $i$; we need to consider two cases: \emph{(i)} $j > i$ and \emph{(ii)} $j < i$. \emph{(i)} If $j > i$, let $R[c].len = \LCP[i]$, then $\forall l \in \LCP[i+1,j], l \geq R[c].len$; thus, $(n - \SA[i'] + 1)$ is not inserted in $\mathcal{S}$ (lines 2-6, Algorithm \ref{alg:eval-cand}) until position $j$. Here, we get $\LCP[j] > R[c].len$ and, update $R[c].pos = (n - \SA[j'] + 1)$ (lines 10-12), where $j'\in\{j-1,j\}$ and $\BWT[j']=c$. Due to this, $(n - \SA[i'] + 1)$ is dropped and not inserted in $\mathcal{S'}$. \emph{(ii)} If $j < i$, then $\exists l \in \LCP[j+1,i], l < \LCP[j]$; thus when we read $l$, $R[c]$ is updated to the inactive state (line 6, Algorithm \ref{alg:eval-cand}). However, since $\forall s \in B_{i,c} \cap [j+1,i], \LCP[s] < \LCP[j]$ it means that $R[c]$ is never updated to the active state for any position in $[j+1,i]$ (we skip lines 10-12), so also in this case $(n - \SA[i'] + 1)$ is not inserted in $\mathcal{S'}$.
\end{proof}

Observe that in Algorithm \ref{alg:eval-cand} (procedure $eval(\cdot)$), only entries such that $l < R[c].len$ are possibly modified. 
This suggests that $R$ could be sorted in order to speed up operations. Indeed, it turns out that $R$ can be replaced with a data structure based on balanced search trees, such that operation  $eval(\Sigma,m,R,\SSS)$ at Line \ref{alg:one-pass line eval} of Algorithm \ref{alg:one-pass} costs $O(\log\sigma)$ amortized time. With this modification, Algorithm \ref{alg:one-pass} runs in $O(n + \bar r \log\sigma)$ time.

\medskip
\begin{lemma}
    \label{lemma:eval amortized log sigma}
    Algorithm \ref{alg:one-pass} can be implemented with $O(n + \bar r\log\sigma)$ running time and $O(n)$ words of space where $n$ is the text length, $\bar{r}$ is the number of runs in $\BWT(\rev{T})$, and $\sigma$ is the alphabet size. 
\end{lemma}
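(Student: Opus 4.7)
The plan is to keep Algorithm~\ref{alg:one-pass} structurally unchanged and only replace the flat array $R$ by a balanced binary search tree (say, a red-black tree) whose nodes are \emph{cells}. A cell stores a single $\text{len}$ value, a doubly linked list of the characters whose current $R[c].\text{len}$ equals that value, and (when the cell is an active singleton) a $\text{pos}$ field. The BST is keyed by $\text{len}$, and an auxiliary array $\text{cell}[1..\sigma]$ maps each character to the cell currently containing it, so that the query $R[c].\text{len}$ on Line~\ref{alg:one-pass line check} still costs $O(1)$.

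First I would reimplement the two write patterns used by Algorithm~\ref{alg:one-pass}. A \emph{promote} (Line~\ref{alg:one-pass line activate}) removes $c$ from its old cell's linked list (deleting the cell from the BST when it becomes empty) and inserts a new singleton active cell of key $\LCP[i]$, at a cost of $O(\log\sigma)$. A call to $eval(l)$ (Line~\ref{alg:one-pass line eval}) repeatedly extracts the maximum-key cell $C$ while $C.\text{len}>l$: the $\text{pos}$ of $C$ (if $C$ is an active singleton) is appended to $\SSS$, and $C$'s characters are concatenated (in $O(1)$ per cell via linked-list splicing) into a single new inactive \emph{merged cell} of key $l$, which is finally inserted into the BST (or joined with any pre-existing cell at key $l$). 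Because these moves mirror exactly the assignments $R[c]\gets(l,0,\text{false})$ performed by Algorithm~\ref{alg:eval-cand}, the invariant $\mathrm{key}(\text{cell}[c])=R[c].\text{len}$ and the active/inactive flag of each cell are preserved, so Lines~\ref{alg:one-pass line check}--\ref{alg:one-pass line activate} behave identically to the original algorithm; correctness then follows from Lemma~\ref{lem:one-pass-analysis}.

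The efficiency analysis then splits into two parts. The number of cells ever created is at most one per promote plus one per $eval$ call, hence $O(\bar r)$; each cell is inserted into and removed from the BST at most once, so the total BST work is $O(\bar r\log\sigma)$. Combined with the $O(n)$ cost of scanning $\BWT(\rev T)$, $\LCP(\rev T)$, and $\SA(\rev T)$ and the $O(n)$ words of space to store those arrays and the $\text{cell}[\cdot]$ map, this yields the claimed bounds, provided we can also bound the bookkeeping cost on $\text{cell}[c]$.

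The hard part is precisely that last bookkeeping: naively, a single $eval$ call that absorbs many cells could rewrite $\Omega(\sigma)$ char-to-cell pointers, which would blow the budget. The remedy is a standard weighted-union policy when forming the merged cell: during $eval$, reuse the identity (the linked-list header and BST node) of the \emph{largest} extracted cell as the ``base'' of the merged cell, and only rewrite $\text{cell}[c]$ for characters belonging to the strictly smaller absorbed cells. A potential argument on $\Phi=\sum_{c}\log|\text{cell}[c]|$ then shows that each char whose pointer is rewritten during a merge sees its cell size at least double, while promotes decrease $\Phi$ by at most $O(\log\sigma)$ per event; charging the amortized cost to promotes and to the final $\Phi$ bounds the total number of pointer rewrites by $O((\bar r+\sigma)\log\sigma)$, which is $O(\bar r\log\sigma)$ since $\sigma'\le\chi\le 2\bar r$ from Lemma~\ref{lem:upper_bound}. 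Adding this to the BST cost and the linear scan delivers the total $O(n+\bar r\log\sigma)$ running time in $O(n)$ words of space.
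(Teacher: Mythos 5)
Your proposal is correct and uses the same core data structure as the paper: a balanced BST keyed by the distinct $\text{len}$ values, each node holding a linked list of the symbols currently at that value, with $O(\log\sigma)$ work per promotion, list splicing on $eval$, and the observation that the total number of node creations/retrievals over the whole run is $O(\bar r)$ so the BST work amortizes to $O(\bar r\log\sigma)$. Where you genuinely diverge is in how the point query $R[c].\text{len}$ on Line~\ref{alg:one-pass line check} is kept consistent after splices. The paper simply stores, for each symbol $c$, a pointer to ``the BST node in which $c$ is stored'' and concatenates the retrieved linked lists in $O(k_{ret})$ time without saying how those $\le\sigma$ per-symbol pointers are refreshed; updating them eagerly could cost $\Omega(\sigma)$ per $eval$, exactly the hazard you identify. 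Your union-by-size policy (reuse the identity of the largest extracted cell, rewrite pointers only for symbols in the smaller absorbed cells) together with the potential $\Phi=\sum_c\log|\text{cell}(c)|$ cleanly bounds the total number of pointer rewrites by $O((\bar r+\sigma')\log\sigma)=O(\bar r\log\sigma)$, and this is a legitimate, self-contained way to close that bookkeeping step. In that sense your write-up is more careful than the paper's on this one point, at the price of a slightly heavier analysis.

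Two minor imprecisions to fix. First, at a single run break both $\BWT[i-1]$ and $\BWT[i]$ can be promoted to the \emph{same} key $\LCP[i]$ with different $\text{pos}$ values, so ``a $\text{pos}$ field when the cell is an active singleton'' is too restrictive: store $\text{pos}$ and $\text{active}$ per character (or allow duplicate-key singleton cells), as the paper does. Second, when the merged inactive blob is ``joined with any pre-existing cell at key $l$,'' that cell may contain active symbols whose $R[c].\text{len}$ equals $l$ and which Algorithm~\ref{alg:eval-cand} must \emph{not} touch (the test is strict, $l<R[c].\text{len}$); to keep the reporting cost at $O(k_{act})$ rather than proportional to the cell size, either keep such cells separate or maintain two linked lists (active/inactive) per cell as in the paper's construction. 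Neither issue affects the stated bounds.
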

\begin{proof}
    We will prove that $eval(\cdot)$ can be implemented with $O(\log\sigma)$ amortized running time.
    We maintain a balanced binary search tree (BST) in which each node is keyed by an integer $l$ such that there exists $c\in\Sigma$ such that $l=R[c].len$. Note that the BST has at most $\sigma$ nodes, thus the height is $O(\log\sigma)$. Each node stores such corresponding alphabet symbols $c$'s, maintaining them using two doubly linked lists, one for active symbols (i.e., those with $R[c].active=true$), and the other for inactive symbols (i.e., those with $R[c].active=false$).
    For each symbol $c\in\Sigma$, we store $R[c].pos$ and $R[c].active$, as well as the pointer to the BST node in which $c$ is stored so that $R[c].len$ can be retrieved by accessing $c$, and the pointer to the corresponding node of the linked list so that we can move symbols between different linked lists in $O(1)$ time.

    We implement 
    Line~\ref{alg:one-pass line activate} of Algorithm~\ref{alg:one-pass} 
    as follows. 
    We move the linked-list node corresponding to symbol $BWT[i']$ into the linked list for active symbols of the BST node associated with $l$ (after creating a new node if it does not exist) and update $R[c].pos$ and $R[c].active$ for $c=BWT[i']$ accordingly. This takes $O(\log\sigma)$ time.
    
    Our optimization of Algorithm~\ref{alg:eval-cand} then works as follows. We retrieve all nodes associated with lengths greater than $l$, which can be performed in $O(\log\sigma + k_{ret})$ time where $k_{ret}$ is the number of retrieved nodes. We iterate all active symbols $c\in\Sigma$ stored in the retrieved nodes to add $R[c].pos$ to $\mathcal{S}$ in $O(k_{act})$ time where $k_{act}$ is the total number of active symbols to report. Then we concatenate all the linked lists in the retrieved nodes in $O(k_{ret})$ time, and append it to the linked list for inactive symbols in the node with key $l$ (we create a new node if it does not exist), which takes $O(\log\sigma)$ time. Then we delete all the retrieved BST nodes (since they no longer contain any symbol) in $O(k_{ret}\log \sigma)$ time. 
    As a consequence, Algorithm~\ref{alg:eval-cand} runs in $O(k_{act}+(1+k_{ret})\log\sigma)$ time. Observe that at each $c$-run break, at most two symbols can become active and only $O(1)$ new BST nodes can be created. Therefore, the sum of $k_{ret}+k_{act}$ over all executions of $eval(\cdot)$ is $O(\bar{r})$. Therefore, the amortized running time is $O(\log\sigma)$.
\end{proof}
\medskip

\subsubsection*{One-pass algorithm in compressed space}

One important feature of our algorithm is that we only need one scan of the \BWT, \SA, and \LCP\ to compute $\mathcal{S}$.
This means that our algorithm also works in the streaming scenario where these arrays are provided one element at a time, from first to last. This feature can be exploited to run the algorithm in compressed working space using Prefix-free Parsing \cite{BoucherGKLMM19}. This optimization does not provide strong theoretical bounds, but in practice (see Section \ref{sec:experiments}) it drastically reduces the amount of time and working space needed to compute the Suffixient Array. 

Prefix-free parsing (PFP) is a technique introduced by Boucher et al.\ \cite{BoucherGKLMM19} to ease the computational burden of computing the \BWT\ of large and repetitive texts. Briefly, with one linear-time scan of the text $T$, PFP divides $T$ into overlapping segments, called {\em phrases}, of variable length, which are then used to construct what is referred to as the {\em dictionary} $D$ (i.e. the set of distinct phrases) and {\em parse} $P$ of the text (i.e. the text $T$ encoded as a sequence of phrases, represented as indexes in $D$). Then, with a separate linear-time algorithm, the \BWT\ of $T$ is directly computed from $D$ and $P$; thus using space proportional to the combined size $|D|+|P|$ of the two data structures. 
Very repetitive texts will tend to generate very small $D$, since text repetitions translate to repeated phrases. The size of $P$ (number of phrases in which $T$ is parsed), is instead controlled by a user-defined parameter and is a small fraction of $T$'s length $n$.
In \cite{KuhnleMBGLM20,RossiOLGB22}, it was shown how to modify PFP in order to compute also the \SA\ and the \LCP\ arrays. This version streams the three arrays \BWT, \LCP, and \SA\ of $T$ in $O(|D| + |P|)$ compressed space, from their first to last entry and in parallel (i.e. the triples $(\BWT[i], \LCP[i], \SA[i])$ are output for $i=1, 2, \dots, n$); this is sufficient for running Algorithm \ref{alg:one-pass} in compressed space, without affecting its running time. 
The only change consists in reading the text $T$ backwards, in order to compute $\BWT(\rev T)$, $\LCP(\rev T)$, and $\SA(\rev T)$. If the input $T$ resides on disk, this can be achieved very easily with a backward scan of $T$ (without increasing the number of I/Os - page swaps - with respect to reading $T$ forward). If, instead, the text is read from a forward stream, then PFP yields $\BWT(T)$, $\LCP(T)$, and $\SA(T)$ and our algorithm computes a suffixient set for $\rev T$. Our index still works, except that we have to reverse the query pattern before searching for it in the index. It is worth mentioning that it is actually also possible to modify PFP so that it streams $\BWT(\rev T)$, $\LCP(\rev T)$, and $\SA(\rev T)$ while reading a forward stream for $T$. However, in view of the simpler solutions described above, we do not enter into such details. 

In Section \ref{sec:experiments} we will show that our PFP-based optimization makes it possible to compute the Suffixient Array efficiently on massive repetitive datasets.

\subsection{A linear-time algorithm using LF-mapping}\label{sec:linear time}

In this section, we further speed up the one-pass algorithm provided in the previous section and achieve linear time. Our new algorithm is summarized in Algorithm \ref{alg:linear-time-algo}.

\SetKwComment{Comment}{/* }{ */}
\RestyleAlgo{ruled}
\begin{algorithm}[ht]
	\caption{First linear-time algorithm building a smallest suffixient set}\label{alg:linear-time-algo}
	\LinesNumbered
	\SetKwInOut{Input}{input}
	\SetKwInOut{Output}{output}
	\Input{A text $T[1, n]$ over a finite alphabet $\Sigma$.}
	\Output{A smallest suffixient set for $T$.}
	{$\mathcal{S} \gets \emptyset$}\ \Comment*[r]{initialize the empty suffixient set $\mathcal{S}$}
	{$\BWT\ \gets \textrm{BWT}(\rev{T})$;
		$\LCP\ \gets \textrm{LCP}(\rev{T})$;
		$\SA\ \gets \textrm{SA}(\rev{T})$\;} 
	$R[1,\sigma] \gets ((-1,0,false), \dots, (-1,0,false))$\ \Comment*[r]{$\LCP$ local maxima}
	$\LF[1,\sigma] \gets (0,\dots,0)$\Comment*[r]{LF mapping}
	$m \gets \infty$\Comment*[r]{$\LCP$ minimum inside current BWT run}
	
	\textbf{for all} $c=2,\dots, \sigma$: $\LF[c] \gets \LF[c-1] + occ(T,c-1)$\;

	$\LF[\BWT[1]] \gets \LF[\BWT[1]] +1 $\;

	\For{$i = 2, \dots, n$}{

		$\LF[\BWT[i]] \gets \LF[\BWT[i]] +1 $\;
            $m \gets \min\{m, \LCP[i]\}$\;

		\If{$\BWT[i] \neq \BWT[i-1]$}{
			
			\For{$i'\in \{i-1,i\}$}{

				\textbf{if $i'=i-1$ then} $eval(\{\BWT[i']\},m,R,\mathcal{S})$\;
                \textbf{else} \If{$R[\BWT[i']].len \neq -1$}{ $eval(\{\BWT[i']\},\LCP[\LF[\BWT[i']]]-1,R,\mathcal{S})$\;}

				\vspace{5pt}

				\If{$\LCP[i]>R[\BWT[i']].len$}{
					$R[\BWT[i']] \gets (\LCP[i], n-\SA[i']+1,true)$\;
				}
	
		}
	
			$m \gets \infty$\;
		}
		
	}

	$eval(\Sigma,-1,R,\mathcal{S})$\ \Comment*[r]{evaluate last active candidates}
	\Return $\SSS$\;
\end{algorithm}

As previously discussed, in Algorithm \ref{alg:one-pass}, 
for the one-character extension of every $\BWT$ equal-letter $\BWT[i^*,i] = cc\dots ccx$ (with $x\neq c$) we run procedure $\emph{eval}$ (Algorithm \ref{alg:eval-cand}) to check if $\min\LCP[i^*+1,i]$ drops below the current \LCP\ maxima associated to $y$-run breaks, for all $y\in \Sigma$. In the end, this step charges an additional $O(\bar r \sigma)$ term (which we mentioned can be reduced to $O(\bar r \log \sigma)$ by using opportune data structures). Intuitively, Algorithm \ref{alg:linear-time-algo} avoids this cost by calling the \emph{eval} procedure only on the two candidates $R[c']$, where $c' \in \BWT[i-1,i]$. We achieve this by introducing a new vector $\LF[1,\sigma]$ updated on-the-fly, implementing the \emph{LF-mapping} property of the Burrows-Wheeler transform. More formally: assume we have scanned the \BWT\ up to the $c$-run break $i$, and let $i' \in \{i-1,i\}$ be such that $\BWT[i']=c$. Then, $\LF$ is such that $\LF[\BWT[i']] = \LF[c] = j$, with $\SA[j] = \SA[i'] + 1$. 
In other words, $\BWT[j]$ is the character preceding $\BWT[i']$ in $\rev{T}$.
Our linear-time algorithm is based on the following idea.  
Assume for simplicity that $i$ is not the first $c$-run break. Let $i^* < i$ be the largest integer such that $i^*$ is a $c$-run break as well (i.e. $i^*$ is the $c$-run break immediately preceding $i$).
As in Algorithm \ref{alg:one-pass}, our new Algorithm \ref{alg:linear-time-algo} stores in entry $R[c].len$ the value $R[c].len = \LCP[i^*]$. 
By Lemma \ref{lem:link LCP - supermaximal}, we have to discover if $i\in B_{i^*,c}$; if this is the case, then we compare $\LCP[i]$ and $\LCP[i^*]$ and decide if $\LCP[i]$ is the new maximum in $B_{i^*,c} \cap [i]$ (i.e. $\LCP[i] > \LCP[i^*]$) or if the local maximum among $c$-run breaks in $B_{i^*,c} \cap [i]$ had already been found (i.e. $\LCP[i] \le \LCP[i^*]$).
If, on the other hand, $i\notin B_{i^*,c}$, then 
we insert in $\SSS$ the local maximum $R[c].pos$ relative to $B_{i^*,c}$ if and only if $R[c].active = true$ (if $R[c].active = false$, then the suffixient position corresponding to the local maximum of $B_{i^*,c}$ had already been stored in $\SSS$).

In order to discover if $i\in B_{i^*,c}$, we distinguish two cases. 

(i) If $\BWT[i-1]=c$ then, since $i^*$ is the previous $c$-run break, it must be the case that $\BWT[i^*-1, i^*, \dots , i-1, i] = y c c \dots c c x$, for some $x\neq c$ and $y\neq c$. It follows that $i\in B_{i^*,c}$ if and only if $\min\LCP[i^*,i] \geq R[c].len = \LCP[i^*]$. Algorithm \ref{alg:linear-time-algo} computes $\min\LCP[i^*,i]$ analogously as Algorithm \ref{alg:one-pass}, i.e. by updating a variable $m$ storing the minimum $\LCP$ inside intervals corresponding to $\BWT$ equal-letter runs. 

(ii) If, on the other hand, $\BWT[i-1]=x \neq c$, then  since $i^*$ is the previous $c$-run break, it must be that $\BWT[i^*-1,i^*,\ldots,i-1,i] = cy...xc$, for some $x \neq c$ and $y \neq c$ (and there are no other occurrences of $c$ between $\BWT[i^*-1]$ 
and $\BWT[i]$). 
As in the previous case, the goal is to compute $\min\LCP[i^*,i]$. We achieve this by using the $\LF$ array. Due to the way the array $\LF$ is defined and constructed, we know that $\min \LCP[i^*,i] = \LCP[\LF[c]]-1$. As a result, we have that $i\in B_{i^*,c}$ if and only if $\LCP[\LF[c]] - 1 \geq R[c].len = \LCP[i^*]$. 

From the above intuition, we obtain:

\medskip
\begin{lemma}\label{lem:linear-time-correctness}
    Given a text $T[1,n]$ over $\Sigma$, Algorithm \ref{alg:linear-time-algo} computes a smallest suffixient set $\mathcal{S}$ in $O(n)$ time and $O(n)$ words of space.
\end{lemma}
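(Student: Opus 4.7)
My plan is to argue that Algorithm~\ref{alg:linear-time-algo} produces exactly the same output as Algorithm~\ref{alg:one-pass} (whose correctness is established in Lemma~\ref{lem:one-pass-analysis}), while performing only $O(1)$ amortized work per position rather than $O(\sigma)$ or $O(\log\sigma)$ per run break. Both algorithms maintain the array $R$ with the same intended invariant: at every moment, $R[c]$ stores the candidate (length, position, active-flag) associated with the most recent $c$-run break encountered in the prefix of $\BWT$ processed so far. The crucial optimization is that, at each run break $i$, only the entries $R[\BWT[i-1]]$ and $R[\BWT[i]]$ can possibly need evaluation, since $R[c]$ only changes status at a $c$-run break.

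\textbf{Correctness.} I would then verify that the two values used in the $eval$ calls correctly compute the threshold $\min\LCP[i^*+1,i]$, where $i^*$ is the most recent $c$-run break before $i$, needed to detect whether $i \in box(i^*)$. For case (i), $\BWT[i-1]=c$ and $\BWT[i]\neq c$: the range $[i^*,\dots,i-1]$ lies entirely inside a single $\BWT$ equal-letter run, so no general run break occurs between $i^*$ and $i$, and the running variable $m$ (maintained identically to Algorithm~\ref{alg:one-pass}) exactly equals $\min\LCP[i^*+1,i]$. For case (ii), $\BWT[i]=c$ and $\BWT[i-1]\neq c$: by definition of $c$-run breaks, positions $i^*-1$ and $i$ are the \emph{two consecutive} occurrences of $c$ in $\BWT[1,i]$; a standard LF-mapping argument gives $LF(i)=LF(i^*-1)+1$, and since after processing position $i$ the variable $\LF[c]$ equals $LF(i)$ by construction, the value $\LCP[\LF[c]]$ is the LCP between the two adjacent suffixes $\rev{T}[\SA[i^*-1]-1,n]$ and $\rev{T}[\SA[i]-1,n]$, both of which start with $c$. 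Stripping this leading $c$ yields $\LCP[\LF[c]]-1=\min\LCP[i^*,i]$, which, combined with $\LCP[i^*]\ge \min\LCP[i^*,i]$, gives the same inequality $\ell<R[c].len$ as $\min\LCP[i^*+1,i]<\LCP[i^*]$, i.e.\ exactly the condition "$i\notin box(i^*)$". Having established that the thresholds are correct and that only $\BWT[i-1]$ and $\BWT[i]$ could trigger an update at position $i$, a straightforward induction on $i$ shows that after processing position $i$, the pair $(R,\mathcal{S})$ matches the one produced by Algorithm~\ref{alg:one-pass}; applying Lemma~\ref{lem:one-pass-analysis} then yields correctness.

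\textbf{Complexity and main obstacle.} The time analysis is then immediate: computing $\BWT(\rev{T})$, $\LCP(\rev{T})$, $\SA(\rev{T})$ takes $O(n)$, initializing the $\LF$ array takes $O(\sigma)\subseteq O(n)$, the main loop performs $O(1)$ work per position (a constant number of comparisons, arithmetic operations, and at most two calls to $eval$ on singleton sets), and the terminal $eval(\Sigma,-1,R,\mathcal{S})$ costs $O(\sigma)\subseteq O(n)$; total $O(n)$ time. Space is dominated by the three input arrays ($O(n)$ words) plus the auxiliary arrays $R$ and $\LF$ of $O(\sigma)\subseteq O(n)$ words. I expect the main technical obstacle to be the rigorous verification of the LF-mapping identity $\LCP[\LF[c]]-1=\min\LCP[i^*,i]$: one must carefully track that (a) the incremental maintenance of $\LF[c]$ matches the classical $LF(i)=C[c]+\mathrm{rank}_c(\BWT,i)$ formula at the precise moment of the query, (b) consecutive occurrences of $c$ in $\BWT$ map under $LF$ to consecutive positions in $\SA$, and (c) the LCP of two lexicographically adjacent suffixes sharing the same leading character $c$ equals one plus the minimum LCP of the corresponding suffixes one position to the right, which in turn is $\min\LCP[i^*,i]$ on the original $\SA$ of $\rev{T}$.
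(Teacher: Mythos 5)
Your overall strategy --- showing that Algorithm~\ref{alg:linear-time-algo} makes the same decisions as the one-pass Algorithm~\ref{alg:one-pass} and therefore produces the same output --- is sound and close in spirit to the paper's own proof, which instead establishes output-equivalence with the quadratic Algorithm~\ref{alg:quadratic} via Lemma~\ref{lem:link LCP - supermaximal} and a two-case analysis on where a dominating run break lies. Your analysis of the two thresholds is correct and matches the paper's: for $c=\BWT[i-1]$ the running minimum $m$ equals $\min\LCP[i^*+1,i]$ because the stretch between consecutive $c$-run breaks is a single equal-letter run, and for $c=\BWT[i]$ the identity $\LCP[\LF[c]]-1=\min\LCP[i^*,i]$ holds because consecutive occurrences of $c$ in the BWT map under LF to adjacent rows, whose LCP is one plus the range minimum between the original rows; comparing either quantity against $R[c].len=\LCP[i^*]$ then decides $i\in box(i^*)$ correctly, since $\min\LCP[i^*,i]<\LCP[i^*]$ iff $\min\LCP[i^*+1,i]<\LCP[i^*]$. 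The complexity and space accounting are also fine.

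The one genuine gap is the induction invariant you propose: ``after processing position $i$, the pair $(R,\mathcal{S})$ matches the one produced by Algorithm~\ref{alg:one-pass}.'' This is false as stated. At every run break, Algorithm~\ref{alg:one-pass} calls $eval(\Sigma,m,R,\mathcal{S})$, which lowers $R[c].len$ to the running minimum and reports/deactivates the candidate for \emph{every} $c\in\Sigma$ whose stored length exceeds $m$ --- including characters unrelated to the current run break --- whereas Algorithm~\ref{alg:linear-time-algo} leaves $R[c]$ untouched until the next $c$-run break. Consequently both $R$ and the partially built set $\mathcal{S}$ genuinely differ at intermediate positions (the one-pass algorithm may report a given position several run breaks earlier than the linear-time one does). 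The induction must therefore be stated per character and only at $c$-run breaks: for instance, at the moment the test $\LCP[i]>R[c].len$ is performed at a $c$-run break $i$ with previous $c$-run break $i^*$, both algorithms have $R[c].len=\min\LCP[i^*..i]$ (hence equal to $\LCP[i^*]$ exactly when $i\in box(i^*)$), the same $pos$ and $active$ flag, and have reported the same multiset of positions attributable to $c$; a separate short argument is then needed for the terminal call $eval(\Sigma,-1,R,\mathcal{S})$. With the invariant weakened in this way your argument goes through, but as written the inductive step would fail at the first run break that touches a third character.
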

\begin{proof}
    Algorithm \ref{alg:linear-time-algo} scans the \BWT~exactly once and, for each position of it, performs $O(1)$ accesses to the $\SA$ and the $\LCP$ and calls Algorithm \ref{alg:eval-cand} at most once for each run break. Since Algorithm \ref{alg:eval-cand} performs a constant number of operations and the $LF$-mapping array can be computed in $O(n)$ time, the whole algorithm runs in $O(n)$ time. In addition, the only supplementary data structure we need is $R$, which consumes $O(\sigma)$ words of space; thus, altogether, we take $O(n + \sigma)$ words of space. Since we assumed $\sigma \leq n$, the space consumption of Algorithm \ref{alg:linear-time-algo} is $O(n)$ words.
    
    We prove the correctness of Algorithm \ref{alg:linear-time-algo} by showing it computes the same output as the output computed by Algorithm \ref{alg:quadratic}. 
    In particular, given $\mathcal{S'}$, the output of Algorithm \ref{alg:linear-time-algo}, and $\mathcal{S}$, the output of Algorithm \ref{alg:quadratic},
    we show that for any value $s = n - \SA[i] + 1$, where $i \in [n]$, \emph{(1)} $s \in \mathcal{S} \implies  s \in \mathcal{S'}$ and \emph{(2)} $s \notin \mathcal{S} \implies  s \notin \mathcal{S'}$.
    
    \emph{(1)} let $(n - \SA[i'] + 1) \in \mathcal{S}$, where $i$ is a $c$-run break, such that $i' \in \{i-1,i\}$ and $\BWT[i'] = c$. By Lemma \ref{lem:link LCP - supermaximal}, we have $\max \LCP[B_{i,c}] = \LCP[i]$. Assume we have scanned the \BWT~up to position $i$. There are three cases: $i'$ is the position of the first occurrence of $c$ in the \BWT, or there exists another $c$-run break at position $j < i$, such that either $i \in B_{j,c}$ or $box(i) \cap box(j) = \emptyset$. In all three cases, since $(n - \SA[i'] + 1) \in \SSS$, we have $\LCP[i] > R[c].len$; thus, $i$ is set as the new active $c$ candidate in $R$ (lines 17-19). If $i'$ is the position of the last occurrence of $\BWT[i']$, since $R[c].active = true$, after traversing the whole \BWT, Algorithm \ref{alg:linear-time-algo} inserts $R[c].pos = (n - \SA[i] + 1)$ in $\mathcal{S'}$ (line 4, Algorithm \ref{alg:eval-cand}). Otherwise, let $i < k'$ be the position of the next occurrence of $c$ in the \BWT, such that $k' \in \{k - 1, k\}$. Due to $\max \LCP[B_{i,c}] = \LCP[i]$, it must hold $\LCP[k] < \LCP[i]$ or $k \not\in box(i)$. 
    Here either, $\BWT[i-1,k'] = xcc...ccy$ where $x \neq c$ and $y \neq c$; thus, $\exists l \in \LCP[i+1,k']$ such that $l < R[c].len$, or $\LCP[\LF[c]] - 1 < R[c].len$. 
    Again, since $R[c].active = true$, then Algorithm \ref{alg:linear-time-algo} insert $R[c].pos = (n - \SA[i] + 1)$ in $\mathcal{S'}$ (line 4, Algorithm \ref{alg:eval-cand}).
    
    \emph{(2)} now let $(n - \SA[i'] + 1) \notin \mathcal{S}$, this means that $\exists j \in B_{i,c}$ such that $\LCP[j] > \LCP[i]$ and $j$ is a $c$-run break. We need to consider two cases: \emph{(i)} $j > i$ and \emph{(ii)} $j < i$. \emph{(i)} If $j > i$, $\forall l \in \LCP[i + 1, j]$, $ l \geq \LCP[i]$, which implies $\LCP[\LF[c]] - 1 \geq \LCP[i]$ 
    for all $c$-run breaks in $[i+1,j]$; thus, we never update $R[c]$ to the inactive state (lines 2-6, Algorithm \ref{alg:eval-cand}) until we scan position $j$. Here, we get $\LCP[j] > R[c].len$ and update $R[c].pos = (n - \SA[j'] + 1)$ (lines 17-19), where $j' \in \{j - 1, j\}$ and $\BWT[j'] = c$. Due to this, $(n - \SA[i'] + 1)$ is dropped and not inserted in $\mathcal{S'}$. \emph{(ii)} If $j < i$, then 
    $\exists l \in [j+1,i]$ such that $\LCP[\LF[c]]-1 < R[c].len$ or $\LCP[l] < \LCP[i]$;
    thus, when we read $l$, $R[c]$ is updated to the inactive state (line 6, Algorithm \ref{alg:eval-cand}). However, since $\forall s \in B_{i,c} \cap [j + 1, i]$, $ \LCP[s] < \LCP[j]$ it means that $R[c]$ is never updated to the active state for any position in $[j + 1, i]$ (we skip lines 17-19), so also in this case $(n - \SA[i'] + 1)$ is not inserted in $\mathcal{S'}$.
\end{proof}

In Figure \ref{fig:linear-alg-example} we show an example of how Algorithm \ref{alg:linear-time-algo} works.

\begin{figure}[!htb]
\centering
\includegraphics[scale=0.8]{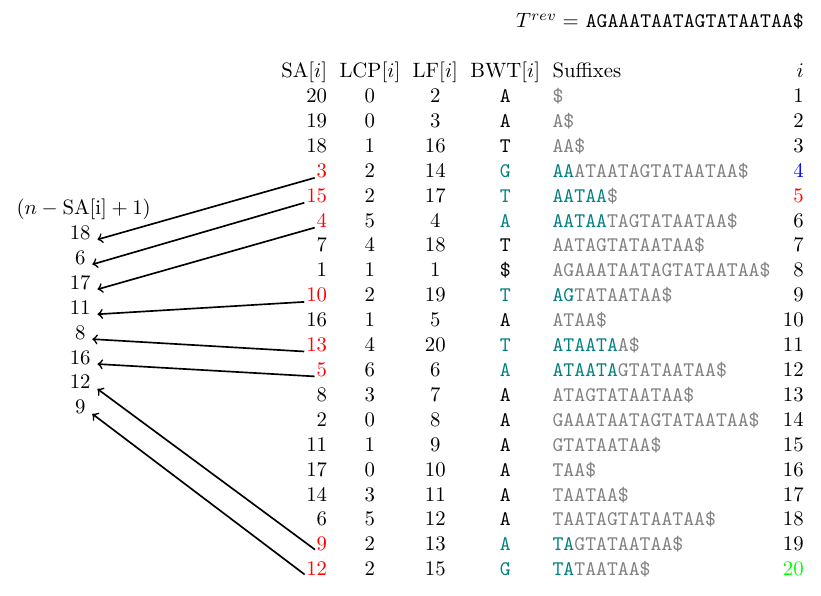}
\caption{
The figure shows the data used by Algorithm \ref{alg:linear-time-algo} to construct a smallest suffixient set $\SSS$ for a text $T[1,n]$ with $n = 20$. In addition to the data shown in Figure \ref{fig:run-example}, here we include
the $\LF$ array. Each position $i$ of this array is the result obtained when updating $\LF[\BWT[i]] = \LF[\BWT[i]] + 1$ in line 9 of Algorithm \ref{alg:linear-time-algo}.
For brevity, we show how  Algorithm \ref{alg:linear-time-algo} works only on the $\tt G$-run breaks.
Consider the first $\tt G$-run break at position $i = 4$ (highlighted in blue in column $i$). At this stage we have $\LF[G] = occ(T, {\tt \$}) + occ(T, {\tt A}) = 13$ and $R[{\tt G}] = (-1, 0, false)$, so we update $\LF[{\tt G}] = \LF[{\tt G}] + 1 = 14$ (line  9). Since $\BWT[i] ={\tt G}$ (then $i' = i$) and since $R[{\tt G}].len = -1$, then we do not call $\tt{eval}$ on either line 13 or line 14, so $R[{\tt G}].pos$ is not added to $\SSS$. Now, we have $\LCP[4] = 2 > R[{\tt G}].len = -1$, then we set $R[{\tt G}] = (\LCP[4], n - \SA[4] + 1, true) = (2, 18, true)$ (line 18) and $m = \infty$ (line 21). For the next ${\tt G}$-run break at position $i = 5$ (highlighted in red in column $i$), since we set $m = \min(\LCP[5], \infty) = 2$ and since $\LCP[\LF[{\tt G}]] - 1 = -1 < R[{\tt G}].len = 2$, then we do not add $R[G].pos$ to $\SSS$ in either line 13 nor line 14. Next, since $\LCP[5] = 2 = R[{\tt G}].len$, we do not update $R[{\tt G}]$ on line 18, and we finish this iteration. The next ${\tt G}$-run break occurs at position $i = 20$ (highlighted in green in column $i$). We set $LF[{\tt G}] = LF[{\tt G}] + 1 = 15$ and $m = \LCP[20] = 2$. Since $i' = i$, we do not call {\tt eval} on line 13, but since $\LCP[\LF[{\tt G}]] - 1 = 0 < R[{\tt G}].len = 2$, we add $R[{\tt G}].pos = 18$ to $\SSS$ and we set $R[{\tt G}] = (0, 0, false)$ on the {\tt eval} calling on line 15. Next, we have $R[{\tt G}].len = 0 < \LCP[20] = 2$, so we update $R[{\tt G}] = (2, 9, true)$. Finally, since $R[{\tt G}].active = true$, we add $R[{\tt G}].pos = 9$ on the final {\tt eval} calling on line 24.
}\label{fig:linear-alg-example}
\end{figure}

\subsection{Another linear-time algorithm via precomputing boxes}\label{sec: linear time 2}

In this section we propose a second linear algorithm
which in practice is faster than Algorithm~\ref{alg:linear-time-algo}, at the cost of increasing by an additive term $O(n)$ the space consumption. In Section \ref{sec:experiments} we show our experimental results comparing both algorithms.

For a given $c$-run break $i_f$,
let $i_f' \in \{i_f - 1, i_f\}$ be such that $\BWT[i_f'] = c$. Algorithms \ref{alg:one-pass} and \ref{alg:linear-time-algo} will add to the suffixient set the position $n - \SA[i_f'] + 1$ if and only if 

\begin{equation}i_f = \min\{i  \ :\ i\in B_{i_f, c} \wedge \LCP[i] = \max \LCP[B_{i_f, c}]\}.\label{eq:leftmost-maximum}\end{equation} 

In the following, we expose a method to evaluate Eq.~\eqref{eq:leftmost-maximum} in $O(1)$ time for each $i \in [n]$, which will lead to a linear-time algorithm.
As a first step, we define \textit{first c-maximum} positions (Definition~\ref{def:first-candidate}). Later, we prove that those positions are exactly the positions satisfying equation~\ref{eq:leftmost-maximum} (Proposition~\ref{prop:first-maximum}). Finally, we show an algorithm computing a suffixient set of smallest cardinality by finding \textit{first c-maximum} positions (Algorithm~\ref{alg:fm-algo}).

In the rest of this section, $k$ denotes the number of $c$-run breaks. Let $1 \leq i_1 < i_2 < \dots < i_k \leq n$ be the positions of those run breaks, and let $box(i_h) = [l_h, r_h]$, for each $h \in [k]$. 

\medskip
\begin{definition}[\textit{first c-maximum}]\label{def:first-candidate}
Let $i_f$ the position of a $c$-run break. We say $i_f$ is \textit{first c-candidate} if one of the following conditions hold:
\begin{itemize}
\item{$f = 1$, or}
\item{$i_{f - 1} \leq l_f - 1$}
\end{itemize}

Moreover, we say that $i_f$ is \textit{first c-maximum} if it is \textit{first c-candidate} and one of the following conditions hold:
\begin{itemize}
\item{$f = k$, or}
\item{for each $f < j \leq k$, if $i_f \leq l_{j}  - 1$ then $r_f + 1 < i_{j}$.}
\end{itemize}
\end{definition}
\medskip

Intuitively, \textit{first c-candidate} positions are those such that its $\LCP$ value is smaller than the $\LCP$ of the previous $c$-run break. On the other hand, \textit{first c-maximum} positions are \textit{first c-candidate} positions whose box only intersects $c$-run break positions not having larger $\LCP$ values.
The following proposition establishes the equivalency between finding $c$-run breaks which are the leftmost local maxima within a box and finding \textit{first c-maximum} positions.

\medskip
\begin{proposition}\label{prop:first-maximum}
Let $i_f$ be a $c$-run break. We have $$i_f = \min\{i\ :\ i\in B_{i_f, c} \wedge \LCP[i] = \max \LCP[B_{i_f, c}]\}$$ if and only if $i_f$ is \textit{first c-maximum}.
\end{proposition}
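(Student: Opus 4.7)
The plan is to prove both directions by leveraging the monotonicity built into the definition of $box$: for any $k \in box(i) = [l,r]$ we have $\LCP[k] \geq \LCP[i]$, and the maximality of $box(i)$ forces $\LCP[l-1] < \LCP[i]$ (when $l > 1$) and $\LCP[r+1] < \LCP[i]$ (when $r < n$). The whole argument is a careful interplay between whether $i_j \in box(i_f)$ and whether $i_f \in box(i_j)$, together with the $\LCP$ inequalities these containments force.

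For the forward direction, assume $i_f = \min\{i : i \in B_{i_f,c} \wedge \LCP[i] = \max\LCP[B_{i_f,c}]\}$. To get \textit{first c-candidate} when $f > 1$: if $i_{f-1} \geq l_f$, then $i_{f-1} \in box(i_f)$, hence $i_{f-1} \in B_{i_f,c}$ with $\LCP[i_{f-1}] \geq \LCP[i_f]$; either inequality is strict (contradicting that $\LCP[i_f]$ is the maximum) or it is equal (contradicting leftmostness). For the second condition, fix $j > f$ with $i_f \leq l_j - 1$ and suppose for contradiction $i_j \leq r_f$. Then $i_j \in box(i_f)$, so $i_j \in B_{i_f,c}$ and $\LCP[i_j] \geq \LCP[i_f]$. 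Now observe $l_f \leq i_f \leq l_j - 1 \leq i_j - 1 < r_f$, so $l_j - 1 \in box(i_f)$, giving $\LCP[l_j - 1] \geq \LCP[i_f]$; but maximality of $box(i_j)$ forces $\LCP[l_j - 1] < \LCP[i_j]$, so $\LCP[i_j] > \LCP[i_f]$, contradicting maximality of $\LCP[i_f]$ in $B_{i_f,c}$.

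For the backward direction, assume $i_f$ is \textit{first c-maximum}. The \textit{first c-candidate} condition immediately rules out any $c$-run break smaller than $i_f$ lying in $B_{i_f,c}$: indeed $i_{f-1}$ is the largest $c$-run break below $i_f$ and already satisfies $i_{f-1} \leq l_f - 1$, so all earlier $c$-run breaks are outside $box(i_f)$. This settles leftmostness. For maximality, take any $i_j \in B_{i_f,c}$ with $j > f$ and suppose $\LCP[i_j] > \LCP[i_f]$. Since $i_j \in box(i_f)$ gives $i_j \leq r_f$, i.e. $r_f + 1 \geq i_j$, the contrapositive of the \textit{first c-maximum} condition yields $i_f \geq l_j$. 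Combined with $i_f < i_j \leq r_j$, this gives $i_f \in box(i_j)$, hence $\LCP[i_f] \geq \LCP[i_j]$, contradicting the assumption.

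The main obstacle is pinning down the precise duality between the two box-containment relations. The clean statement one wants is: if $i_j \in box(i_f)$ but $i_f \notin box(i_j)$, then $\LCP[i_j]$ is strictly larger than $\LCP[i_f]$, with $l_j - 1$ acting as an explicit witness to the $\LCP$-drop that excludes $i_f$ from $box(i_j)$. Once this observation is isolated, both directions reduce to short case analyses on the relative positions of the endpoints $l_f, r_f, l_j, r_j$ and the indices $i_{f-1}, i_f, i_j$; everything else is bookkeeping with Definition~\ref{def:box} and Definition~\ref{def:first-candidate}.
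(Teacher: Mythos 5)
Your overall strategy matches the paper's: both directions are driven by the interplay between the containments $i_j \in box(i_f)$ and $i_f \in box(i_j)$ and the $\LCP$ inequalities they force (the paper organizes this around the chain $i_{f-a-1}<l_f\le i_{f-a}<\cdots<i_{f+b}\le r_f<i_{f+b+1}$, you work with the box endpoints directly), and your backward direction --- argued directly via the contrapositive of the defining implication rather than by contrapositive of the whole statement as in the paper --- is correct and clean. However, the forward direction has a genuine gap in verifying the second condition of \emph{first c-maximum}. That condition requires $r_f + 1 < i_j$, so its negation is $i_j \leq r_f + 1$, not $i_j \leq r_f$ as you suppose. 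Your contradiction argument relies on $i_j \in box(i_f) = [l_f, r_f]$, which fails precisely in the uncovered boundary case $i_j = r_f + 1$.

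That case needs a separate argument, and it is not vacuous: one must show that $i_j = r_f + 1$ is incompatible with the premise $i_f \leq l_j - 1$. The paper does exactly this: since $r_f+1 \notin box(i_f)$, maximality of $box(i_f)$ gives $\LCP[i_j] = \LCP[r_f+1] < \LCP[i_f] \leq \LCP[h]$ for all $h \in [l_f, r_f]$, so $box(i_j)$ extends leftward at least to $l_f$, i.e.\ $l_j \leq l_f \leq i_f$, contradicting $i_f \leq l_j - 1$; hence the implication holds vacuously for such $j$. With this case added your proof is complete; as written, you have only established $r_f < i_j$, which is strictly weaker than what Definition~\ref{def:first-candidate} demands.
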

\begin{proof}
Let $B_{i_f, c} = \{i_{f - a}, \dots, i_f, \dots, i_{f + b}\}$. By definition it holds for every $j\in[f-a,f+b]$ that $\LCP[i_j]\ge \LCP[i_f]$, and it also holds that 
\begin{equation}
    i_{f-a-1}<l_f\le i_{f-a}<\cdots<i_f<\cdots<i_{f+b}\le r_f < i_{f+b+1}
    \label{eq: fm boxes}
\end{equation} where $[l_f,r_f]=box(i_f)$ and we define $i_0:=0$ and $i_{k+1}:=n+1$ for brevity. 

\medskip
\noindent$(\Rightarrow)$ Suppose $$i_f = \min\{i\ :\ i\in B_{i_f, c} \wedge \LCP[i] = \max \LCP[B_{i_f, c}]\}.$$ 
Then by the maximality of $\LCP[i_f]$, it holds that $\LCP[i_f]=\LCP[i']$ for every $i'\in B_{i_f,c}$.
This implies that $a = 0$ (due to the minimality of $i_f$) and $box(i_f) = \dots = box(i_{f + b})=[l_{f},r_{f}]$. Since it holds $i_{f-1}\le l_f-1< l_f\le i_{f}$ by Eq.\eqref{eq: fm boxes}, $i_{f}$ is \textit{first c-candidate}.
To prove that $i_f$ is also \textit{first c-maximum}, we need to prove that it holds for every $f<j\le k$ if $i_f\le l_{j}-1$ then $r_j+1<i_j$ where $[l_j,r_j]=box(i_j)$. For $f<j\le f+b$, observe that $l_{j} - 1= l_f-1 < i_f$, and the implication is true because the premise is false. Now consider $f+b<j\le k$. Then it holds that $r_f<r_f+1\le i_j$ by Eq.\eqref{eq: fm boxes}. We have two cases: (i) $r_f+1 < i_j$ and (ii) $r_f+1=i_j$. For the former case, we are done. For the latter case, by definition of $box(i_f)$, it holds that $\LCP[h]\ge \LCP[i_f] > \LCP[i_{j}]$ for every $h\in[l_f,r_f]$, which implies that $l_{j}\le l_{f}\le i_f$. Since $l_{j}-1<l_j\le i_f$, the implication is true due to the false premise. Therefore, $i_f$ is \textit{first c-maximum}.

\noindent$(\Leftarrow)$ We prove by contrapositive. Suppose $$i_f \neq i_m = \min\{i\ :\ i\in B_{i_f, c} \wedge \LCP[i] = \max \LCP[B_{i_f, c}]\}.$$ 
If $f < m$, then $b \ge 1$ and, since $i_m \in B_{i_f, c}$,  
$\LCP[i_f] < \LCP[i_{m}]$, we have $i_f \le l_{m}-1 < l_{m}$. However, it holds that $i_{f}<i_{m} \le i_{f+b} \le r_{f} < r_{f} + 1$ by Eq.\eqref{eq: fm boxes}.
Since it holds that (i) $f<m\le k$ and (ii) $i_f < l_{m}-1$ and (iii) $r_{f} + 1>i_m$, it follows that $i_f$ cannot be \textit{first c-maximum}. On the other hand, if $m < f$, then $a \ge  1$. From $l_f\le i_{f - a}$ in Eq.\eqref{eq: fm boxes}, it follows that $l_f-1 < l_f\le i_{f-a}\le i_{f-1}$. Therefore, $i_f$ cannot be \textit{first c-candidate}.
\end{proof}
\medskip

As shown in the proof of Proposition~\ref{prop:first-maximum}, if $i_f$ is \textit{first c-candidate} and there exists a position $i_f < i_e$ such that $l_{f + 1} - 1 < i_f, \dots, l_{e - 1} - 1 < i_f$, $i_f \leq l_e - 1$ and $r_f + 1 < i_e$, then for any other position $i_e < i_g$ such that $i_f \leq l_g - 1$ (if any exist) it holds $r_f + 1 < i_g$. Indeed, $i_e$ is also \textit{first c-candidate}, which means that, to find \textit{first c-maximum} positions, we just need to sequentially scan the $\BWT$ looking for \textit{first c-candidate} positions and compare their boxes' boundaries. Consider the two arrays defined based on the LCP array as the following.
\medskip
\begin{definition}[PSV/NSV arrays]
For a given length-$n$ integer array $A[1,n]$, the \textit{previous smaller value} array of $A$ is an integer array of length $n$ defined as $\PSV(A)[i] = \max(\{j \,|\, j < i, A[j] < A[i]\}\cup \{0\})$ for all $i\in [n]$. In a similar way, the \textit{next smaller value} array of $A$ is defined as $\NSV(A)[i] = \min(\{j \,|\, j > i,A[j] < A[i]\}\cup \{n+1\})$ for all $i\in [n]$.
\end{definition}
\medskip
Interestingly, if we have access to the $\PSV(\LCP)$ and $\NSV(\LCP)$ arrays, we can compute $box(i) = [l_i, r_i] = [\PSV(\LCP)[i] + 1, \NSV(\LCP)[i] - 1]$ in $O(1)$ time. Fortunately, these arrays can be computed in $O(n)$ time \cite{FMNtcs09}. For simplicity, in the rest of this section $\PSV$, $\NSV$ denote  $\PSV(\LCP)$, and $\NSV(\LCP)$, respectively.

Following the above ideas, we show in Algorithm~\ref{alg:fm-algo} a procedure adding to the suffixient set \textit{first c-maximum} positions.
In the same spirit as Algorithms~\ref{alg:one-pass} and \ref{alg:linear-time-algo}, $R[c]$ keeps the information of the last $c$-run break found up to now which is \textit{first c-candidate}. The algorithm sequentially scans the $\BWT$ looking for run breaks. For each $c$-run break at position $i$, if $i$ is \textit{first c-candidate}, $R[c]$ is updated as $(sa\_pos \gets i, text\_pos \gets n - \SA[i'] + 1, active \gets true, nsv \gets \NSV[i])$. Whenever $R[c]$ is being updated, we check if the last \textit{first c-candidate} is \textit{first c-maximum}, and report it if so. We conclude this section with the following.

\begin{algorithm}[t]
\caption{Second linear-time algorithm building a smallest suffixient set}\label{alg:fm-algo}
\SetKwInOut{Input}{input}
\SetKwInOut{Output}{output}
\Input{A text $T[1..n]$ over a finite alphabet $\Sigma$.}
\Output{A smallest suffixient set for $T$.}
{$\mathcal{S} \gets \emptyset$}\;
{$\BWT\ \gets \textrm{BWT}(\rev{T})$;
$\LCP\ \gets \textrm{LCP}(\rev{T})$;
$\SA\ \gets \textrm{SA}(\rev{T})$\label{line:fm.compute arrays}\;
$\NSV \gets \textrm{NSV}(\textrm{LCP}) $;
$\PSV \gets \textrm{PSV}(\textrm{LCP}) $\label{line:fm.compute arrays sv}\;
}
$R[1,\sigma] \gets ((sa\_pos \gets 0, text\_pos \gets 0, active \gets false, nsv \gets n + 1)\times \sigma)$\label{line:fm.init}\;
\For{$i = 2, \dots, n$}{
\If{$\BWT[i] \neq \BWT[i - 1]$}{
\For{$i'\in \{i - 1, i\}$}{
\If{$R[\BWT[i']].sa\_pos \leq \PSV[i]$ \label{line:fm.check fc}}{
\If{$R[\BWT[i']].nsv < i$ \label{line:fm.check fm}}{
$\SSS \gets \SSS \cup \{R[\BWT[i']].text\_pos\}$\;
}
$R[\BWT[i']] \gets (i, n - \SA[i'] + 1, true, \NSV[i])$\label{line:fm.update} \;
}
}
}
}
\ForEach{$c \in \Sigma$\label{line:fm.eval}}
{
\textbf{if $R[c].active = true$ then} $\SSS \gets \SSS \cup \{R[c].text\_pos\}$\;
}
\Return $\SSS$\;
\end{algorithm}

\medskip
\begin{lemma}
    Given a text $T[1,n]$ over alphabet of size $\sigma$, Algorithm~\ref{alg:fm-algo} computes a smallest suffixient set $\mathcal{S}$ in $O(n)$ time and $O(n)$ words of space. 
\end{lemma}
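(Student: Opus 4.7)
The plan is to prove the resource bounds first and then correctness via a loop invariant on $R$. For the resource analysis, Lines~\ref{line:fm.compute arrays} and \ref{line:fm.compute arrays sv} take $O(n)$ time and words using standard constructions ($\PSV$ and $\NSV$ on an integer array are computable in linear time, see~\cite{FMNtcs09}). The initialization at Line~\ref{line:fm.init} costs $O(\sigma)$; the main \texttt{for} loop runs $n-1$ iterations, each doing $O(1)$ work (a $\BWT$ comparison and, at run breaks, two constant-time tests involving $R$, $\PSV[i]$ and $\NSV[i]$, plus one update to $R$); the closing \texttt{foreach} loop and the array $R$ both cost $O(\sigma)$. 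Since $\sigma\le n$ by assumption, the total is $O(n)$ time and words.

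For correctness, Lemma~\ref{lem:SSS} combined with Lemma~\ref{lem:link LCP - supermaximal} and Proposition~\ref{prop:first-maximum} reduces the problem to showing that $\mathcal{S}$ equals the set of text positions $n-\SA[i']+1$ corresponding to \textit{first c-maximum} $c$-run breaks $i$ (with $i'\in\{i-1,i\}$ and $\BWT[i']=c$). I would prove this by induction on the scanning position $i$, with the following invariant: after iteration $i$, for every $c\in\Sigma$ with $R[c].sa\_pos>0$, the entry $R[c]$ stores the most recent \textit{first c-candidate} $c$-run break $p\le i$ along with its text position and $\NSV[p]$; moreover, every \textit{first c-maximum} $c$-run break strictly smaller than $p$ has already been inserted into $\mathcal{S}$.

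The main step is to show that the tests on Lines~\ref{line:fm.check fc}--\ref{line:fm.check fm} preserve the invariant. For Line~\ref{line:fm.check fc}, by the invariant, every $c$-run break $q$ between $R[c].sa\_pos$ and $i$ fails to be \textit{first c-candidate}, so its preceding $c$-run break lies inside $box(q)$; chaining this fact yields $\LCP[q]\le \LCP[R[c].sa\_pos]$ for every such $q$, from which one derives that $R[c].sa\_pos\le \PSV[i]$ if and only if $i$ is itself \textit{first c-candidate}. Given this, the observation that immediately follows Definition~\ref{def:first-candidate} (extracted from the proof of Proposition~\ref{prop:first-maximum}) shows that $\NSV[p]<i$ tested at Line~\ref{line:fm.check fm} is exactly the condition for the stored $p=R[c].sa\_pos$ to be \textit{first c-maximum}: once $r_p+1<i$ is witnessed at the next \textit{first c-candidate} $i$, the inequality automatically extends to every later $c$-run break $i_g$ with $p\le l_{i_g}-1$.

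The remaining subtlety, and what I expect to be the main obstacle, is the final \texttt{foreach} loop, which unconditionally reports the \emph{last} \textit{first c-candidate} for each $c$ that was ever updated. I would argue that this last candidate is always \textit{first c-maximum}, by reusing the same monotonicity $\LCP[q]\le \LCP[p]$ on the $c$-run breaks $q>p$ (which all fail to be \textit{first c-candidate} by assumption): whenever $p\le l_j-1=\PSV[j]$ for a later $c$-run break $j$, the inequality $\LCP[\PSV[j]]<\LCP[j]\le \LCP[p]$ gives $\NSV[p]\le \PSV[j]<j$, i.e.\ $r_p+1<j$, so $p$ satisfies the universal condition of \textit{first c-maximum} in Definition~\ref{def:first-candidate}. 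Combined with the loop invariant, this closes the correctness argument.
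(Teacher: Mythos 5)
Your proposal is correct and follows essentially the same route as the paper: an $O(n)$ resource count (with $\PSV/\NSV$ precomputed in linear time) plus a loop invariant stating that $R[c]$ always holds the most recent \emph{first c-candidate} and that all earlier \emph{first c-maximum} positions have been reported, with Proposition~\ref{prop:first-maximum} supplying the reduction to supermaximal extensions. If anything, you are slightly more explicit than the paper on two points it asserts without detail --- why the test $R[c].sa\_pos\le\PSV[i]$ against the \emph{last stored candidate} (rather than the immediately preceding $c$-run break required by Definition~\ref{def:first-candidate}) still characterizes \emph{first c-candidates}, and why the final \texttt{foreach} loop is safe --- and your chaining/monotonicity argument for both is sound.
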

\begin{proof}
We prove the correctness by showing the following invariant: After we scanned the \BWT\ up to position $i$, it holds for every $c\in\Sigma$ that (i) $R[c]$ has the information of the last $c$-run break found up to position $i$ which is \textit{first $c$-candidate} unless $c$ has not appeared yet, and (ii) all \textit{first $c$-maximum} positions $i'$ with $\NSV[i']<i$ have been reported. 

Since $R[c]$ is set to $(0, 0, false, n + 1)$ at the beginning of the Algorithm (line~\ref{line:fm.init}) for each $c\in\Sigma$, the first $c$-run break always satisfies the condition in line~\ref{line:fm.check fc}, so $R[c]$ is updated and set as \textit{first c-candidate} in line~\ref{line:fm.update}. This is correct because the first $c$-run break is always \textit{first c-candidate}. Note that the condition in line~\ref{line:fm.check fm} is not satisfied for the first $c$-run break, so it does not report anything. 
Therefore after processing the first $c$-run break, the invariant holds.

Now assume we have scanned the $\BWT$ until position $i - 1$ and suppose $i$ is a $c$-run break, i.e., $\BWT[i'] = c$ with $i' \in \{i - 1, i\}$. Also, assume that $j (< i)$ is the previous $c$-run break which is \textit{first c-candidate}, thus $\BWT[j'] = c$ with $j' \in \{j - 1, j\}$. Then $R[c].sa\_pos = j$, $R[c].text\_pos = n - \SA[j'] + 1$, $R[c].active = true$, and $R[c].nsv = \NSV[j]$. If $R[c].sa\_pos \leq \PSV[i]$ (line~\ref{line:fm.check fc}), then we know $i$ is \textit{first c-candidate} and we update the attributes of $R[c]$ according of $i$'s values: $R[c].sa\_pos \gets i$, $R[c].text\_pos \gets n - \SA[i'] + 1$, $R[c].active \gets true$, and $R[c].nsv \gets \NSV[i]$; hence Invariant (i) holds after processing position $i$. Before updating $R[c]$, we check if $R[c].nsv < i$. If it is satisfied, $j$ is \textit{first c-maximum} and we add $R[c].text\_pos$ to the suffixient set built up to now (line~\ref{line:fm.check fm}), with which Invariant (ii) holds. After processing all positions, by the invariants, for every $c\in \Sigma$ that occurs in $T$, $R[c]$ has the last $c$-run break which is \textit{first c-candidate}, which is $\textit{first c-maximum}$ by definition), so we add it to the outputted suffixient set in line \ref{line:fm.eval}.
Note that for every $c\in\Sigma$, $R[c].active$ is set to $true$ if and only if $c$ occurs in $T$ because it is set to $true$ at the first $c$-break and remains $true$, which ensures that we do not add $R[c].text\_pos$ to the suffixient set if $c$ does not appear in $T$.

As far as the running time is concerned, computing $\BWT$, $\LCP$, and $\SA$ on line \ref{line:fm.compute arrays} takes $O(n)$ time. In addition, computing $\PSV$ and $\NSV$ on line \ref{line:fm.compute arrays sv} also takes $O(n)$ time using a stack-based algorithm as mentioned earlier. 
Scanning the $\BWT$ takes $O(n)$ time, and it is performed $O(1)$ operations for each position of the $\BWT$. Then, the total running time is $O(n)$.
\end{proof}

In Figure \ref{fig:fm-example} we show an example of how Algorithm \ref{alg:fm-algo} works.

\begin{figure}[!htb]
\centering
\includegraphics[scale=0.8]{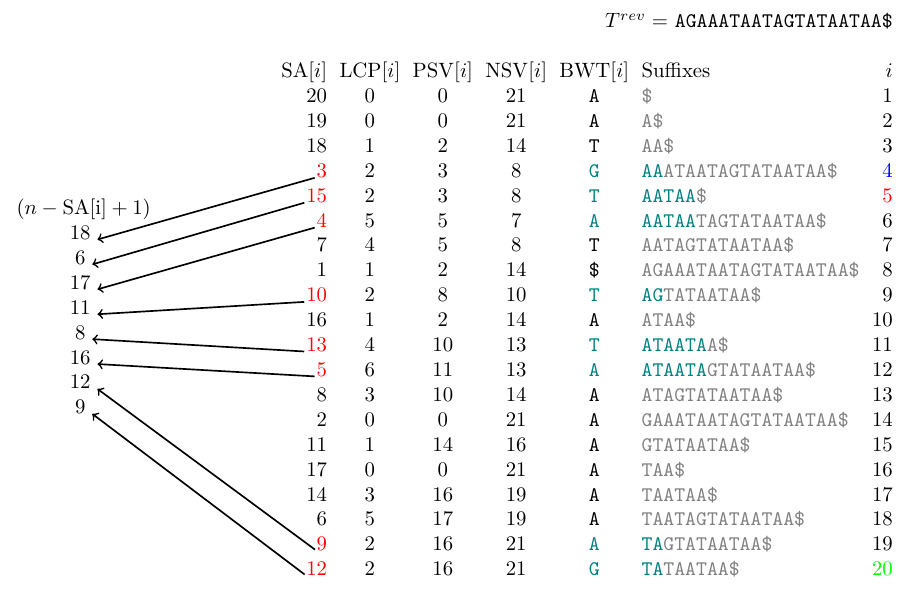}
\caption{
The figure shows the data used by Algorithm \ref{alg:fm-algo} to construct a smallest suffixient set $\SSS$ for a text $T[1,n]$ with $n = 20$. In addition to the data shown in Figure \ref{fig:run-example}, here we include the $\PSV$ and $\NSV$ arrays.
For brevity, we show how Algorithm \ref{alg:fm-algo} works only on the $\tt G$-run breaks. Consider the first $\tt G$-run break at position $i = 4$ (highlighted in blue on column $i$). At this stage we have $i' = i$ and $R[{\tt G}] = (0, 0, false, 21)$. Since $R[{\tt G}].sa\_pos = 0 < 3 = \PSV[4]$ and $i = 4 < 21 = R[{\tt G}].nsv$, then condition on line 8 is satisfied but condition on line 9 is not, so we do not add $R[{\tt G}].text\_pos$ to $\SSS$ in line 10 and we update $R[{\tt G}] = (i, n - \SA[i'] + 1, true, \NSV[i]) = (4, 18, true, 8)$ on line 12.
For the next ${\tt G}$-run break at position $i = 5$ (highlighted in red on column $i$), we have $\PSV[4] = 3 < 4 = R[{\tt G}].sa\_pos$, then condition on line 8 is not satisfied and nothing else is done in this iteration. The next ${\tt G}$-run break occurs at position $i = 20$ (highlighted in red on column $i$). We have $i' = i$. Since $R[{\tt G}].sa\_pos = 4 < 16 = \PSV[20]$ and $R[{\tt G}].nsv = 8 < 20 = i$, we add $R[{\tt G}].text\_pos = 18$ to $\SSS$ on line 10 and we update $R[{\tt G}] = (i, n - \SA[i'] + 1, true, \NSV[i]) = (20, 9, true, 21)$ on line 12. Finally, since $R[{\tt G}].active = true$, we add $R[{\tt G}].pos = 9$ on the final {\tt foreach} loop on line 18.
}\label{fig:fm-example}
\end{figure}

\section{Implementation details}\label{sec:implementation}

We discuss two optimizations to Algorithm \ref{alg:one-occ} that in practice are expected to lower the I/O complexity to near-optimal. These optimizations have been included in our implementation, which in the next section we compare with the $r$-index.
Our optimizations are motivated by a particular application: pattern matching on repetitive genomic collections (in particular, in this Section we will assume $\sigma\in O(1)$). As discussed in the introduction, a lot of interest has recently been dedicated to solving pattern matching queries of short DNA fragments ($m$ is of the order of thousands of characters) on collections composed of several genomes from the same species. Since genomes within those collections have $\ge 99.9\%$ similarity, repetitive genomic collections tend to be extremely repetitive (and compressible).

\paragraph{Optimization I}

We start from the version of Algorithm \ref{alg:one-occ} that implements
$\opsearch()$ with binary search on $\SuA$ and with random access on the oracle as described in Theorem \ref{thm: pattern matching I/O}.
The first optimization that we introduce is that we start the \texttt{while} loop of Algorithm \ref{alg:one-occ} from $i = \min\{m, k\}$, where $k = \lceil \log_\sigma \chi \rceil + O(1)$ is a parameter whose value is justified below (paragraph \emph{Optimization II}). 
Intuitively, this optimization is motivated by the fact that we expect all short (length $< k$) prefixes of $P$ to be right-maximal with large probability. This implies that those prefixes will trigger many useless binary searches in Algorithm \ref{alg:one-occ} \footnote{As a matter of fact, only one binary search is really useful: the one for the longest pattern prefix $P[1,i]$ suffixing $T[1,x]$ for some $x \in \SuA$. Since, however, we do not know $i$ in advance, our algorithm needs to run several binary searches on different pattern prefixes before finding $i$.}. 
If the first binary search with the above value of $i$ finds a match of length $i$, our search procedure  proceeds normally as described in Algorithm \ref{alg:one-occ}. Otherwise, we decrement $i$ by one unit and repeat in order to find the longest prefix of $P[1,k]$ suffixing $T[1,x]$ for some $x\in \SuA$. 

\paragraph{Optimization II}
Let $k = \lceil \log_\sigma \chi \rceil + O(1)$ be the same parameter used in the above optimization. For every $x\in \SuA$, we bit-pack $\rev{T[x-k+1,x]}$ in an integer of $k\lceil \log_2\sigma \rceil$ bits. Since we assume $\sigma \in O(1)$, each such integer is bounded by $O(\sigma^k) = O(\chi)$.
Let $L$ be the list, sorted in ascending order, containing such integers.
$L$ is encoded succinctly using an Elias-Fano predecessor data structure (see for example \cite{BelazzouguiN15}; in our code we employ the {\tt sd\_vector} implementation of {\tt sdsl-lite}\footnote{\url{https://github.com/simongog/sdsl-lite/blob/master/include/sdsl/sd_vector.hpp}}), using $\chi \log(\sigma^k/\chi) + O(\chi) = O(\chi)$ bits, which add only low-order terms on top of the space ($\chi\log n$ bits) of the suffixient array.

The implementation of $\opsearch(\beta)$ is modified as follows. If $LCS(\beta, T[1,x]) < k$ for all $x\in \SuA$, then $L$ suffices to answer $\opsearch(\beta)$ with just one predecessor query on the integer formed by packing the characters of $\rev \beta$.
Otherwise ($LCS(\beta, T[1,x]) \ge k$ for some $x\in \SuA$), a predecessor query on the last $k$ characters of $\beta$ yields the range on $\SuA$ corresponding to prefixes $T[1,x]$ ($x\in \SuA$) suffixed by $\beta[|\beta|-k+1,|\beta|]$. 
This range is then refined with the remaining characters of $\beta$ via standard binary search and 
random access on $T$. 
For the chosen value of $k$, the Elias-Fano data structure answers predecessor queries in $O(\log(\sigma^k/\chi))=O(1)$ time. 

Recall that, by Theorem \ref{thm: pattern matching I/O}, our index has worst-case $O( (1+m/B) \cdot d \log \chi)$ I/O complexity, where  $d\le m$ is the node depth of the locus of $P$ in the suffix tree of $T$ (that is, the number of suffix tree edges traversed while searching $P$ in the suffix tree of $T$).
In our application (repetitive genomic collections), Optimization I is expected to drastically reduce the  term $d$. Intuitively, this happens because Optimization I allows us to skip $k$ levels of the Suffix Tree of $T$ in constant time, therefore $d$ is replaced by the number $d'$ of edges that we traverse in the suffix tree starting from the locus of $P[1,k]$. A closer analysis shows that, rather than considering the full suffix tree, the same observation holds on the sparse suffix tree containing only the $\chi$ text suffixes starting in $j+1$ for $j\in \SuA$. To see this, observe that Algorithm \ref{alg:one-occ} always skips the longest common prefix of $T[j+1..]$ and $P[i+1..]$ (by incrementing $i$ and $j$ in Line \ref{line: incr i j}) for $j$ that always belongs to $\SuA$. But then, we expect that the subtree (of the above sparse suffix tree) rooted in the locus of $P[1,k]$ contains very few leaves (if the individual genomes were really uniform, we would expect this number of leaves to be constant with large probability). In turn, $d'$ is upper-bounded by such number of leaves. 

Optimization II, on the other hand, is expected to reduce the term $\log\chi$. The intuitive reason is that list $L$ defined above contains $\chi$ integers. Since (by Lemma \ref{lem:attractor}) $\SuA$ forms a small string attractor and individual genomes in repetitive collections have high entropy, we expect $L$ to have high entropy as well. Therefore, we expect that the seeding strategy implemented in Optimization II drastically reduces the binary search range from  $\chi$ to a very small length (in fact, this value would be constant on expectation if genomes were really uniform strings).

\section{Experimental results}\label{sec:experiments}

We implemented all construction algorithms of Section \ref{sec:smallest-SuffixientSet-comp} 
and the suffixient-based index of Theorem \ref{thm: pattern matching} (in two variants, read below)
in {\tt C++} and made the code publicly available at {\tt \url{https://github.com/regindex/suffixient-array}}. 
The one-pass Algorithm \ref{alg:one-pass} uses Prefix Free Parsing (PFP) as discussed in Section \ref{sec:one pass} and runs in compressed working space. 

We divide our experimental evaluation into four parts: first, we empirically study the new repetitiveness measure $\chi$ by comparing it with $\bar r$ under different alphabet orderings. Then, we assess the performance of our PFP-based one-pass algorithm on massive genomics datasets. 
The other construction algorithms (requiring much more working space) have been tested on smaller datasets; we provide this analysis in a separate paragraph. 
Finally, we compare the performance of our suffixient-based index with those of the Prefix Array and the $r$-index \cite{GNP20} on the task of locating one pattern occurrence, using three random access oracles (for both our index and the Prefix Array).

We ran our experiments 
using two workstations: the first is an Intel(R) Xeon(R) W-2245 CPU @ 3.90GHz with 8 cores and 128 gigabytes of RAM
running Ubuntu 18.04 LTS 64-bit. 
This workstation was used for the experiments on 
repetitiveness measures (Section \ref{sec:repetitiveness measure}), on
$\SuA$ construction on massive datasets (Section \ref{sec:massive_data}), and on indexing (Section \ref{sec:indexes}).
The second workstation is an
Intel(R) Pentium(R) IV, CPU @ 2.4 GHz with 8 cores, 10 MB cache, 256 gigabytes of RAM, 860 gigabytes of disk running Devuan GNU/LINUX 2.1. 
This workstation was used for the experiments on building Suffixient Arrays (Section \ref{sec:other construction}), which required more working memory.
We recorded the runtime and memory usage of our software by using the wall clock time and maximum resident set size from {\tt /usr/bin/time} and two {\tt C++} libraries: {\tt chrono} and {\tt malloc\_count}.

\subsection{The repetitiveness measure \texorpdfstring{$\chi$}{chi}}\label{sec:repetitiveness measure}

In this experiment, we 
use nine biological datasets divided into two corpora. The first corpus contains six datasets containing DNA sequences from the Pizza\&Chilli collection\footnote{ \url{https://pizzachili.dcc.uchile.cl}}, while the second contains three datasets made by concatenating genomic sequences of different types. The three datasets of the second corpus are formed by 36,000 {\em SARS-CoV-2} assembled viral genomes, 220 {\em Salmonella enterica} assembled bacterial genomes, and 19 copies of the human {\em Chromosome 19}, respectively. We downloaded data from different sources: the viral genomes from the COVID-19 Data Portal\footnote{\url{https://www.covid19dataportal.org}}, the bacterial genomes from NCBI\footnote{\url{https://www.ncbi.nlm.nih.gov/assembly/?term=Salmonella+enterica}}, while the Chromosome 19 sequences belong to the 1,000 Genome project\footnote{\url{https://github.com/koeppl/phoni}}.
All datasets are filtered to keep only DNA nucleotide characters ${\tt A,C,G,T}$, in addition to the string terminator $\$$. As a consequence, the alphabet size in our experiments was $\sigma = 5$.

We compare the behavior of our new measure $\chi$ with the well-known measure $\bar r$: the number of equal-letter runs in the BWT of the reversed text.
Since, unlike $\chi$, measure $\bar r$ depends on a specific alphabet ordering, we explicitly compute $\bar r$ for all $(\sigma-1)! = 24$ possible orderings (character $\$$ is always required to be the lexicographically-smallest one). Table \ref{tab:exp1} shows our results. Among all possible alphabet orderings, we only show the default one ${\tt A<C<G<T}$, and the ones yielding the largest and smallest $\bar r$.

\begin{table}[!htb]
    \centering
    \begin{adjustbox}{width=\textwidth}
\begin{tabular}{|c|l|r|r|r|r|r|}\hline
corpus & dataset & dataset length & $\chi$ & default $\bar r$ & min.\ $\bar r$ & max.\ $\bar r$ \\\hline
\multirow{6}{*}{Pizza\&Chili} & Cere & 428,111,842 & 9,945,553 & 11,556,075 & 11,512,279 & 11,578,575 \\
 & E.\ Coli & 112,682,447 & 13,126,726 & 15,041,926 & 14,974,687 & 15,043,686\\
 & Influenza & 154,795,431 & 2,234,085 & 3,015,173 & 2,979,820 & 3,023,774\\
& Para & 412,279,605 & 13,399,378 & 15,581,823 & 15,475,176 & 15,635,533\\
 & dna & 403,920,884 & 215,201,641 & 243,480,086 & 243,179,189 & 243,574,132 \\
 & dna.001.1 & 104,857,499  & 1,414,711 & 1,717,155 & 1,715,622 & 1,718,418 \\
\hline
   & Chrom.\ 19 & 1,060,302,648 & 28,265,608 & 32,501,979 & 32,413,939 & 32,516,882 \\
 biological & Salmonella & 1,043,921,520 & 19,070,836 & 22,407,573 & 22,254,453 & 22,416,480\\
  & SarsCov2 & 1,052,893,440 & 829,777 & 1,091,476 & 1,088,925 & 1,093,104 \\

 \hline
\end{tabular}
\end{adjustbox}
\vspace{0.5mm}
\caption{Summary of the results on the nine datasets. From left to right, we report the corpus name, the dataset name, the dataset length (number of characters), the size of the smallest suffixient set ($\chi$), and the number of runs of the BWT of the reversed text ($\bar r$) for three different alphabet ordering: the default lexicographic order, and the two orderings leading to the minimum and maximum $\bar r$. } \label{tab:exp1}
\end{table}

We observe that, in practice, $\chi$ is very close to (and always smaller than) $\bar r$ under any alphabet ordering; 
notice that theory only predicts $\chi \leq 2\bar r$ (Lemma~\ref{lem:upper_bound}).
In this experiment, the minimum of $\bar r$ is always between 1.13 (dna) and 1.33 (Influenza) times larger than $\chi$. 
We also observe that, while the alphabet ordering does not have a big influence on $\bar r$, the default alphabet ordering never yields the smallest $\bar r$.
This suggests experimentally that $\chi$ is a better repetitiveness measure than $\bar r$, since it is not affected by the alphabet order (while always being close to $\bar r$). 

\subsection{Building Suffixient Arrays of massive datasets}
\label{sec:massive_data}
In this experiment, we tested our PFP-based one-pass algorithm on two large genomic datasets: the first dataset is formed by  1,000 human Chromosome 19 copies (59GB), while the second contains 2,005,773 Sars-CoV-2 viral sequences (60GB).

As discussed in Section \ref{sec:one pass}, 
we read the text $T$ backwards from the disk keeping in RAM only the PFP data structures. We employed the PFP implementation provided in the {\tt pscan.cpp} file contained in {\tt Big-BWT} ({\tt https://github.com/alshai/Big-BWT}) and ran it using 16 threads.

In both datasets, our algorithm performed very well with a maximum resident set size of 20.6 GB  and 29.5 GB and a wall clock time of 55:55 (min:sec) and 2:16:48 (h:min:sec), on Chromosome 19 and Sars-CoV-2, respectively. 

\subsection{Other Suffixient Array construction algorithms}\label{sec:other construction}

In this experiment, we measured the wall clock time and internal memory peak, as reported by the {\tt usr/bin/time} utility, needed by the 4 suffixient array construction algorithms presented in Section \ref{sec:smallest-SuffixientSet-comp}: the one-pass algorithm (\texttt{one-pass}) using the non-compressed $\SA$, $\LCP$, $\BWT$ arrays, the PFP-based one-pass algorithm (\texttt{pfp}), and the first (\texttt{linear}) and second (\texttt{fm}) linear time algorithms. As input for the algorithms, we used several prefixes of the Sars-Cov-2 dataset described in Section \ref{sec:massive_data}. The $i$-th prefix of such dataset was created by concatenating the first $m \in \{2^i : i \in[14, 19]\}$ genomic sequences from the original dataset.

In Figure \ref{fig:results-all-algs} we provide a summary of the experimental results.
We note that \texttt{pfp} is always faster and uses less space than the other three algorithms. In particular, on average, this algorithm was $3.74$ times faster than {\texttt{fm}}, $5.59$ times faster than \texttt{linear}, and $4.66$ times faster than \texttt{one-pass}. At the same time, \texttt{pfp} used $57.96$ times less space than {\texttt{fm}} and $20.56$ times less space than \texttt{linear} and \texttt{one-pass}. This is due to the PFP preprocessing which exploits natural repetitiveness of genomic data to reduce time/space execution requirements. The {\texttt{fm}} algorithm was always faster than the other non-PFP-based algorithms. More in detail, {\texttt{fm}}  was on average $1.41$ times faster than \texttt{linear} and it was $1.18$ times faster than \texttt{one-pass}. 
On the other hand, \texttt{fm} used much more working space than \texttt{linear} ($3.03$ times), to the point that it could not finish the computation for the largest prefix containing $2^{19}$ sequences due to memory requirements exceeding the available 256 GB of RAM. Interestingly, the (non-PFP based) \texttt{one-pass} algorithm was, on average, $1.2$ times  faster than \texttt{linear} while using the same amount of space. This could be explained given that, for constant alphabets (as the one of the Sars-Cov-2 dataset) the $O(n + \overline{r}\log \sigma)$ time complexity of the one-pass algorithm becomes $O(n)$. In addition, \texttt{one-pass} runs in a more chache-friendly way than \texttt{linear}  since it only performs one linear scan of the BWT, SA and LCP vectors and $O(\bar r)$ linear scans of the vector $R$ containing the $\sigma$ one-character right-maximal extension candidates.

\begin{figure}
\begin{minipage}{0.49\textwidth}
\centering
\includegraphics[width=\textwidth]{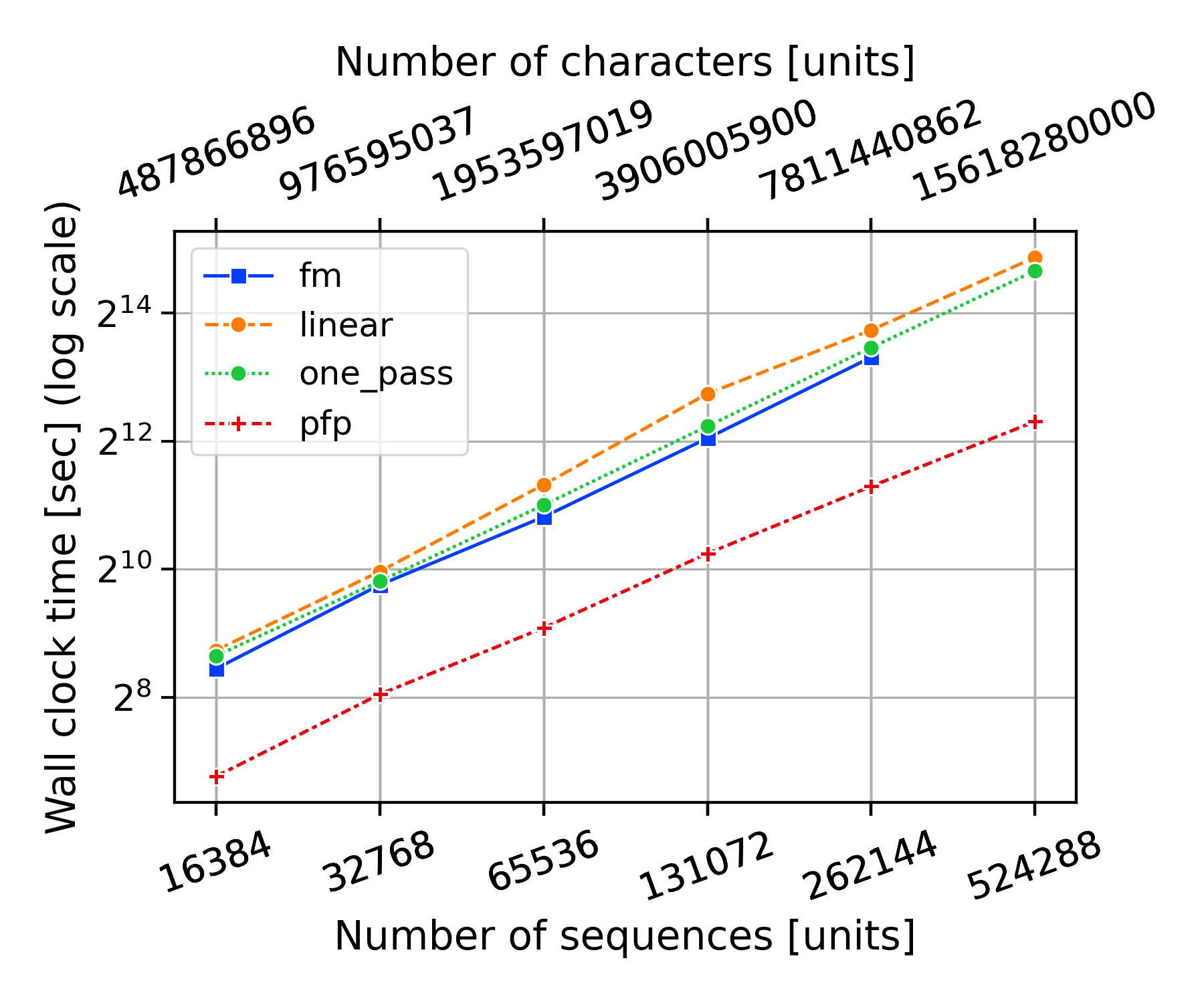} 
\end{minipage}
\hfill
\begin{minipage}{0.49\textwidth}
\centering
\includegraphics[width=\textwidth]{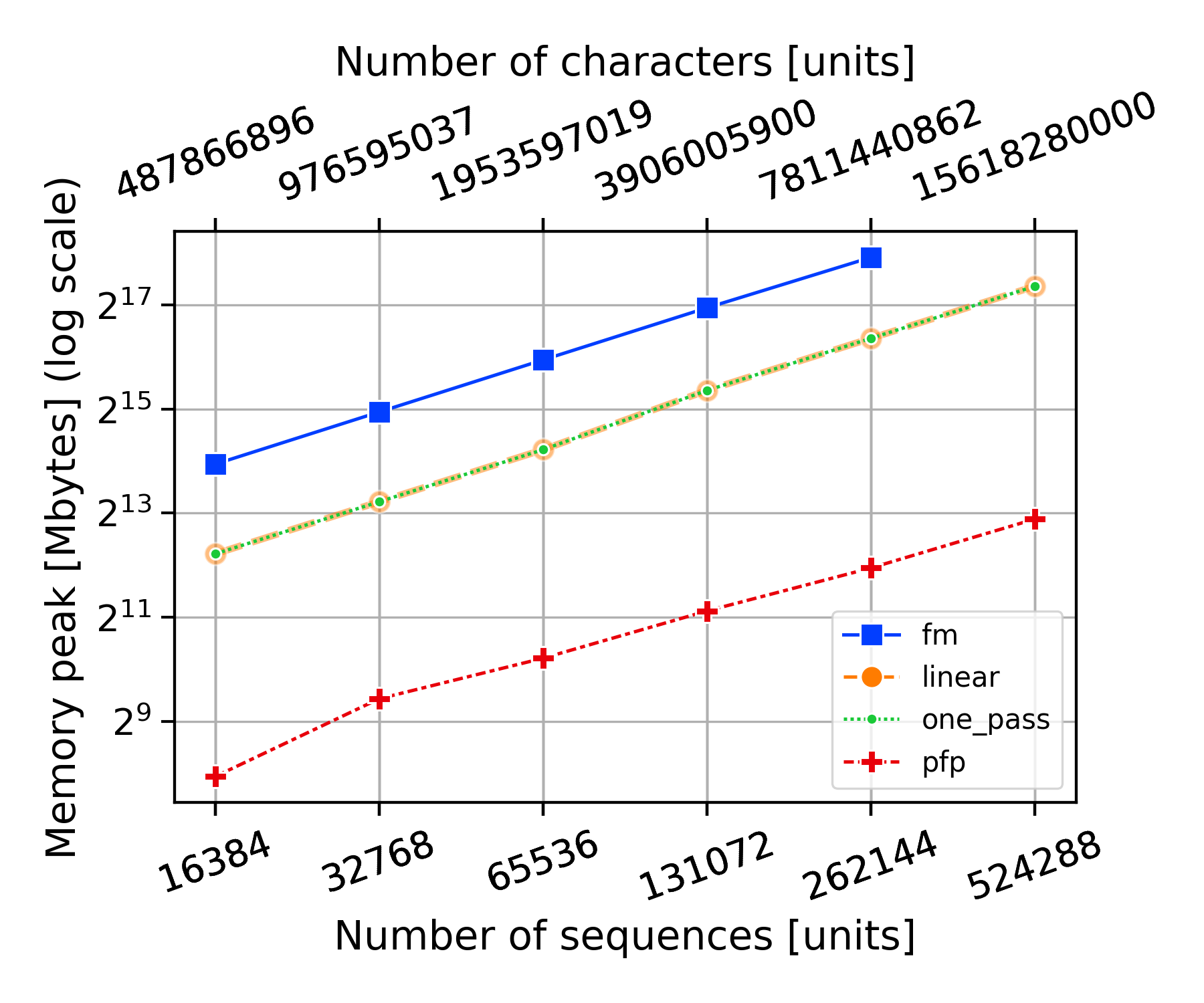} 
\end{minipage}
\caption{Wall clock time (left) and memory peak (right) for computing a smallest suffixient set for six SarsCoV2 datasets using the four algorithms presented in Section \ref{sec:smallest-SuffixientSet-comp}. We generate the datasets by taking prefixes of increasing size containing $2^i : i \in [14,19]$ genomic sequences.}
\label{fig:results-all-algs}
\end{figure}

\subsection{Pattern matching queries}\label{sec:indexes}

In this last experiment, we compare our compressed index based on the Suffixient Array described in Section 4.1 with the Prefix Array and with the $r$-index of Gagie et al.\ \cite{GNP20} on the task of locating one pattern occurrences.

\subsubsection{Indexes and random access oracles}

We tested four indexes, whose details are given below: Suffixient-Array based indexes ($\SuA$ and opt-$\SuA$), Prefix Array ($\PA$), and the toehold lemma of the $r$-index (\texttt{toehold}), i.e. the subset of the $r$-index sufficient to locate one pattern occurrence. In the plots, we also show the size of the full $r$-index for a comparison.

\paragraph*{Random access oracles}

The Suffixient-Array based indexes ($\SuA$ and opt-$\SuA$) and the Prefix Array ($\PA$) have been tested in combination with three random access oracles: the \texttt{bitpacked} text (a plain text representation using 2 bits per character), the subset \texttt{lz77} of the Lempel-Ziv 77 index of Kreft and Navarro\footnote{\url{https://github.com/migumar2/uiHRDC/tree/master/uiHRDC/self-indexes/LZ}} sufficient to perform random access (i.e. from the original index we removed all data structures needed for locating patterns), and an ad-hoc optimized implementation of Relative Lempel-Ziv \cite{kuruppu2010relative} (\texttt{rlz}). The latter random access data structure: (i) packs the reference characters using 2 bits per character, (ii) automatically detects the best prefix of the input string to be used as the reference optimizing the overall compression ratio (trying lengths being a power of $(1+\epsilon)$, for a small $\epsilon>0$), and (iii) optimizes the extraction of long substrings by accessing the predecessor structure (storing phrase borders) only when needed, i.e. a number of times equal to the number of phrases overlapping with the string to be extracted. 

\paragraph{Suffixient Array-based indexes ($\SuA$ and \texttt{opt-}$\SuA$)}

We implemented two variants of our suffixient-based index and combined them with the three random access oracles described above. 
The two variants implement the Optimizations described in Section \ref{sec:implementation}. The first variant, deemed $\SuA$ in the plots below, implements only Optimization I. The second variant, deemed opt-$\SuA$, implements both Optimizations I and II. Note that we implemented these optimizations assuming DNA alphabet, i.e., $\sigma = 4$, and we always set $k = 14$.

\paragraph{Prefix Array index (PA)}

This index  implements classic binary search on the Prefix Array, combining it with the three random access oracles described above.

\paragraph{Toehold Lemma of the $r$-index (\texttt{toehold})}

We used the subset of the $r$-index \cite{GNP20} (we employ the original {C++} implementation\footnote{\url{https://github.com/nicolaprezza/r-index}}) 
sufficient to locate one pattern occurrence (the so-called \emph{Toehold lemma}); from the original implementation, we removed all data structures required to locate all other pattern occurrences. 
This made the index substantially smaller (compared to the full $r$-index). To ease reproducibility of our experiments, we included this subset of the $r$-index in the repository containing our Suffixient Array index.

\subsubsection{Datasets and patterns extraction}
In this experiment, we used the three genomic datasets from Table \ref{tab:exp1}: Chromosome 19, Salmonella, and SARS-CoV-2. Since our \texttt{opt-}$\SuA$ implementation is designed for DNA alphabets of four characters, we preprocess the datasets to remove all non-DNA characters. This ensures that all four competing methods use the same three input texts.
For each dataset, we generated three sets of 100,000 patterns of lengths 10, 100, and 1000 each. Specifically, given a text $T[1,n]$ and a pattern length $m$, we selected patterns by drawing uniform random positions $i \in [1,n-m+1]$.

\subsubsection{Results}

The plots below show a comparison between the query times of the four indexes and the RAM throughput of our workstation when extracting substrings of length $m = $ 10, 100, and 1000 from uniformly-chosen positions in a uniform text of 1 billion characters. On our workstation, we obtained the following RAM throughput benchmarks: 6.23341 ns/character for $m=10$, 2.31762 ns/character for $m=100$, and 1.41736 ns/character for $m=1000$. The RAM throughput for reading the entire text of 1 billion characters was of 1.17591 ns/character.
Note that the RAM throughput indicates a hard lower bound for processing a pattern of a certain length since it provides the maximum rate at which a block of length $m$ can be retrieved from our system's internal memory.
In turn, this is a hard lower bound for any deterministic pattern matching algorithm guaranteeing a correct answer.

We found out that the \texttt{lz77} oracle was always orders of magnitude slower than the \texttt{rlz} oracle, while always using essentially the same space. Symmetrically, the \texttt{rlz} oracle was always as fast as the \texttt{bitpacked} text, while using orders of magnitude less space. Since \texttt{rlz} always dominated the other two oracles, in Figure \ref{fig:results} we only show results using the \texttt{rlz} oracle. 

As the plots show,
the optimization opt-$\SuA$ only slightly increases the space usage of $\SuA$ while dramatically speeding up query times (even by one order of magnitude in some cases). 
Our results show that opt-$\SuA$ always dominates
the Prefix Array by a wide margin: opt-$\SuA$ was always faster than the Prefix Array while using orders of magnitude less space. 
As expected from our analysis of opt-$\SuA$ in the I/O model (Theorem \ref{thm: pattern matching I/O}), opt-$\SuA$ dominates
the toehold lemma of the $r$-index by a wide margin on the moderately-repetitive Chr 19 dataset. On this dataset, opt-$\SuA$ was always smaller than the toehold lemma and about \emph{half the size} of the full $r$-index. At the same time, opt-$\SuA$ was always \emph{from one to two orders of magnitude faster} than the toehold lemma. 
On the more repetitive Salmonella and SarsCov2 datasets, opt-$\SuA$ was only slightly larger than the toehold lemma and always smaller than the full $r$-index while at the same time being always from one to two orders of magnitude faster.

The plots show that opt-$\SuA$ was always at most 1 order of magnitude slower than the RAM throughput. This is the case of Chrom.\ 19 and patterns of length 100. On the other hand, for patterns of length 10 and 1000 the results were much more favorable. In the best case, for SarsCoV2 and length 1000, opt-$\SuA$ (about 3.5 ns per character) was only 2.5 times slower than the RAM throughput (about 1.4 ns per character). This aligns with our analysis in the I/O model, as for long patterns and repetitive collections made up of variations of a nearly-uniform string, fewer binary search steps are expected. We remark that our implementation is not yet heavily optimized and we expect further improvements with more optimization work.

\begin{figure}[htp]
    \centering

    \begin{minipage}{0.33\textwidth}
        \centering
        \includegraphics[width=\textwidth]{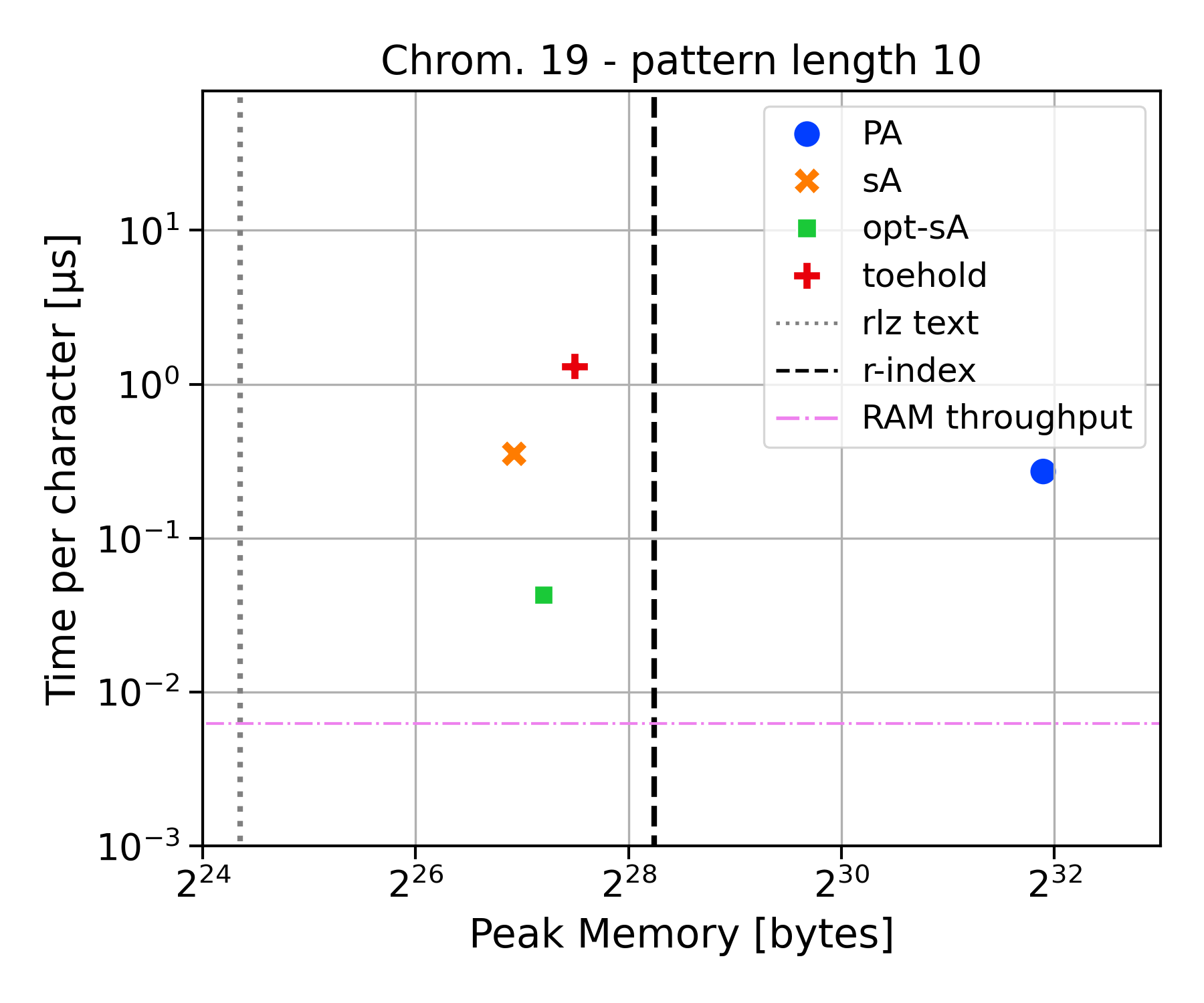} 
    \end{minipage}
    \hspace{-3mm}
    \begin{minipage}{0.33\textwidth}
        \centering
        \includegraphics[width=\textwidth]{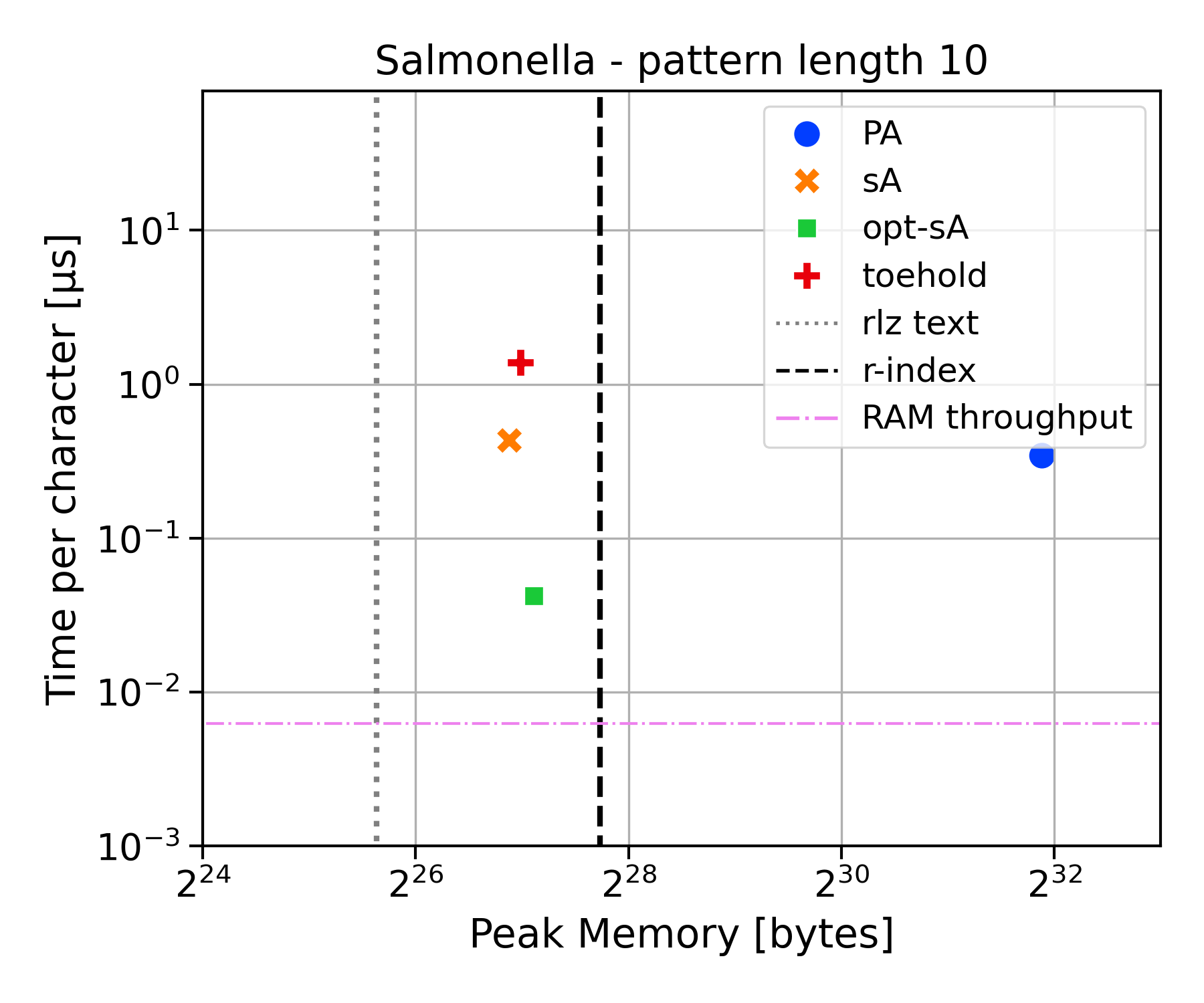} 
    \end{minipage}
    \hspace{-3mm}
    \begin{minipage}{0.33\textwidth}
        \centering
        \includegraphics[width=\textwidth]{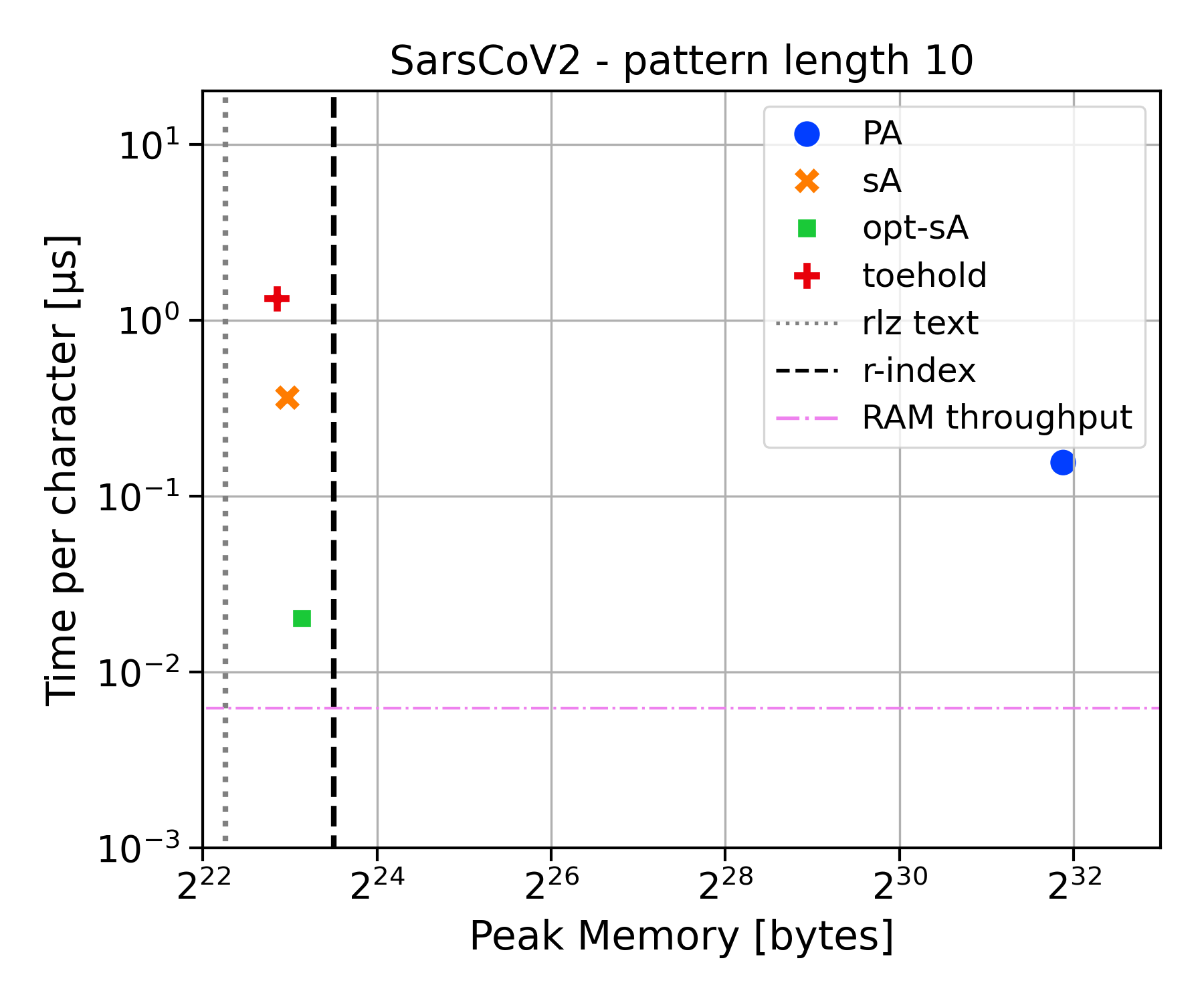} 
    \end{minipage}

    \vspace{0.25cm}  

    \begin{minipage}{0.33\textwidth}
        \centering
        \includegraphics[width=\textwidth]{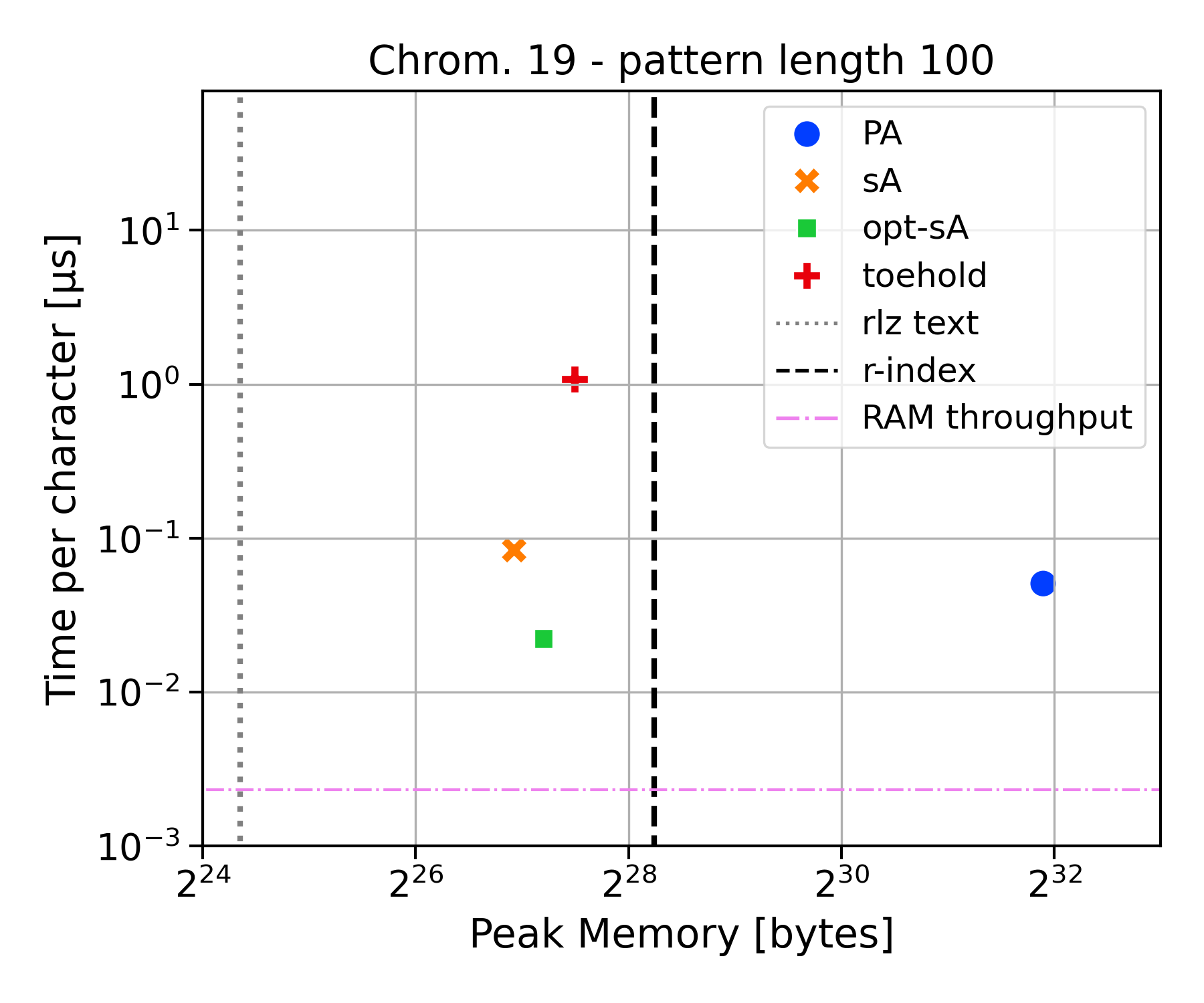} 
    \end{minipage}
    \hspace{-3mm}
    \begin{minipage}{0.33\textwidth}
        \centering
        \includegraphics[width=\textwidth]{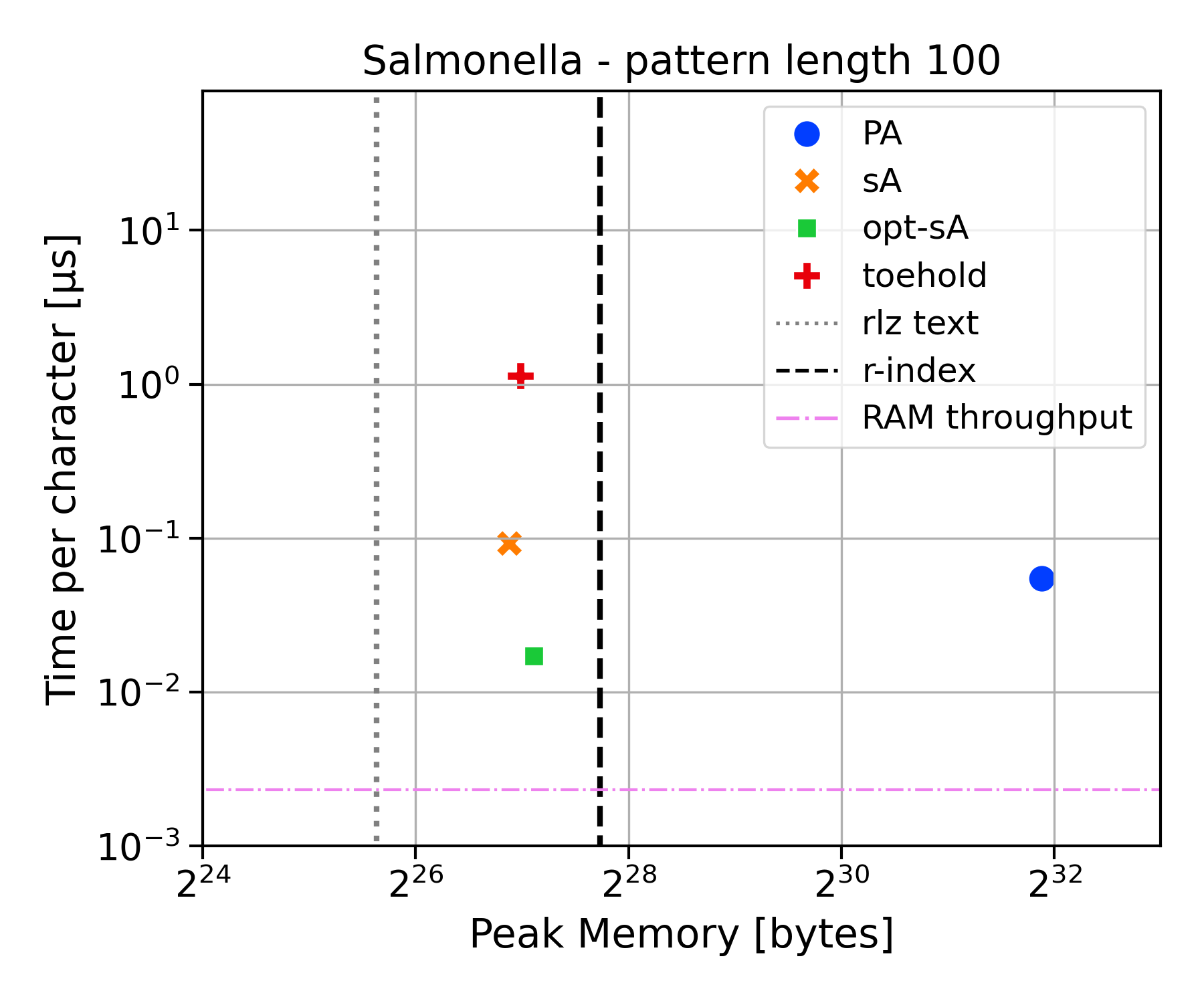} 
    \end{minipage}
    \hspace{-3mm}
    \begin{minipage}{0.33\textwidth}
        \centering
        \includegraphics[width=\textwidth]{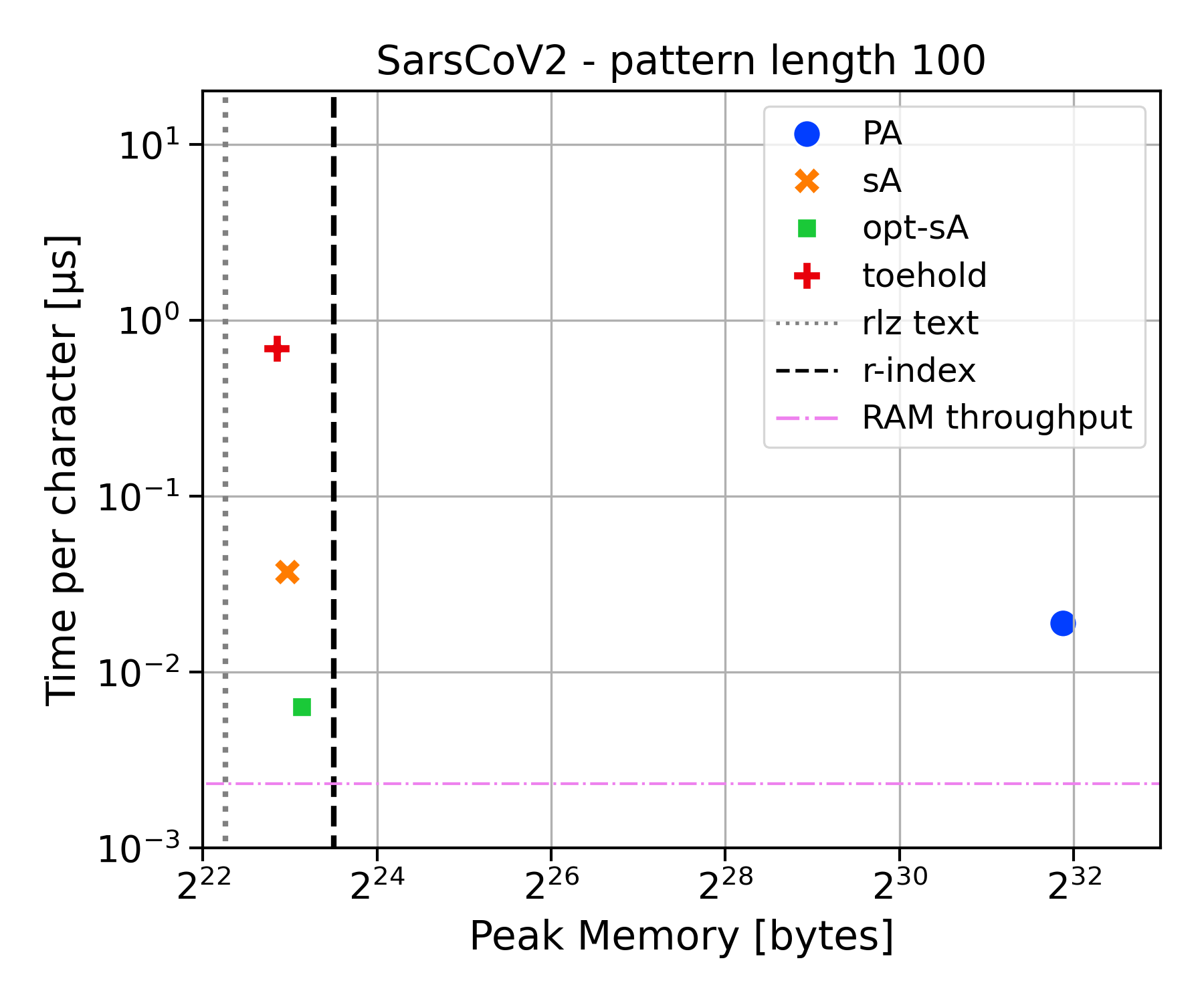} 
    \end{minipage}

    \vspace{0.25cm}  

    \begin{minipage}{0.33\textwidth}
        \centering
        \includegraphics[width=\textwidth]{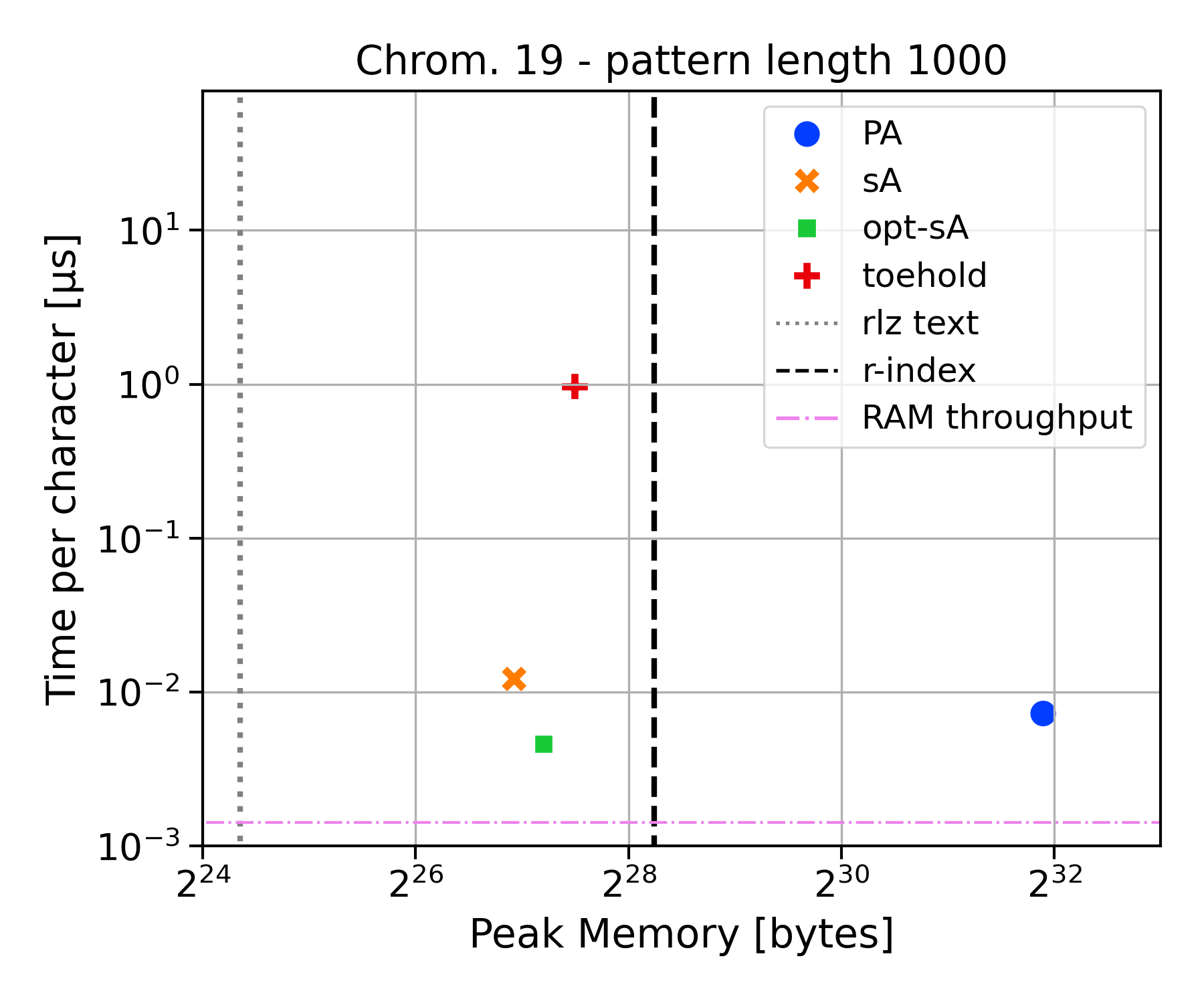} 
    \end{minipage}
    \hspace{-3mm}
    \begin{minipage}{0.33\textwidth}
        \centering
        \includegraphics[width=\textwidth]{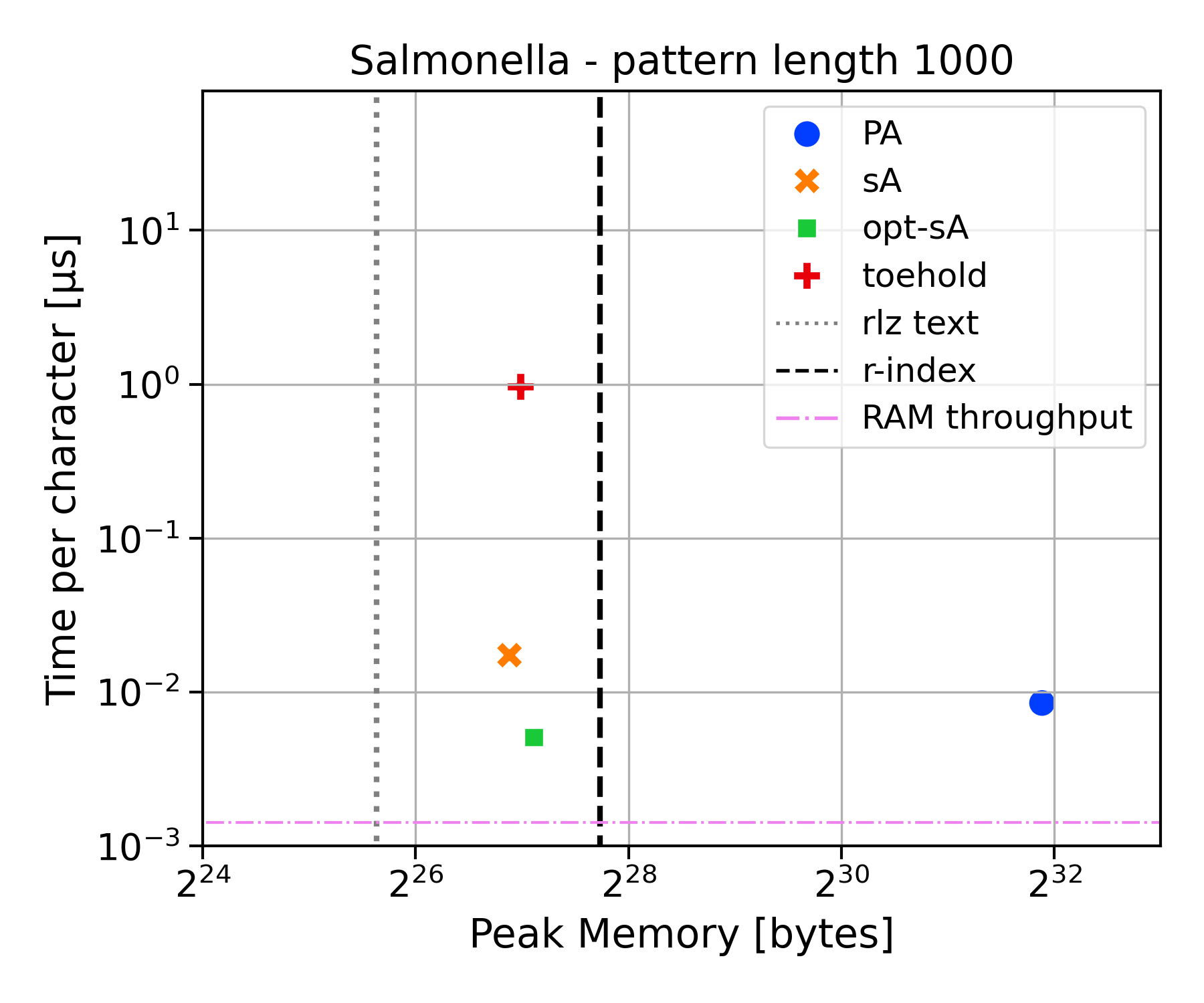} 
    \end{minipage}
    \hspace{-3mm}
    \begin{minipage}{0.33\textwidth}
        \centering
        \includegraphics[width=\textwidth]{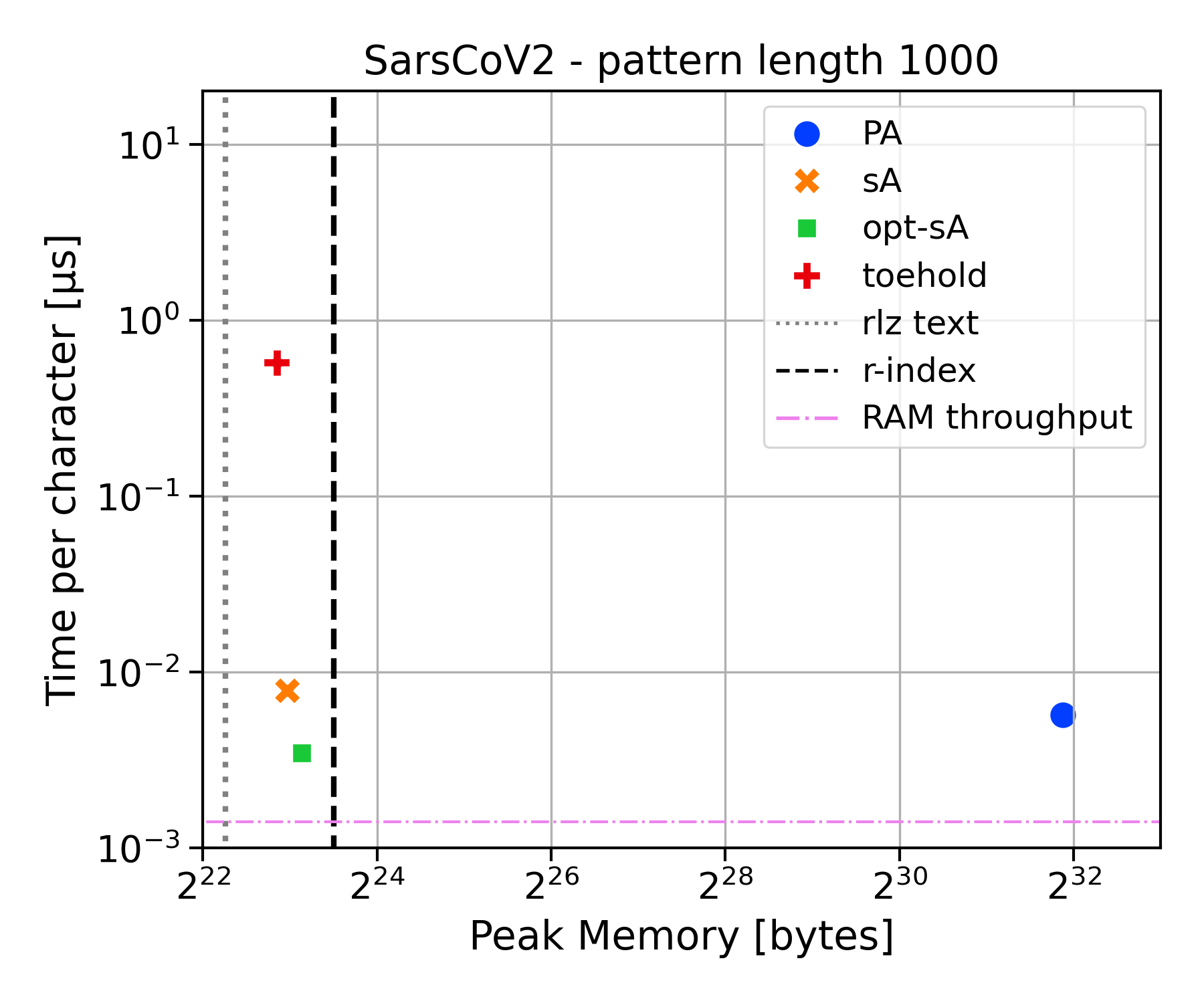} 
    \end{minipage}

    \caption{Summary of results for the three biological datasets. We compare the three implementations of our index, using the RLZ-based random access oracle, against the toehold lemma of the $r$-index. The y-axis represents the time per character required by each index to locate one occurrence across 100,000 patterns of different lengths. The x-axis shows the peak memory recorded during each execution. Both axes are in log scale. Additionally, we include three dashed lines indicating the size of the random-access text oracle (\texttt{rlz}), the size of the original $r$-index, and the RAM throughput of our workstation when extracting substrings of length 10, 100, 1000 from uniform positions in a uniform text of 1 billion characters. 
    }
    \label{fig:results}
\end{figure}

\section{Future Work}

We list some open problems on which we are currently working. 

\paragraph*{Queries in $O(\chi)$ space}

We have already mentioned the problem of determining whether $\chi$ is a reachable measure, i.e. $O(\chi)$ words are sufficient to store the text and to perform efficient random access. 
A possible way to attack this problem is to observe that our random access data structure based on string attractors (using $O(\chi \log(n/\chi))$ space) does not exploit any structural property of suffixient sets other than them being a string attractor. Suffixient sets are, however, very particular string attractors. Can we use other combinatorial properties of suffixient sets to support random access in $o(\chi \log(n/\chi))$ space?

Another interesting problem is to support counting and locating all pattern occurrences (rather than just one) in space $O(\chi)$ on top of the random access oracle.
Eventually, we would also like to compute the BWT range of the pattern with our mechanism, therefore completely replacing the Burrows-Wheeler transform. Locating could be achieved, for example, by encoding the $\phi$ function of the $r$-index \cite{GNP20} in $O(\chi)$ space. As we briefly mentioned in the paper, this is possible in space $O(\bar r)$ on top of the random access oracle (while retaining I/O efficiency), but $\bar r$ can be asymptotically larger than $\chi$.

\paragraph*{The repetitiveness measure $\chi$}

The relation $\chi \le \bar r$ was always true in our experiments, even though our theory only predicts $\chi \leq 2\bar r$ (Lemma~\ref{lem:upper_bound}). Can we prove better upper-bounds for $\chi$ as a function of $\bar r$ or, more in general, as a function of other repetitiveness measures \cite{Nav22a}? 

\paragraph*{Necessary samples}

While Suffixient Arrays are the optimal solution to the problem of minimizing the cardinality of a suffixient set (Definition \ref{def:suffixient}), they are not necessarily the smallest sampling of the Prefix Array guaranteeing the correctness of Algorithm \ref{alg:one-occ}. In a sense, Suffixient Arrays guarantee a \emph{sufficient} but not \emph{necessary} condition for Algorithm \ref{alg:one-occ} to work correctly. While we have determined what this necessary condition is, efficient algorithms to optimize it are still under investigation.

\paragraph*{Optimizations}

Our implementation is not yet heavily optimized: by using more cache-efficient data structures (in particular, Elias-Fano dictionaries), we believe we can get even closer to the RAM throughput of our machine. We will explore this direction in a future practical implementation of our index in the context of a usable DNA aligner for repetitive collections. 

\paragraph*{Approximate pattern matching}

We are interested in whether suffixient sets can be adapted for use with the technique of finding all substrings of $T$ within edit distance $k$ of $P$ by cutting $P$ into $k + 1$ pieces and, for each piece, trying to match that piece exactly and then match the other pieces allowing for a total of $k$ edits to them~\cite{Lam2009}. More generally, this technique is referred to as search schemes~\cite{Kucherov2015}. If we can overcome the challenges then we believe suffixient sets are well-suited to this task since, if $P$ is reasonably long and $k$ is reasonably small, after we have exactly matched one piece we will be fairly deep in the suffix tree of $T$, where edge labels tend to be long and it is advantageous to descend edges without checking all the characters in their labels (which we do via $\LCP$ queries between suffixes of $P$ and $T$).

The problem we face is determining all the ways to approximately match the other pieces.  This can be done with backtracking in the suffix tree, but with our current implementation we do not see how to perform this backtracking: when we compute the $\LCP$ of a suffix of $P$ and any suffix of $T$, we may descend several edges at once without realizing it and, for backtracking, later we should go back and partly explore the subtrees hanging off the path we descended.  We are currently looking for ways to detect when we descend past a node in the suffix tree and find its string depth, so we can return to it later and visit its other children.

\section*{Acknowledgements}

We thank Ragnar Groot Koerkamp and Giulio Ermanno Pibiri for fruitful discussions on the topic.

\section*{Funding}
\textit{Davide Cenzato, Sung-Hwan Kim, Nicola Prezza}: Funded by the European Union (ERC, REGINDEX, 101039208). 
Views and opinions expressed are however those of the author(s) only and do not necessarily reflect those of the European Union or the European Research Council Executive Agency. Neither the European Union nor the granting authority can be held responsible for them.
\textit{Lore Depuydt}: Funded by  a PhD Fellowship FR (1117322N), Research Foundation – Flanders (FWO).
\textit{Travis Gagie}: Funded by NSERC Discovery Grant RGPIN-07185-2020. 
\textit{Govanni Manzini}: Funded by the NextGenerationEU programme PNRR ECS00000017 Tuscany Health Ecosystem (CUP: I53C22000780001), by the project PAN-HUB, Italian Ministry of Health (CUP: I53C22001300001), and by the 
the NextGenerationEU programme 
FutureHPC and BigData of the Centro Nazionale Ricerca in High-Performance Computing, Big Data and Quantum Computing.
\textit{Francisco Olivares}: Funded by scholarship ANID-Subdirección de Capital Humano/Doctorado Nacional/2021-21210579, ANID, Chile and by Basal Funds FB0001, ANID, Chile.


\bibliographystyle{unsrt}
\bibliography{main}

\end{document}